\newcommand{\blind}{1}
\newtheorem{definition}{Definition}
\newtheorem{condition}{Condition}
\newtheorem{lemma}{Lemma}
\newtheorem{proposition}{Proposition}
\newtheorem{theorem}{Theorem}
\newtheorem{corollary}{Corollary}
\newcommand{\bx}{\mathbf{x}}
\newcommand{\bu}{\mathbf{u}}
\newcommand{\bX}{\mathbf{X}}
\newcommand{\bY}{\mathbf{Y}}
\newcommand{\bC}{\mathbf{C}}
\newcommand{\bE}{\mathbf{E}}
\newcommand{\bA}{\mathbf{A}}
\newcommand{\bG}{\mathbf{G}}
\newcommand{\bH}{\mathbf{H}}
\newcommand{\bB}{\mathbf{B}}
\newcommand{\bI}{\mathbf{I}}
\newcommand{\bQ}{\mathbf{Q}}
\newcommand{\bW}{\mathbf{W}}
\newcommand{\br}{\mathbf{r}}
\newcommand{\be}{\mathbf{e}}
\newcommand{\bZ}{\mathbf{Z}}
\newcommand{\bR}{\mathbf{R}}
\newcommand{\bS}{\mathbf{S}}
\newcommand{\bV}{\mathbf{V}}
\newcommand{\tr}{\tilde{r}}
\newcommand{\bw}{\mathbf{w}}
\newcommand{\bbE}{\mathbb{E}}
\newcommand{\bbeta}{\boldsymbol{\beta}}
\newcommand{\bomega}{\boldsymbol{\omega}}
\newcommand{\bOmega}{\boldsymbol{\Omega}}
\newcommand{\bSigma}{\boldsymbol{\Sigma}}
\newcommand{\bDelta}{\boldsymbol{\Delta}}
\newcommand{\bPhi}{\boldsymbol{\Phi}}
\newcommand{\bxi}{\boldsymbol{\xi}}
\newcommand{\bvarepsilon}{\boldsymbol{\varepsilon}}
\newcommand{\calY}{\mathcal{Y}}
\newcommand{\calV}{\mathcal{V}}
\newcommand{\calE}{\mathcal{E}}
\newcommand{\calA}{\mathcal{A}}
\newcommand{\calM}{\mathcal{M}}
\newcommand{\calC}{\mathcal{C}}
\newcommand{\calR}{\mathcal{R}}
\newcommand{\hbu}{\hat{\mathbf{u}}}
\begin{document}

\def\spacingset#1{\renewcommand{\baselinestretch}%
{#1}\small\normalsize} \spacingset{1}


\if1\blind
{
  \title{\bf MultiFun-DAG: Multivariate Functional Directed Acyclic Graph}
  \author{
  Tian Lan \\
    Department of Industrial Engineering, Tsinghua University\\
    Ziyue Li\\
    Information Systems Department, University of Cologne \\
    Junpeng Lin\\
    Department of Industrial Engineering, Tsinghua University\\
    Zhishuai Li\\
    Sensetime \\
    Lei Bai \\
    Shanghai AI Laboratory \\
    Man Li \\
    Department of Industrial Engineering and Decision Analytics, \\ The Hong Kong University of Science and Technology \\
    Fugee Tsung \\
    Department of Industrial Engineering and Decision Analytics, \\ The Hong Kong University of Science and Technology \\
    Rui Zhao \\
    Sensetime \\
    Chen Zhang \\
    Department of Industrial Engineering, Tsinghua University
}
  \maketitle
} \fi

\if0\blind
{
  \begin{center}
    {\LARGE\bf  MultiFun-DAG: Multivariate Functional Directed Acyclic Graph}
\end{center}
} \fi

\begin{abstract}
Directed Acyclic Graphical (DAG) models efficiently formulate causal relationships in complex systems. Traditional DAGs assume nodes to be scalar variables, characterizing complex systems under a facile and oversimplified form. This paper considers that nodes can be multivariate functional data and thus proposes a multivariate functional DAG (MultiFun-DAG). It constructs a hidden bilinear multivariate function-to-function regression to describe the causal relationships between different nodes. Then an Expectation-Maximum algorithm is used to learn the graph structure as a score-based algorithm with acyclic constraints. Theoretical properties are diligently derived. Prudent numerical studies and a case study from urban traffic congestion analysis are conducted to show MultiFun-DAG's effectiveness. 
\end{abstract}

\noindent%
{\it Keywords:}  Causal Structure Learning, Functional Data, Directed Acyclic Graph
\vfill

\newpage
\spacingset{1.9} 
\section{Introduction}
Directed acyclic graph (DAG), a.k.a., Bayesian network, is a probabilistic graphical model that represents a set of variables and their causal relationships. In a DAG, each node corresponds to a random variable, and each directed acyclic edge represents a causal dependence relationship between the two variables, i.e., a parent node and a descendant node. The distribution of each variable can be written as a conditional probability distribution given its parent nodes and is independent from other nodes. DAG has been widely used to offer vital insights for causal relationship discovery in biological \citep{aguilera2011bayesian}, physical \citep{velikova2014exploiting}, social systems \citep{ruz2020sentiment}, etc. 

Previous work has thoroughly studied DAG with each node as a scalar variable \citep{heckerman2008tutorial}. However, it is common to come across systems where the variables have a \textit{functional} form, as shown in Fig. \ref{fig:intro}. (b). Functional data is formally defined as the data with each sample in the form of random curves or functions over a continuum, such as time or space \citep{qiao2019functional}, which is commonly observed in complex systems such as medical science \citep{chen2018functional}, physiology \citep{li2018nonparametric}, and climate \citep{fraiman2014detecting}. For example, in urban transportation, sensors collect the real-time signals of the traffic elements, such as traffic volume, vehicle speed, lane saturation, cycle length of traffic lights, and weather, which are all functional data and can be combined into a multivariate form. By modeling these traffic variables as different nodes in a DAG to learn their causal relationships, root causes for traffic congestion can be identified, and then corresponding actions can be taken \citep{10.1145/3580305.3599436}. 

We consider the DAG in which each node can be multivariate functional data, as it can describe the practical systems more pertinently than the scalar-based ones. 
Such a DAG has three critical properties: (1) \textbf{Infinite dimensionality}: Functional data are naturally infinite-dimensional, and in theory, can have infinitely many points; Though in practice functional data is usually discretized or approximated to a finite number of observation points. However, the theoretical foundation is that the true underlying functional observation is of infinite dimensionality. 
(2) \textbf{Data heterogeneity}: Functions of different nodes can be heterogeneous, such as containing various numbers of functions and coming from different spaces. 
(3) \textbf{Inter-causation}: Functions of different nodes could be inter-correlated in different ways, i.e., different functions of one node can have different causal effects on another function of another node. 

\begin{figure}[t]
    \centering
    \includegraphics[width=0.8\linewidth]{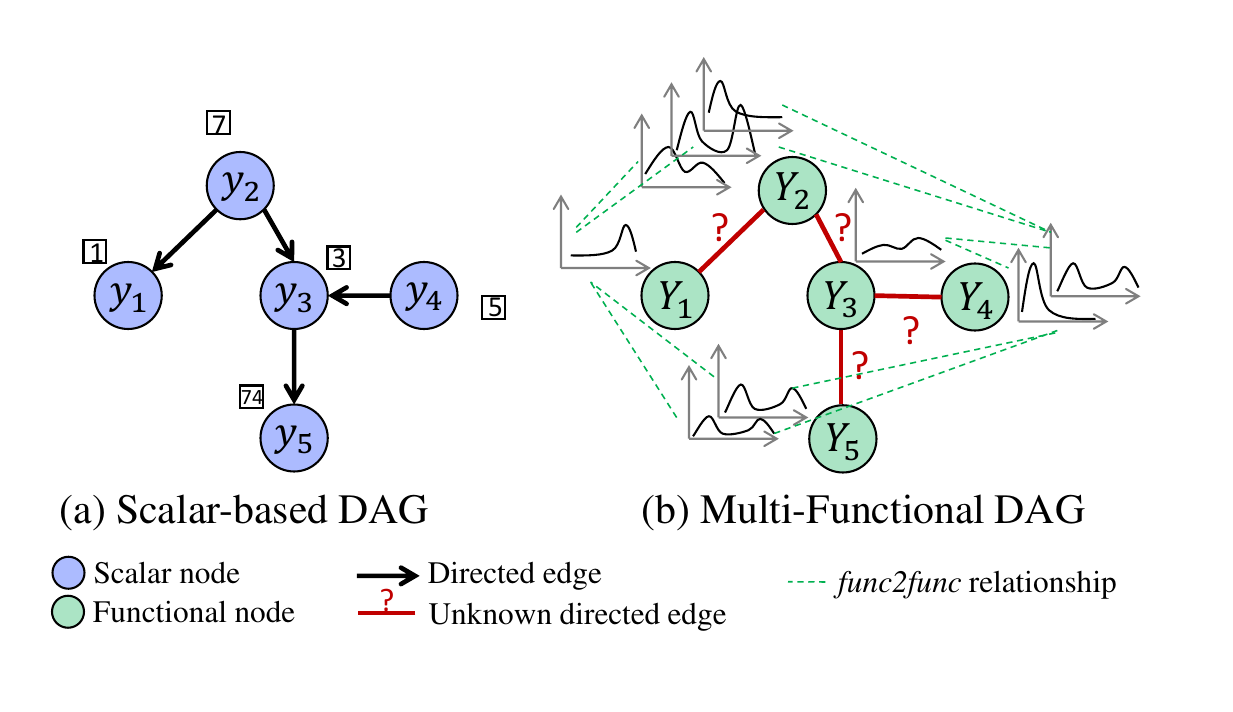}
    \caption{Scalar-based DAG v.s.  Multi-Functional DAG. Each node is a scalar or functional variable, and the directed edge is the causal dependence. MultiFun-DAG learns the unknown causal edge (solid) via formulating the \textit{func2func} relationships (dotted). 
    }
    \label{fig:intro}
\end{figure} 

As a result, traditional scalar-based DAGs cannot be easily extended to our case. 

This paper aims to build a \textbf{Multi-Fun}ctional \textbf{DAG} (MultiFun-DAG) to learn the valuable causal dependence structure among different multi-functional nodes. 
The task is unfolded by three concrete questions: (1) how to preserve the information and describe causal dependence relationships for infinite functions? (2) how to model and fuse the causal dependence relationships between multiple functions in any two nodes and build an edge between them? (3) how to conduct structural learning and parameter learning for these edges?

To address these challenges, we are the first to propose a novel DAG to learn the causal structure with multivariate functional data, with the following major contributions:
\begin{itemize}
\setlength\itemsep{0.5pt}
    \item We model the causal dependence relationships between nodes with multiple functions via hidden bilinear \textit{function-to-function (\textit{func2func}) regression}  with \textit{low-rank decomposition}. 
    \item We propose an Expectation-Maximization (EM) algorithm in the score-based structural learning framework to learn the DAG structure with acyclic constraint and group lasso penalty. 
    \item We derive the theoretical properties of the model, including its identifiability and asymptotic error bound of the EM algorithm, and the asymptotic oracle property of our structure learning algorithm. 
\end{itemize}

\section{Related Work}
\label{sec:literature}

\subsection{DAG structural learning methods}
Methods for DAG learning can be categorized into combinatorial learning and continuous learning algorithms.

\textbf{Combinatorial learning algorithms} solve a combinatorial optimization problem to find whether an edge exists between any two nodes. This type of method can be further divided into constraint-based and score-based algorithms. 

Constraint-based methods, such as PC \citep{spirtes2000causation}, rankPC \citep{harris2013pc}, and fast causal inference \citep{spirtes2000causation}, learn the edges by conditional independence tests. However, 
they are built upon that the independence tests should accurately reflect the independence model, which is generally difficult to be satisfied in reality. As a result, these methods suffer from error propagation, where a minor error in the early phase can result in a very different DAG.

The score-based methods instead construct a score function to evaluate DAG structures and select the graph with the highest score. Some commonly used score functions include the likelihood function, mean square fitting error, etc. Some further regularization items on edges are also added in the score to learn a sparse graph  \citep{chickering2002optimal,nandy2018high}.
Then greedy searches are implemented to find the graph with the highest score. However, one drawback of the combinatorial score-based method is the nonconvexity of the combinatorial problem. The acyclicity constraint means that the solution space stretches along all topological orderings that have $d!$ permutations in a graph with $d$ nodes, rendering DAG learning an NP-hard problem. 

\textbf{Continuous learning algorithms} formulate the acyclic constraint into an algebraic form and convert the structure learning problem into a purely continuous optimization problem to save computation cost. In particular, \citet{zheng2018dags} proposed NoTears, which formulates an algebraic form as $h(W) = \text{tr}(\exp(W \circ W)) - d = 0$, where $W$ is the adjacency weight matrix, $\text{tr}(\cdot)$ is the trace, and $\circ$ is Hadamard product. 
This idea was popularly borrowed in many preceding works. For example, \citet{zheng2020learning} develops a nonparametric DAG based on NoTears \citet{bhattacharya2021differentiable} 
considers both directed and undirected edges based on NoTears. Besides, \citet{ng2020role} also proposes a soft constraint for acyclicity.
However, the NoTears-based methods only offer solutions for scalar-variable nodes. The more realistic problem where nodes contain heterogeneous multi-functional data has never been addressed so far. 

\subsection{Functional graphical models}
Functional graphical models (FGMs), as an extension of traditional graphical models, describe the probabilistic dependence between nodes with functional data and could potentially offer solutions for functional DAG learning. According to the direction of the edges, FGMs can be divided into undirected FGMs and directed FGMs. 

The undirected FGMs focus on estimating the correlation dependence structure between different nodes. In particular, \citet{qiao2019functional} proposes a functional graphical Lasso model to describe the sparse correlation dependence structure of different functional nodes. As an extension, \citet{qiao2020doubly} proposes a doubly FGM to capture the evolving conditional dependence among functions. Later more FGMs were proposed,
such as using nonparametric additive conditional independence model \citep{li2018nonparametric}, assuming the dependence to be partially separable \citep{zapata2022partial}, or heterogeneous \citep{wu2022monitoring}, etc. However, undirected FGMs only capture the correlations, instead of causation, of nodes.  

For directed FGMs focusing on the causal relationship of nodes, the current research is scarce. \citet{sun2017functional} proposes a DAG that considers both scalar and functional nodes. Yet it assumes the DAG structure is known in advance. \citet{gomez2020functional} considers DAG with each node as a univariate function. However, it still assumes the topological ordering of nodes should be known in advance by domain knowledge, and transforms the structural learning problem into a parameter selection problem, i.e., selecting the parent node from the candidate parent set. Furthermore, \citet{gomez2020functional} is a two-step framework by first adopting functional principal component analysis (FPCA) to extract features for each node separately, and then using the FPCA scores to model the causal effects. However, since its FPCA totally ignores the causal relationships between different nodes, the extracted PCs may not represent the most useful information in the whole network. Then the causal effects estimated based on these PCs may be misleading and lead to higher estimation errors. 


\section{Proposed Model}
\label{sec:method}
Suppose that a graph $\mathcal{G}=(\calV, \calE)$ represents a DAG with a vertex set $\calV \in \mathbb{R}^P$ and an edge set $\calE \in \mathbb{R}^{P \times P}$, with $P$ denoted as the total number of nodes. A tuple $(j,j') \in \calE$ represents a directed edge leading from node $j$ to node $j'$, i.e., $j \to j'$. Here we assume the node $j$ has $L_j$ functional variables, with
$Y_{jl}(t), t\in \Gamma$ denoted as its $l$-th function, for $l=1,2,...,L_j$. Here without loss of generality, we assume $\Gamma=[0,1]$ is a compact time interval. Suppose we have $N$ identically and independently distributed samples. The $n$-th sample, $n=1,\ldots,N$, is formulated as $\textbf{Y}^{(n)}(t)=(\textbf{Y}^{(n)}_1(t), \textbf{Y}^{(n)}_2(t),...,\textbf{Y}^{(n)}_P(t))^T$, with $\textbf{Y}^{(n)}_j(t) = (Y^{(n)}_{j1}(t), Y^{(n)}_{j2}(t),...,Y^{(n)}_{jL_j}(t))$. Therefore, $\textbf{Y}^{(n)}(t)$ represents $L=\sum_j L_j$ functions of all the nodes, which is a vector. 
Our MultiFun-DAG aims to learn the causal relations between different nodes, i.e., the edge set $\calE$, shown as red lines in Fig. \ref{fig:intro}. 

To achieve it, in Section \ref{sec: method-known}, we first assume that the causal structure $\calE$ is known, and construct a hidden bilinear \textit{func2func} regression, to learn the conditional dependence from the function $l$ of node $j$ to the function $l'$ of node $j'$, shown as the green dotted edge in Fig. \ref{fig:intro}. In Section \ref{sec:non-identifiability}, we show that causal structure is non-identifiable under maximum likelihood estimation. Therefore, we introduce a restriction for DAG structure and its necessity. In Section \ref{sec:-EM}, we combine the restriction in Section \ref{sec:non-identifiability} and propose an EM algorithm for learning the causal structure of MultiFun-DAG.
\subsection{Multi-functional DAG with known structure}
\label{sec: method-known}


We first give an overview of our MultiFun-DAG in Fig. \ref{fig:Methodology}. Our function $Y_{jl}(t)$ follows Gaussian distribution in Eq. (\ref{equ:Y-likelihood}) with mean function $\mu_{jl}(t)$. The mean functions follow \textit{func2func} regression in Eq. (\ref{equ:mean-transition}) with their parents in DAG. To preserve the information for infinite functional variables
, we decompose the mean function into a basis set with coefficients in Eq. ({\ref{equ:FPCA}}). Then we conduct a bilinear regression for the coefficients to describe the linear causality of different nodes via Eq. ({\ref{equ:matrix-bilinear-trans}}). The joint likelihood of coefficients of all the nodes can be represented using a linear Structural Equation Model (SEM) (Eq. \eqref{equ:linear-SEM}).

\begin{figure}[t]
    \centerline{\includegraphics[width=0.8\columnwidth]{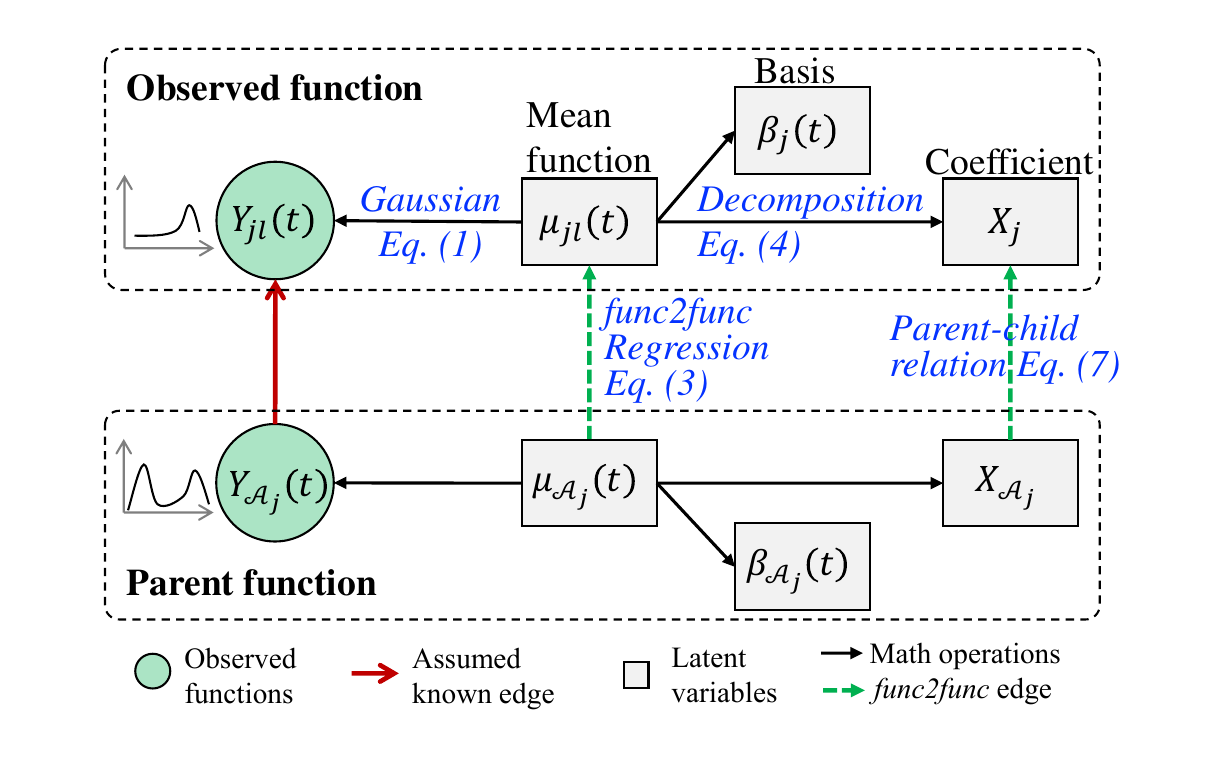}}
    \caption{The Illustration of MultiFun-DAG}
    \label{fig:Methodology}
\end{figure}

In this paper, we focus on Gaussian distributed function: 
\begin{equation} \label{equ:Y-likelihood}
     Y_{jl}^{(n)}(t) \sim \mathcal{N}(\mu_{jl}^{(n)}(t), R_{jl}(\cdot, \cdot)),
\end{equation}
where $\mu_{jl}^{(n)}(t)$ is the mean function and $R_{jl}(\cdot,\cdot)$ is the covariance function of $Y_{jl}^{(n)}$. We assume that $R_{jl}(t,t') = r_{jl}^2 \mathbb{I}(t=t')$, where $r_{jl}^2$ is the scale of variance.  
The parent set of node $j$ is denoted as $\calA_j = \{j'|j'\in \calV, j' \neq j, (j',j)\in \calE\}$. We assume that the joint distribution of $\mu_{jl}^{(n)}(t)$ of all the nodes can be written as the production of the conditional distribution of each node, i.e.,
\begin{equation}
    p(\mu_{11}^{(n)}(t),\ldots, \mu_{PL_{P}}^{(n)}(t)) = \prod_{j=1}^P \prod_{l=1}^{L_{j}}p(\mu_{jl}^{(n)}(t)|\calA_j).
\end{equation}
We focus on linear conditional dependence relationship for $p(\mu^{(n)}_{jl}(t)|\calA_j)$, which is formulated as below:
\begin{equation}
\label{equ:mean-transition}
    \mu_{jl}^{(n)}(t) = \sum_{j'\in \calA_p}\sum_{l'=1}^{L_{j'}} \int_0^1 \gamma_{j'jl'l}(t,s) \mu_{j'l'}^{(n)}(s){\rm d}s + \varepsilon_{jl}^{(n)}(t),
\end{equation}
where $\varepsilon_{jl}^{(n)}(t)$ is the noise function. $\gamma_{j'jl'l}(t,s)$ is the coefficient function for $(j', j)\in \calE$, $l=1,2,...,L_j$ and $l'=1,2,...,L_{j'}$, which describes the contribution of the $l'$-th function of node $j'$ to the $l$-th function of node $j$. We represent $\gamma_{j'jl'l}(t,s)$ and $\mu_{jl}(t)$ as follows:

\textbf{For $\mu_{jl}(t)$:} Given they are in infinite dimensions and hard to be estimated directly, it is common to decompose them into a well-defined continuous space for feature extraction:
\begin{equation} \label{equ:FPCA}
    \mu^{(n)}_{jl}(t) = \sum_{k=1}^{K_j} x^{(n)}_{jlk} \beta_{jk}(t),
\end{equation}
where $\bB_{j}(t)=(\beta_{j1}(t),\beta_{j2}(t),...,\beta_{jK_{j}}(t))^T$ is an orthonormal functional basis set for node $j$, with $\int\beta_{jk}(t)^{2} {\rm d}t=1, k=1,
\ldots, K_{P}$ and $\int \beta_{jk}(t)\beta_{jk'}(t) {\rm d}t = 0$, $k\neq k'$. $x_{jlk}^{(n)}$ is the corresponding coefficient. 

\textbf{For $\gamma_{j'jl'l}(t,s)$:} , we describe $ \gamma_{j'jl'l}(t,s)$ using the corresponding basis sets in a bilinear way \citep{hoff2015multilinear} as:
\begin{align} \label{equ:bilinear}
    \gamma_{j'jl'l}(t,s) = \sum_{k=1}^{K_j} \sum_{k'=1}^{K_{j'}} c_{j'jk'k} \cdot c_{j'jl'l}\beta_{j'k'}(s) \beta_{jk}(t).
\end{align}
$c_{j'jk'k}$ represents the influence caused by the basis pair: $\beta_{j'k'}(s)$ on $ \beta_{jk}(t)$. $c_{j'jl'l}$ represents the influence caused by the function pair: function $l'$ of node $j'$ on function $l$ of node $j$. This decomposition describes the regression coefficient function from two aspects, i.e., (1) the basis set of a node and (2) the variables of a node, separately. Besides, it also improves estimation stability. 

By plugging the representation of Eq. (\ref{equ:FPCA}) and (\ref{equ:bilinear}) into Eq. (\ref{equ:mean-transition}), for function $l$ in node $j$, we could obtain:
\begin{equation} \label{equ:bilinear-trans}
\begin{split}
    \sum_{k=1}^{K_j} x^{(n)}_{jlk} \beta_{jk}(t)= \sum_{j'\in \calA_j} \sum_{k=1}^{K_j} \sum_{l'=1}^{L_{j'}} \sum_{k'=1}^{K_{j'}} \int_0^1 c_{j'jk'k} \cdot c_{j'jl'l} x^{(n)}_{j'l'k'} 
   \beta_{jk}(t)  \beta_{j'k'}^2(s){\rm d}s  + \varepsilon_{jl}(t).
\end{split}
\end{equation}
By integrating this equation over $s$, and combining all the parameters $x^{(n)}_{jlk}$ into a vector, i.e., $\bx^{(n)}_j = {\rm vec}(x^{(n)}_{jlk}) \in \mathbb{R}^{L_jK_j}$, where $[\bx^{(n)}_{j}]_{i}$ represents the $[(i-1) \text{ mod } K_j]+1$ coefficient of the function $\lfloor (i-1)/K_j \rfloor +1$ in node $j$, 
Eq. (\ref{equ:bilinear-trans}) can be re-written as:
\begin{align} 
\label{equ:matrix-bilinear-trans}
    \bx^{(n)}_j = \sum_{j' \in \calA_j} (\bC^L_{j'j} \otimes \bC^K_{j'j})^T \bx^{(n)}_{j'} + \bxi_j^{(n)}.
\end{align}
Here $\bC^L_{j'j}\in \mathbb{R}^{L_{j'} \times L_j}$ with $[\bC^L_{j'j}]_{l'l}=c_{j'jl'l}, \bC^K_{j'j}\in \mathbb{R}^{K_{j'} \times K_{j}}$ with $[\bC^K_{j'j}]_{k'k}=c_{j'jk'k}$. $\otimes$ is the Kronecker product. $\bxi_j \in \mathbb{R}^{L_jK_j} $ is the noise of $\bx_j$, where $[\bxi_j]_{(l-1)K_j+1}$ to $[\bxi_j]_{lK_j}$ are the projection of $\varepsilon^{(n)}_{jl}(t)$ on its corresponding basis set for $j=1,\ldots,P, l=1,\ldots,L_j$. Here we assume $\bxi_j^{(n)}\sim \mathcal{N}(\mathbf{0},\bOmega _j )$ with $\bOmega _j \in \mathbb{R}^{L_j K_j \times L_j K_j}$. For brevity, we simply assume $\bOmega_j$ has a diagonal form, i.e., $\bOmega_j = {\rm diag}(\bomega_j^2)$.

Lastly, we use a linear SEM to interpret our MultiFun-DAG. We denote $\mathbf{C}\in \mathbb{R}^{M\times M}$ with its $(j,j')$ block as $\bC_{(j',j)}=\bC_{j'j}$, $\bC_{j'j}=\bC^L_{j'j} \otimes \bC^K_{j'j}$ if $(j,j') \in \calE$, otherwise we have $\bC_{(j',j)} = \mathbf{0}_{L_{j'}K_{j'} \times L_jK_j}$. Then for $\textbf{x}^{(n)}=[\bx_1^{(n)},\ldots,\bx_{P}^{(n)}] \in \mathbb{R}^M$, a linear SEM interpretation is:
\begin{align} \label{equ:linear-SEM}
    \mathbf{x}^{(n)} = \mathbf{C}^T \mathbf{x}^{(n)} + \bxi^{(n)}.
\end{align}
Here $M=\sum_{j=1}^P L_jK_j$, $\bxi^{(n)} = [\bxi^{(n)}_{1},\ldots,\bxi^{(n)}_{P}] \sim \mathcal{N}(\mathbf{0},\bOmega)$ is the noise vector. $\bOmega= {\rm diag}(\bOmega_{1},\ldots,\bOmega_{P})$. 

In reality, $Y_{jl}(t)$ can only be measured at certain discrete observation points. In this paper, without loss of generality, we assume that, for all the nodes, the sampling points are equally spaced as $t_1,\ldots,t_{T}$. Then we define $\bX=[\bx^{(1) T},\ldots,\bx^{(N) T}]^T \in \mathbb{R}^{N \times M}, \mathbf{Y}_{jl}^{(n)}=[Y_{jl}^{(n)}(t_1),\ldots, Y_{jl}^{(n)}(t_T)]$. By abusing the notation $\mathbf{Y}^{(n)}=[\mathbf{Y}_{jl}^{(n)}, j = 1,\ldots,P, l = 1,\ldots, L_j]$ and $\calY=[\mathbf{Y}^{(1)},\ldots, \mathbf{Y}^{(N)}]$ 
for convenience, we can write the joint likelihood of the generative model as:
\begin{align} \label{equ:joint-likelihood}
    f(\bX,\calY) = \prod_{i=1}^{N} p(\bx^{(n)}) p(\mathbf{Y}^{(n)}|\bx^{(n)}),
\end{align}
where $p(\bx^{(n)}) $ and $p(\mathbf{Y}^{(n)}|\bx^{(n)})$ are computed by Eqs. (\ref{equ:Y-likelihood}), (\ref{equ:FPCA}) and  (\ref{equ:linear-SEM}). It is to be noted that our model can also be applicable to functional nodes measured at distinct observation points with different lengths, with trivial notation modifications. 

\subsection{Non-identifiability and  equivalence class} \label{sec:non-identifiability}
In reality, the graph structure is unknown and to be estimated. This can be transferred to infer whether the weight $\bC_{j'j}^{L}$ and $\bC_{j'j}^{K}$ equals $\mathbf{0}$ for certain blocks. In particular, the parameters to be estimated in our model includes 1) the weights $\bC_{j'j}^{L}$ and $\bC_{j'j}^{K}$ for nodes $j,j'=1,\ldots,P$; 2) the variance of functional noise, denoted as $\br=[r_{11}^2, r_{12}^2, ..., r_{PL_P}^2]\in \mathbb{R}^{M}$; 3) the variance of $\bx_{j}$, i.e., $\bOmega_1,\bOmega_2,...,\bOmega_P$, denoted as $\bOmega_{[1:P]}$; 4) the basis functions $\bB_{j}(t)=[\beta_{j1}(t),\beta_{j2}(t),...,\beta_{jK_{j}}(t)]^T$ for node $j=1,\ldots, P$. 
    It is to be noted that in reality, we only need to estimate $\bB_{j}=[\bB_{j}(t_1)^{T},\ldots,\bB_{j}(t_{T})^{T}]^{T}$, denoted as $\bB=[\bB_1, \bB_2, \ldots, \bB_P]$. For other observation points, we can adopt Kernel smoothing to estimate them easily. 
    
     The parameters $\Theta = (\bC, \bB, \br, \bOmega_{[1:P]})$ are statistically nonidentifiable without further constraints. Based on our model structure, the marginal distribution of $\bY$ follows a Gaussian distribution with mean $\mathbf{0}$ and covariance function $\bSigma_{\bY}(\Theta)$, which is determined by the parameters $\Theta$, i.e.,
    \begin{equation} \label{equ:equivalence-class-con}
        \bSigma_{\bY}(\Theta)_{jl,j'l'} = 
        \begin{cases}
            \bB_j[(\bI-\bC)^{-T}\bOmega(\bI-\bC)^{-1}]_{jl,jl} \bB_j^T + r_{jl}^2 \mathbf{I}_T & (j,l)=(j',l') \\
            \bB_j[(\bI-\bC)^{-T}\bOmega(\bI-\bC)^{-1}]_{jl,j'l'} \bB_{j'}^T & o.w.
        \end{cases}.
    \end{equation}

    We aim to estimate the model parameters $\Theta$ based on the information from the observed covariance matrix $\bSigma_\bY$. However, it turns out that the mapping from $\Theta$ to $\bSigma_\bY(\Theta)$ is not one-to-one, i.e., one $\bSigma_\bY$ can correspond to multiple sets of model parameters $\Theta$. Denote the true covariance matrix as $\bSigma_\bY^*=\bSigma_\bY(\Theta^*)$, where $\Theta^*$ is the true underlying parameters. We define the set of all $\Theta$ whose $\bSigma_\bY(\Theta)$ equals $\bSigma_\bY^*$ as the equivalence class $\mathfrak{D}$ corresponding to $\bSigma_\bY^*$, i.e.,
    \begin{equation*}
        \mathfrak{D} = \{\Theta | \bSigma_\bY^* = \bSigma_\bY(\Theta)\}.
    \end{equation*}
    
    Without additional restrictions, we can only find one $\Theta \in \mathfrak{D}$ based on the observation data. However, infinite combinations of parameters exist in the equivalence class and cannot provide us with useful information regarding the causal structure. The most common solution for Gaussian noise is to assume Condition \ref{ass:equal-variance}, which can be viewed as an extension of the equal variance condition in \citet{van2013ell_}. 
    
    \begin{condition} \label{ass:equal-variance}
        In the true DAG, all latent variables have equal variance, i.e., $\bOmega = \omega_0^2 \bI$ 
    \end{condition}
It is a common condition for ensuring the identifiability of a linear structural causal model with Gaussian noise. With it, all the graphs with $\Theta \in \mathfrak{D}$ will have the same causal structure.

\subsection{Regularized EM estimation} \label{sec:-EM}

Under Condition \ref{ass:equal-variance}, we rewrite the parameter set as $\Theta=\{\bC,\bB,\br,\omega_0^2\}$. 
Since the coefficients $\bX$ are unknown, we estimate $\bx^{(n)}, n=1,\ldots, N$ by treating them as latent variables and using a regularized EM algorithm for estimation \citep{yi2015regularized}. The regularized EM algorithm consists of an Expectation-step and a regularized Maximization-step. In each iteration, the operator $\calM_n$ of the regularized EM is denoted as follows: 
\begin{equation}
\label{eq:MM}
\begin{split}
    \calM_n(\Theta') &= \underset{\Theta}{\arg \max} \ Q_n(\Theta;\Theta') - \lambda \calR(\bC) \\ 
    s.t. & \quad \quad \mathcal{G} \text{ is a DAG},
\end{split}
\end{equation}
where
    \begin{equation}
        \begin{split}
        Q_n(\Theta;\Theta') & = \mathbb{E}_{\bX|\calY;\Theta'}\log f(\bX,\calY;\Theta) = \int \log f(\bX,\calY;\Theta') p(\bX|\calY;\Theta') {\rm d}\bX, \\
        \end{split}
    \end{equation}
    \begin{align}
    \label{eq:logf}
       \log f(\bX, \calY;\Theta) = & -\frac{1}{2} \Bigg( \sum_{n=1}^N \Big( \sum_{j=1}^P \sum_{l=1}^{L_j} (\bY_{jl}^{(n)}-\bB_j\bx_{jl}^{(n)})^T r_{jl}^{-2}(\bY_{jl}^{(n)}-\bB_j\bx_{jl}^{(n)}) \nonumber \\
       &+ \sum_{j=1}^P (\bx_j^{(n)} - \bx^{(n)} \bC_j)^T\omega_0^{-2}(\bx^{(n)}_j - \bx^{(n)} \bC_j) + \sum_{j=1}^P \sum_{l=1}^{L_j} T \log r_{jl}^2 \nonumber\\ 
       &+ M \log \omega_0^2 \Big) \Bigg) + constants,
    \end{align}  
and $\calR(\bC)$ is the sparse penalty, to penalize the model complexity. 

 To represent the DAG constraint in Eq. (\ref{eq:MM}) to a mathematical form, we define the adjacency matrix $\bW \in \mathbb{R}^{P \times P}$ corresponding to the edge set $\calE$ for the DAG $\mathcal{G}$.  Consider $\bW$ as a measure of causal effects and it fuses the information of $\bC_{ij}$ in a scalar. We have:
    \begin{align}
        [\bW]_{ij} \ne 0 \Leftrightarrow \bC_{ij} \ne \textbf{0}_{L_iK_i \times L_jK_j}.
    \end{align}
   Then in this work, we give an intuitive and valid definition for $\bW$ as 
   \begin{equation}
   [\bW]_{ij}  \doteq \| \bC_{ij}\|_F.
   \end{equation}
Consequently, to ensure $\bC$ is a DAG, we adopt Notears constraints \citep{zheng2018dags} for the adjacency matrix $\bW$ that $h(\bW) := {\rm tr}(\exp(\bW \circ \bW)) - P$ and we have:
    \begin{equation}
        h(\bW) = 0\Leftrightarrow {\mathcal{G}} \text{ is a DAG}.
    \end{equation}

Finally, for a large graph, it is usually assumed the edges are sparse, and penalize the $l_{1}$ norm of $\mathbf{W}$. Therefore, we set $\calR(\bC) =\|\bC\|_{l_1/F}=\sum_{i=1}^{P}\sum_{j=1}^{P}\|\bC_{ij}\|_{F}$, and $\lambda$ adjusts the strength of the penalty.

\textbf{Expectation-step} is to calculate $Q_{n}(\Theta;\Theta^{\prime})$. It can be derived by calculating the posterior likelihood $p(\bX|\calY;\Theta')$, which can be estimated in a forward and backward way. 
\begin{proposition} \label{pro:represent}
For any parameter set $\Theta'$, the posterior distribution can be decomposed as $p(\bX|\calY;\Theta')=\prod_{i=1}^{N}p(\bx^{(n)}|\bY^{(n)};\Theta')$. $p(\bx^{(n)}|\bY^{(n)};\Theta')$ follows a multivariate normal distribution $ \mathcal{N}(\hbu^{(n)}_{\Theta',\bY^{(n)}} , \hat{\bSigma}_{\Theta'})$ with mean $\hbu^{(n)}_{\Theta',\bY^{(n)}} \in \mathbb{R}^{1\times P}$ and variance $\hat{\bSigma}_{\Theta'} \in \mathbb{R}^{P\times P}$, where $\hbu_{\Theta, \bY}$ is a linear combination of $\bY$ depending on $\Theta$ while $\hat{\bSigma}_{\Theta}$ only depends on $\Theta$.
\end{proposition}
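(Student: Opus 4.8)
The plan is to recognize that, once the structural parameters $\Theta'$ are fixed, the pair $(\bx^{(n)},\bY^{(n)})$ is jointly Gaussian, so the posterior follows by standard linear--Gaussian conjugacy. I would first dispatch the factorization claim: since the $N$ samples are i.i.d.\ and each $\bY^{(n)}$ depends only on its own latent $\bx^{(n)}$ through Eqs.~\eqref{equ:Y-likelihood} and \eqref{equ:FPCA}, applying Bayes' rule to the product form \eqref{equ:joint-likelihood} immediately yields $p(\bX\mid\calY;\Theta')=\prod_{n=1}^N p(\bx^{(n)}\mid\bY^{(n)};\Theta')$. It therefore suffices to analyze a single sample, for which I suppress the superscript $n$.

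Next I would make the two-layer linear--Gaussian structure explicit. From the SEM \eqref{equ:linear-SEM}, the prior on the latent vector is $\bx\sim\mathcal{N}(\mathbf{0},\bSigma_{\bx})$ with $\bSigma_{\bx}=(\bI-\bC)^{-T}\bOmega(\bI-\bC)^{-1}$; this requires $\bI-\bC$ to be invertible, which holds because acyclicity of $\mathcal{G}$ makes $\bC$ block strictly triangular under a topological ordering, hence nilpotent, so $(\bI-\bC)^{-1}=\sum_{k\ge 0}\bC^{k}$ is a finite, well-defined sum. Stacking the basis maps $\bB_j$ into a block matrix $\bH$ and the noise scales $\br$ into a diagonal $\bR$, Eqs.~\eqref{equ:Y-likelihood}--\eqref{equ:FPCA} read $\bY\mid\bx\sim\mathcal{N}(\bH\bx,\bR)$, a linear observation with mean linear in $\bx$.

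With both layers Gaussian, I would complete the square in the quadratic-in-$\bx$ log-joint $\log f(\bx,\bY;\Theta')$ of Eq.~\eqref{eq:logf}. Collecting the $\bx$-dependent terms produces a negative-definite quadratic whose curvature is $\bSigma_{\bx}^{-1}+\bH^{T}\bR^{-1}\bH$ and whose linear part is $\bH^{T}\bR^{-1}\bY$. Hence $p(\bx\mid\bY;\Theta')=\mathcal{N}(\hbu_{\Theta',\bY},\hat{\bSigma}_{\Theta'})$ with $\hat{\bSigma}_{\Theta'}=(\bSigma_{\bx}^{-1}+\bH^{T}\bR^{-1}\bH)^{-1}$ and $\hbu_{\Theta',\bY}=\hat{\bSigma}_{\Theta'}\bH^{T}\bR^{-1}\bY$. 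Reading off the two structural assertions is then immediate: $\hat{\bSigma}_{\Theta'}$ is assembled only from $\bC,\bOmega,\bB,\br$, i.e.\ only from $\Theta'$, with no dependence on $\bY$; and $\hbu_{\Theta',\bY}$ is the fixed linear (Kalman-gain-type) map $\hat{\bSigma}_{\Theta'}\bH^{T}\bR^{-1}$ applied to $\bY$, so it is linear in $\bY$ with coefficients determined by $\Theta'$.

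The most involved part is the \emph{forward--backward} computation alluded to in the statement: rather than inverting the full $M\times M$ precision matrix directly, one exploits the topological ordering so that the block-triangular structure of $\bI-\bC$ turns the posterior into a message-passing recursion (a Kalman-smoother sweep) over the graph. The main obstacles I anticipate are bookkeeping the Kronecker factorization $\bC_{j'j}=\bC^{L}_{j'j}\otimes\bC^{K}_{j'j}$ and the per-function basis blocks $\bB_j$ when assembling $\bH$ and $\bR$, and verifying that the forward and backward passes indeed reproduce the closed forms $(\hbu_{\Theta',\bY},\hat{\bSigma}_{\Theta'})$ above; the underlying Gaussianity, by contrast, is a standard conjugacy fact and poses no genuine difficulty.
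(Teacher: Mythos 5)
Your proof is correct, but it takes a genuinely different route from the paper. The paper's proof is entirely by reference to its Appendix C.3: it constructs the posterior via a forward-filtering/backward-smoothing sweep over the DAG, propagating the noise-coefficient matrices $\bG$ and $\bH$ node by node (in topological order, then back through descendants), and the proposition's claims --- Gaussianity, mean linear in $\bY$, covariance $\hat{\bSigma}_{\Theta} = \omega_0^2\hat{\bG}\hat{\bG}^T + \hat{\bH}\,\mathrm{diag}(\br)\hat{\bH}^T$ free of $\bY$ --- are read off from that recursive construction. You instead argue globally: factorize over i.i.d.\ samples, observe that $\bx\sim\mathcal{N}(\mathbf{0},(\bI-\bC)^{-T}\bOmega(\bI-\bC)^{-1})$ (with invertibility of $\bI-\bC$ correctly justified by nilpotency of $\bC$ under a topological ordering) and $\bY\mid\bx\sim\mathcal{N}(\bH\bx,\bR)$, then complete the square to get the closed forms $\hat{\bSigma}_{\Theta'}=(\bSigma_{\bx}^{-1}+\bH^T\bR^{-1}\bH)^{-1}$ and $\hbu_{\Theta',\bY}=\hat{\bSigma}_{\Theta'}\bH^T\bR^{-1}\bY$. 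Your argument is more elementary and self-contained, and it proves every assertion in the proposition directly without needing to verify that a multi-stage recursion preserves Gaussianity and linearity at each step; what the paper's route buys is that its proof doubles as the derivation of the E-step algorithm actually implemented (the message-passing form exploits the DAG sparsity rather than inverting the full $M\times M$ precision matrix), so the proposition there is a by-product of the computational procedure. One caution: your closed form and the appendix recursion must agree, which you correctly flag as bookkeeping rather than substance; also note the dimensions in the proposition statement ($\mathbb{R}^{1\times P}$, $\mathbb{R}^{P\times P}$) should really be $\mathbb{R}^{M}$ and $\mathbb{R}^{M\times M}$ --- that is the paper's slip, and your derivation gets the dimensions right.
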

\begin{proof}
    Proposition \ref{pro:represent} is straightforward by following the procedure in Appx. \ref{app:Expectation}. 
\end{proof}

\textbf{Maximization-step} is to solve the maximization problem of Eq. (\ref{eq:MM}) based on the calculated $Q_{n}(\Theta,\Theta')$ in the Expectation-step, and update the model parameters $\Theta$. The parameters can be decoupled into two sub-groups. The first sub-group is $\bB_{j}(t),j=1,\ldots,P$ and $\br$, which are directly related to the observations $\calY$. The second sub-group contains the important $\bC$ and $\omega_0^2$, which determine the causal relationship of different nodes, namely, the DAG structure.

    \textbf{For the first sub-group:}
 denote $F_n(\bB, \Theta')$ as the part of the quadratic loss in Eq. (\ref{eq:MM}) related to $\bB$. Minimizing $F_n$ is equivalent to maximizing $Q_n$ respecting to $\bB$:
    \begin{equation} 
    \label{equ:M-step-u1}
    \begin{split}
    \small
        \hat{\bB}_{1},\hat{\bB}_{2},\ldots,\hat{\bB}_{P} \doteq\underset{\bB_{1},\ldots,\bB_{P}}{\arg \min} & \sum_{n=1}^N \sum_{j=1}^P \sum_{l=1}^{L_j} \bbE_{\bx^{(n)}|\bY^{(n)},\Theta'} \left((\bY_{jl}^{(n)}-\bB_j\bx_{jl}^{(n)})^T r_{jl}^{-2}(\bY_{jl}^{(n)}-\bB_j\bx_{jl}^{(n)})\right) \\
        & \propto \frac{1}{N} \sum_{n=1}^N \sum_{j=1}^P \sum_{l=1}^{L_j} \left(\| \bY^{(n)}_{jl} - \bB_j \hbu_{jl;\Theta', \bY^{(n)}} \|^2_2 + {\rm tr}(\bB_j \hat{\bSigma}_{j} \bB_j^T)\right) \\
        & \doteq F_n(\bB,\Theta') \\ 
        & {\rm s.t.} \quad \bB_j^T\bB_j=\bI \quad \forall j=1,\ldots, P.
        \end{split}
    \end{equation}
To solve Eq. (\ref{equ:M-step-u1}), we can utilize the polar decomposition. We first calculate $\bA = \frac{1}{N} \sum_{n=1}^N \sum_{l=1}^{L_j} \bY_{jl}^{(n)} \hbu_{jl;\Theta',\bY^{(n)}}^T$, and then perform the polar decomposition on $\bA$ to obtain $\bA = \bV \hat{\bB}_j$, where $\bV$ is a symmetric matrix, and $\hat{\bB}_j$ is the  matrix we are interested in.

For estimating $\hat{\br}$, it can be solved in a closed form as: 
    \begin{equation} \label{equ:M-step-u2}
    \small
        \hat{r}_{jl}^2 = \frac{1}{N} \sum_{n=1}^N \left((\bY^{(n)}_{jl} - \hat{\bB}_j \hbu_{jl;\Theta', \bY^{(n)}})(\bY^{(n)}_{jl} - \hat{\bB}_j \hbu_{jl;\Theta', \bY^{(n)}})^T + \hat{\bB}_j\hat{\bSigma}_{j} \hat{\bB}_j^T)\right), \forall j = 1,\ldots,P,  l = 1,\ldots, L_{j}. 
    \end{equation}
   
    \textbf{For the second sub-group:} \textbf{the key is to infer $\bC$, which represents the structure of DAG}. Denote $G_n$ as the loss part in $Q_n(\Theta,\Theta')$ related to $\bC$. Maximizing Eq. (\ref{eq:MM}) is equivalent to the following:
    \begin{equation} \label{equ:M-step-c1}
    \begin{split}
       \hat{\bC}^K, \hat{\bC}^L =   \underset{\bC^K, \bC^L}{\arg\min} &  \sum_{n=1}^N \sum_{j=1}^P \bbE_{\bx^{(n)}|\bY^{(n)},\Theta'} \left((\bx_j^{(n)} - \bx^{(n)} \bC_j)^T\omega_0^{-2}(\bx^{(n)}_j - \bx^{(n)} \bC_j)\right) + \lambda \| \bC\|_{l_1/F} \\
        \propto & \frac{1}{N} \sum_{n=1}^N \bbE_{\bx^{(n)}|\bY^{(n)},\Theta'}\|\bx^{(n)}-\bx^{(n)}\bC\|_2^2 + \lambda \| \bC \|_{l_1/F}  \\
        = & \frac{1}{N} \sum_{n=1}^N \left( \|\hbu_{\Theta', \bY^{(n)}} - \hbu_{\Theta', \bY^{(n)}} \bC\|^2_2 + {\rm tr}((\bI-\bC)^T \hat{\bSigma}_{\Theta'} (\bI-\bC)) \right) + \lambda \|\bC\|_{l_1/F}\\
        \doteq &\  G_n(\bC, \Theta') + \lambda \|\bC\|_{l_1/F} \\
        {\rm s.t.} \quad & h(\bW) = 0,
    \end{split}
    \end{equation}
where $\bC^{K}$ is a $\sum_j{K_j} \times \sum_j {K_j}$ matrix with its $(j,j')$ block as $\bC^{K}_{(j,j')} = \bC^{K}_{j'j}$, $\bC^{L}$ is a $L \times L$ matrix with its $(j,j')$ block as $\bC^{L}_{(j,j')} = \bC^{L}_{j'j}$, $\bC_{j'j} = \bC^L_{j'j} \otimes \bC^K_{j'j}$.
  
    We can convert Eq. (\ref{equ:M-step-c1}) into an unconstrained problem using the Lagrangian dual method:
  \begin{equation} \label{equ:Lagrangian}
      \hat{\bC}^K, \hat{\bC}^L \in \underset{\bC^K, \bC^L}{\arg \min} \underset{b>0}{\max}\  \tilde{G}_n(\bC, \Theta') + \lambda \| \bC \|_{l_1/F},
  \end{equation}
  where
  \begin{equation*}
      \tilde{G}_n(\bC, \Theta') = G_n(\bC, \Theta') + b h(\bW) + \frac{a}{2} h(\bW)^2.
  \end{equation*}
$b \in \mathbb{R}$ is dual variable and $a \in \mathbb{R}$  is the coefficient for quadratic penalty. We solve the Lagrangian dual problem by the dual ascent method. Due to the non-smoothness of $l_1/F$ norm, we use the proximal gradient method for group lasso penalty.
We summarize the algorithm in Algorithm \ref{alg:Largrangian}. 
\begin{algorithm}[t]
\caption{EM algorithm}
\label{alg:EM algorithm}
\begin{algorithmic}
   \STATE {\bfseries Input:} data $\calY$, tolerances $\epsilon_0$
   \STATE Initialize $\Theta^{(0)}=\mathrm{vec}( \bB^{(0)},\bC^{K (0)}, \bC^{L (0)}, \bR^{ (0)}, \omega^{2 (0)})$, $s = 0$.
   \REPEAT
    \STATE $s \leftarrow s + 1$
   \STATE $\hbu^{(n)}_{\Theta^{(s-1)},\bY^{(n)}} , \hat{\bSigma}_{\Theta^{(s-1)}}\leftarrow \text{Forward filtering \& backward smoothing} (\Theta^{(s-1)})$ via Appx. \ref{app:Expectation}
   \STATE $\bB^{(s)} \leftarrow \underset{\bB}{\arg \min}\ F_n(\bB,\Theta^{(s-1)})$ via Polar decomposition.
   \STATE $\bC^{K(s)},\bC^{L(s)} \leftarrow \underset{\bC^K,\bC^L}{\arg \min}\ G_n(\bC;\Theta^{(s-1)}) + \lambda \| \bC \|_{l_1/F}$ via Algorithm \ref{alg:Largrangian}
   \STATE Update $\omega^{2 (s)}_0$ by Eq. (\ref{equ:m-step-c3})
   \STATE $\Theta^{(s)} \leftarrow \mathrm{vec}(\bB^{(s)},\bC^{K(s)},\bC^{L(s)}, \bR^{(s)}, \omega^{2 (s)}_0)$
   \UNTIL{$D(\Theta^{(s)}, \Theta^{(s-1)})<\epsilon_{0}$}
\end{algorithmic}
\end{algorithm}

    After obtaining the transition matrix $\hat{\bC}$, $\hat{\omega}_0^2$ can be solved in a closed form as
    \begin{equation} \label{equ:m-step-c3}
    \begin{split}
        \hat{\omega}_0^2 =  \frac{1}{NM} \sum_{n=1}^N \bigg( \|\hbu_{\Theta', \bY^{(n)}} - \hbu_{\Theta', \bY^{(n)}} \hat{\bC}_j\|_2^2
        + {\rm tr}((\bI-\hat{\bC})^T \hat{\bSigma}_{\Theta'} (\bI-\hat{\bC}))\bigg).
    \end{split}
    \end{equation}
    Combine Eqs. (\ref{equ:M-step-u1}), (\ref{equ:M-step-u2}), (\ref{equ:M-step-c1}) and (\ref{equ:m-step-c3}), we can update $\Theta$ and replace $\Theta'$ by the updated $\Theta$. 
    
    We repeat the Expectation-step and Maximization-step iteratively until convergence, i.e., the difference between the estimated parameters
    \begin{equation*}
        D(\Theta, \Theta') = \sqrt{\| \bC - \bC'\|_F^2  +  \| \bB - \bB'\|_F^2 + \| \br - \br'\|_2^2 + \| \omega_0^2 - \omega_0^{2'}\|_F^2}
    \end{equation*}
    is smaller than a threshold $\epsilon_0$.
    We summarize the regularized EM algorithm in Algorithm \ref{alg:EM algorithm}, where $\Theta = \{\bC,\bB,\br,\omega_0^2\}$ and $\Theta' = \{\bC',\bB',\br',\omega_0^{2'}\}$.
    \begin{algorithm}[t]
\caption{Algorithm for Largrangian dual problem}
\label{alg:Largrangian}
\begin{algorithmic}
    \STATE {\bfseries Input:} posterior distribution $\hbu^{(n)}_{\Theta^{(s-1)},\bY^{(n)}} , \hat{\bSigma}_{\Theta^{(s-1)}}$, tolerance $h_{tol}$, learning rate $lr$, $\gamma$.
    \STATE Initialize $\bC^{K}, \bC^{L}, a \leftarrow 1, b \leftarrow 0$.
    \REPEAT
    \STATE Update $\bC^{K}, \bC^{L}$ by minimizing $\tilde{G}_n$ by gradient method. 
    \STATE $b \leftarrow b + ah(\bW)$
    \STATE $a \leftarrow lr * a$
    \UNTIL {$h(\bW) < h_{tol}$}
    \FOR{$i$ from $1$ to $P$}
    \FOR{$j$ from $1$ to $P$}
    \IF{$\|\bC_{ij}\|_F > \gamma \lambda$}
        \STATE $\bC^{L}_{ij} \leftarrow \bC^L_{ij} - \gamma \lambda \frac{\bC^{L}_{ij}}{\| \bC^{}_{ij} \|_F}$
    \ELSE
        \STATE $\bC^{L}_{ij} \leftarrow \mathbf{0}$
    \ENDIF
    \ENDFOR
    \ENDFOR
\end{algorithmic}
\end{algorithm}
    

    
    \section{Theoretical Properties}
     \label{sec:theory}
         In the following, we prove that when certain model assumptions hold, the estimated parameters can converge to those of the true model. Assuming for the true model, its parameter set is denoted as $\Theta^{*} = \{\bC^*, \bB^*, \omega_0^{2 *}, \br^*\}$. Condition \ref{ass:bound} gives the upper and lower bounds of the variances that ensure the data covariance matrix is not degenerate, i.e.,  
     \begin{condition} \label{ass:bound}
    All the eigenvalues of $\bSigma^* = (\bI - \bC^*)^{-T}\omega_0^{2 *}(\bI - \bC^*)^{-1}$ should be greater than a constant $\eta_{\bSigma^*} > 0$ and finite, and $\forall j\in 1,\ldots P$ and $l=1, \ldots, L_j$, we assume $r_{jl}^2<\infty$.
    \end{condition}        

 Condition \ref{ass:num-latent-bound} ensures the identifiability of decomposition on $\bY$. 
    \begin{condition} \label{ass:num-latent-bound}
        The number of latent variables is smaller than the number of sampling points for each function, i.e., $K_j < T, \forall j=1,\ldots, P$.
    \end{condition}

By combining these three conditions, we can obtain the good property for all the models in the equivalent class $\mathfrak{D}$ in Theorem \ref{the:-equivalence-class}. 

    \begin{theorem}[Equivalence class] \label{the:-equivalence-class}
        Define the equivalence class of the true parameters $\Theta^*$ as $\mathfrak{D}$. Under Conditions \ref{ass:equal-variance} to \ref{ass:num-latent-bound},  for any parameters $\Theta^e = \{\bC,\bB,\br,\omega_0^2 \} \in \mathfrak{D}$, it can be represented by the following form:
        \begin{align*}
            \bB_j &= \bB^*_j \bQ_j, \\
            \br_{jl} &= \br^*_{jl}, \\
            \omega_0^2 &= \omega_0^{2 *}, \\
            \bC_{j'jl'l} &= \bQ_j \bC^*_{j'jl'l} \bQ_{j'}^T,
        \end{align*}
 where $\bQ_j \in \mathbb{R}^{K_j \times K_j}$ is an orthogonal matrix satisfying $\bQ_j^T \bQ_j = \bQ_j \bQ_j^T = \bI$. This states that the equivalence class of the true solution is only the set of orthogonal transformations of $\Theta^*$. 
    \end{theorem}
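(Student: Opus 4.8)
The plan is to use the defining property of the equivalence class: every $\Theta^e \in \mathfrak{D}$ reproduces exactly the same observation covariance $\bSigma_\bY$ as $\Theta^*$, so the four identities can be read off block by block from Eq.~\eqref{equ:equivalence-class-con}. I would start with the diagonal blocks. Writing $\bS_{jl} = [(\bI-\bC)^{-T}\omega_0^2(\bI-\bC)^{-1}]_{jl,jl}$ and $\bS^*_{jl}$ for its true analogue, the match reads $\bB_j\bS_{jl}\bB_j^T + r_{jl}^2\bI = (\bB_j^*)\bS^*_{jl}(\bB_j^*)^T + r_{jl}^{2*}\bI$, a spiked-covariance identity (a rank-$\le K_j$ PSD term plus an isotropic term on each side). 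Condition~\ref{ass:bound} forces $\bSigma^*$ positive definite, so its principal block $\bS^*_{jl}$ is positive definite of full rank $K_j$; since there are only $K_j<T$ (Condition~\ref{ass:num-latent-bound}) signal eigenvalues, all strictly above $r_{jl}^{2*}$, we get $\lambda_{\min}([\bSigma_\bY]_{jl,jl})=r_{jl}^{2*}$ with top eigenspace $\mathrm{col}(\bB_j^*)$. On the candidate side, $\bB_j\bS_{jl}\bB_j^T\succeq 0$ gives $r_{jl}^2\le\lambda_{\min}([\bSigma_\bY]_{jl,jl})$, while $\mathrm{rank}(\bB_j\bS_{jl}\bB_j^T)\le K_j<T$ forces $r_{jl}^2$ to be an eigenvalue of $[\bSigma_\bY]_{jl,jl}$, hence $\ge\lambda_{\min}$; together these give $r_{jl}^2=r_{jl}^{2*}$. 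Then $\bB_j\bS_{jl}\bB_j^T = (\bB_j^*)\bS^*_{jl}(\bB_j^*)^T$ is a rank-$K_j$ matrix whose column space is simultaneously $\mathrm{col}(\bB_j)$ and $\mathrm{col}(\bB_j^*)$, so these $K_j$-dimensional subspaces coincide and $\bB_j=\bB_j^*\bQ_j$ with $\bQ_j=(\bB_j^*)^T\bB_j$ orthogonal, establishing the first two identities.

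Next I would substitute $\bB_j=\bB_j^*\bQ_j$ into every block and cancel the left-invertible factors $\bB_j^*,\bB_{j'}^*$, which yields $[\bSigma]_{jl,j'l'}=\bQ_j^T[\bSigma^*]_{jl,j'l'}\bQ_{j'}$ for all pairs, i.e. $\bSigma=\tilde{\bQ}^T\bSigma^*\tilde{\bQ}$ with $\tilde{\bQ}=\mathrm{blockdiag}_j(\bI_{L_j}\otimes\bQ_j)$ orthogonal and respecting the node/function partition. To recover $\omega_0^2$, note that a DAG makes $\bI-\bC$ block unit-triangular, so $\det(\bI-\bC)=1$ and hence $\det\bSigma=(\omega_0^2)^M$, similarly $\det\bSigma^*=(\omega_0^{2*})^M$; since orthogonal conjugation preserves determinants, $\omega_0^2=\omega_0^{2*}$.

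For $\bC$ I would invoke equal-variance Gaussian SEM identifiability. Set $\tilde{\bC}^*=\tilde{\bQ}^T\bC^*\tilde{\bQ}$; because $\tilde{\bQ}$ is block-diagonal it preserves the node-level zero pattern, so $\tilde{\bC}^*$ encodes the same acyclic DAG as $\bC^*$, and $\bI-\tilde{\bC}^*=\tilde{\bQ}^T(\bI-\bC^*)\tilde{\bQ}$ gives $(\bI-\tilde{\bC}^*)^{-T}\omega_0^{2*}(\bI-\tilde{\bC}^*)^{-1}=\tilde{\bQ}^T\bSigma^*\tilde{\bQ}=\bSigma$. Thus $(\tilde{\bC}^*,\omega_0^{2*})$ and $(\bC,\omega_0^2)$ are two equal-variance DAG factorizations of the same $\bSigma$; under Condition~\ref{ass:equal-variance} such a factorization is unique (the extension of van de Geer--B\"uhlmann cited in the text), so $\bC=\tilde{\bC}^*$. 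Unpacking $\bC_{(j',j)}=\bC^L_{j'j}\otimes\bC^K_{j'j}$ with the mixed-product rule gives $\bC^L_{j'j}=\bC^{L*}_{j'j}$ and $\bC^K_{j'j}=\bQ_{j'}^T\bC^{K*}_{j'j}\bQ_j$, which is precisely the claimed per-function-pair transformation $\bC_{j'jl'l}=\bQ_{j'}^T\bC^*_{j'jl'l}\bQ_j$ (up to the transpose convention used in the statement).

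I expect the last step to be the main obstacle. The delicate points are verifying that the equal-variance identifiability result transfers to the coordinate-level, Kronecker-structured SEM, and that the node-level acyclicity $h(\bW)=0$ genuinely induces a coordinate-level DAG so that the uniqueness theorem applies; one must also confirm that $\tilde{\bC}^*$ remains a legitimate DAG parametrization of $\bSigma$ after conjugation. The determinant shortcut settles $\omega_0^2$ cleanly and Step~1 is self-contained, but pinning down $\bC$ rests entirely on that uniqueness argument.
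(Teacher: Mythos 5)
Your proposal is correct, and it follows the same overall skeleton as the paper's proof: match the blocks of $\bSigma_\bY(\Theta^e)$ against those of $\bSigma_\bY(\Theta^*)$, deduce $r_{jl}$ and $\bB_j$ up to an orthogonal factor from the diagonal blocks, and then pin down $\bC$ by invoking equal-variance uniqueness of the Gaussian SEM factorization (the paper cites Lemma 5.1 of Aragam and Zhou for exactly this step, the same result you appeal to via Condition \ref{ass:equal-variance}). Where you differ is in the local execution, and the differences are worth recording. First, for $r_{jl}^2$ the paper subtracts the two diagonal-block identities and claims the left side has rank at most $K_j < T$ while the right side would have rank $T$; strictly speaking the difference of two rank-$K_j$ terms can have rank up to $2K_j$, so the paper's stated bound is loose, whereas your spiked-covariance argument ($r_{jl}^2$ is forced to equal the smallest eigenvalue of the diagonal block because the low-rank part has a nontrivial kernel whenever $K_j<T$) needs only Condition \ref{ass:num-latent-bound} as stated and is airtight. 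Second, the paper never isolates $\omega_0^2$ at all; its equality is silently absorbed into the uniqueness citation, while your determinant identity ($\det(\bI-\bC)=1$ for any DAG-structured $\bC$, hence $\det\bSigma=(\omega_0^2)^M$ is preserved under orthogonal conjugation) settles it explicitly and elementarily. One caution on your final unpacking: from $\bC=\tilde{\bQ}^T\bC^*\tilde{\bQ}$ you cannot conclude $\bC^L_{j'j}=\bC^{L*}_{j'j}$ and $\bC^K_{j'j}=\bQ_{j'}^T\bC^{K*}_{j'j}\bQ_j$ as separate identities, because the Kronecker factorization $\bC_{(j',j)}=\bC^L_{j'j}\otimes\bC^K_{j'j}$ is only determined up to reciprocal scaling of the two factors; however, the theorem asserts only the transformation of the per-function-pair blocks $\bC_{j'jl'l}$, which follows directly from the block identity, so this does not affect the conclusion. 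Finally, the step you flag as delicate --- that node-level acyclicity induces a coordinate-level DAG so the equal-variance uniqueness theorem applies to the $M$-dimensional SEM --- is precisely the same leap the paper takes when citing Aragam and Zhou, so your argument is no less rigorous than the published one on that point.
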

    \begin{proof}
        The proof is in Appx. \ref{pro:-equivalence-class}
    \end{proof}
Intuitively speaking, this indicates though we choose different orthogonal basis functions to map $\bY_{jl}$, the spaces spanned by these orthogonal functional basis spaces are the same. Therefore, we can obtain the true causal structure once we get any equivalent solution $\Theta^e \in \mathfrak{D}$.
    
    Next, we aim to prove that our regularized EM algorithm is capable of discovering the true causal order and parameters when the initial parameters $\Theta^{(0)}$ are close to the true parameters $\Theta^*$. 
    We give the definition of population analogs of $F_n$ and $G_n$ in Definition \ref{def-QFG}. 
    \begin{definition}[Population analogs] \label{def-QFG}
    Define $F$ and $G$ as the population analogs of $F_n$ and $G_n$ respectively, i.e., 
   \begin{align*}
    F(\bB,\Theta') &= \int \sum_{j=1}^P \sum_{l=1}^{L_j} \mathbb{E}_{\bx|\bY;\Theta'} \|\bY_{jl}-\bB_j\bx_{jl}\|^2_2 p(\bY;\Theta^*) {\rm d}\bY, \\
        G(\bC, \Theta') &= \int \mathbb{E}_{\bx|\bY;\Theta'}\| \bx - \bx \bC\|_2^2 p(\bY;\Theta^*) {\rm d}\bY.
    \end{align*}
Using the Strong Law of Large Numbers, we can observe that as $n$ approaches infinity, the results $F_n$ and $G_n$ converge almost surely to $F$ and $G$ respectively.        
    \end{definition}
    Theorem \ref{the:identifiability} shows the true parameters $\Theta^*$ can maximize the population log-likelihood function and satisfy the self-consistency property \citep{mclachlan2007algorithm}.
    \begin{theorem}[Self-consistency] \label{the:identifiability}
        When Conditions \ref{ass:equal-variance} and \ref{ass:bound} hold, we can obtain $\Theta^*$ by minimizing $G(\cdot,\Theta^*)$ and $F(\cdot, \Theta^*)$. 
    \end{theorem}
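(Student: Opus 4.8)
The plan is to show that running the population M-step with the E-step frozen at $\Theta^*$ returns $\Theta^*$, i.e., that $\Theta^*$ globally minimizes both $F(\cdot,\Theta^*)$ and $G(\cdot,\Theta^*)$ and that the closed-form variance updates reproduce $\br^*$ and $\omega_0^{2*}$. The first reduction is to note that in $F(\bB,\Theta^*)$ and $G(\bC,\Theta^*)$ the inner posterior expectation $\mathbb{E}_{\bx|\bY;\Theta^*}$ and the outer marginal $p(\bY;\Theta^*)$ are both taken at the truth, so by the tower property the iterated integrals collapse to single expectations under the true joint law of $(\bX,\calY)$. Since $F$ depends only on $\bB$ and $G$ only on $\bC$, the two can be treated separately, and once their minimizers are found, $\br^*$ and $\omega_0^{2*}$ follow by substitution into Eqs. (\ref{equ:M-step-u2}) and (\ref{equ:m-step-c3}).

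For the $\bB$ subproblem I would expand $\|\bY_{jl}-\bB_j\bx_{jl}\|_2^2$ and use the constraint $\bB_j^T\bB_j=\bI$: the posterior-covariance term becomes ${\rm tr}(\hat\bSigma_j)$, which is constant in $\bB_j$, so minimizing $F$ reduces to maximizing ${\rm tr}(\bB_j^T\bA_j)$ subject to $\bB_j^T\bB_j=\bI$, with $\bA_j=\sum_l\mathbb{E}[\bY_{jl}\bx_{jl}^T]$ (again via the tower property to replace the posterior mean by the latent). Under the generative model $\bY_{jl}=\bB_j^*\bx_{jl}+\bvarepsilon_{jl}$ with mean-zero observation noise independent of $\bx_{jl}$, this gives $\bA_j=\bB_j^*\bS_j$ with $\bS_j=\sum_l\mathbb{E}[\bx_{jl}\bx_{jl}^T]$, which Condition \ref{ass:bound} makes positive definite. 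The orthogonal Procrustes / polar-decomposition solution of the trace maximization is then exactly $\bB_j^*$ (up to the rotation $\bQ_j$ of Theorem \ref{the:-equivalence-class}), matching the M-step; Condition \ref{ass:num-latent-bound} ($K_j<T$) guarantees $\bB_j^*$ can have orthonormal columns so the decomposition is identifiable.

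The $\bC$ subproblem is the crux. Collapsing $G$ gives $G(\bC,\Theta^*)=\mathbb{E}_{\bx\sim\Theta^*}\|\bx-\bx\bC\|_2^2={\rm tr}((\bI-\bC)^T\bSigma^*(\bI-\bC))$, where $\bSigma^*=\mathbb{E}[\bx^T\bx]=(\bI-\bC^*)^{-T}\omega_0^{2*}(\bI-\bC^*)^{-1}$ by the linear SEM and the equal-variance Condition \ref{ass:equal-variance} ($\bOmega^*=\omega_0^{2*}\bI$); evaluating at $\bC^*$ gives $G(\bC^*,\Theta^*)=M\omega_0^{2*}$. I would then write $G(\bC)=\sum_j(\be_j-\bC_{:,j})^T\bSigma^*(\be_j-\bC_{:,j})$, recognize each summand as the residual variance of regressing latent coordinate $j$ on the predictors permitted by the ordering of $\bC$, and lower-bound it by the corresponding sequential conditional variance. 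Using that the product of sequential conditional variances along any ordering equals $\det\bSigma^*$ and that $\det(\bI-\bC)=1$ for any DAG, AM--GM gives $G(\bC)\ge M(\det\bSigma^*)^{1/M}=M\omega_0^{2*}$. The equal-variance condition forces every true conditional variance to equal $\omega_0^{2*}$, so the true ordering attains equality; conversely, the AM--GM equality case requires all conditional variances equal, which by the equal-variance identifiability argument of \citet{van2013ell_} pins the minimizer to the true DAG up to the equivalence class. Since $\bC^*$ obeys the Kronecker parametrization $\bC^L\otimes\bC^K$ and already attains the unconstrained-DAG minimum, it also minimizes over the structured feasible set.

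Combining the two subproblems and the variance plug-ins establishes self-consistency, with any alternative minimizer lying in $\mathfrak{D}$ by Theorem \ref{the:-equivalence-class}; Condition \ref{ass:bound} is invoked throughout to keep $\bSigma^*$ and $\bS_j$ nondegenerate with positive, finite eigenvalues so that the determinants are positive and the regression and Procrustes subproblems are well posed. I expect the main obstacle to be the $\bC$ part: proving global optimality over all DAGs rather than mere stationarity, and characterizing the AM--GM equality case, is exactly where the equal-variance identifiability must be imported; by contrast the $\bB$ Procrustes step and the variance updates are routine.
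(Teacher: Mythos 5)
Your proposal is correct, and its overall architecture coincides with the paper's: decouple into the $\bB$-subproblem and the $\bC$-subproblem, solve the former by an orthogonal-Procrustes/polar-factor argument, and recover $\br^*$ and $\omega_0^{2*}$ by substitution into the variance updates. The genuine difference is in the $\bC$ part. The paper disposes of it in one line by citing Lemma 5.1 of \citet{aragam2015learning}, which asserts that under the equal-variance condition the population least-squares score ${\rm tr}\bigl((\bI-\bC)^T\bSigma^*(\bI-\bC)\bigr)$ over DAGs is uniquely minimized at $\bC^*$; you instead re-derive this fact directly: per-ordering minimization produces the sequential conditional variances, whose product equals $\det\bSigma^*=(\omega_0^{2*})^M$ because $\det(\bI-\bC^*)=1$ for any DAG, and AM--GM then yields the lower bound $M\omega_0^{2*}$, attained by $\bC^*$. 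This buys transparency: it exposes exactly where Condition \ref{ass:equal-variance} enters (the AM--GM equality case), it proves attainment as well as uniqueness (the paper's proof, which speaks only of uniqueness of the optimizers, is terse on the attainment half of self-consistency), and your remark that the Kronecker/block-structured feasible set is a subset of coordinate-level DAGs containing $\bC^*$, so the unconstrained minimum carries over, makes explicit a point the paper leaves implicit. Note, however, that your route does not fully eliminate the external input: the equality case of AM--GM still requires the equal-variance identifiability result to conclude that an ordering with all conditional variances equal must reproduce $\bC^*$, so the citation is reduced in scope rather than removed. One small imprecision: with the E-step frozen at $\Theta^*$, the Procrustes maximizer is exactly $\bB_j^*$, since the orthogonal polar factor of $\bA_j=\bB_j^*\bS_j$ is unique when $\bS_j \succ 0$; the rotation ambiguity $\bQ_j$ belongs to the equivalence class of Theorem \ref{the:-equivalence-class}, not to this fixed-$\Theta^*$ subproblem.
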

    \begin{proof}
        The proof is in Appx. \ref{pro:identifiability}. 
    \end{proof}
Next, we introduce the theorem related to causal structure. We define the causal order $\pi$ in Definition \ref{def:causal-order}.
    \begin{definition} \label{def:causal-order}
Since $\bW$ is the adjacency matrix of a DAG $\mathcal{G}$, the nonzero entries of $\bW$ define the causal order of graph $\pi \in \mathbb{S}_P$, which can be represented by a permutation over $1,2,...,P$. $\pi(i)$ represents the position of node $i$ in the order. A causal order $\pi$ is consistent with a DAG $\mathcal{G}$ if and only if:
    \begin{align}
        \bW_{ij} \ne 0 \Rightarrow \pi(i) < \pi(j).
    \end{align}  
    \end{definition}
With abusive use of notation, we denote $\bC(\pi)$ to address this $\bC$ is consistent with causal order $\pi$. Then define $\calC(\pi)$ as the set of $\bC(\pi)$ that has the same causal order $\pi$, i.e., $\bC(\pi)\in \calC(\pi)$. Denote $\bC^*_{\Theta}(\pi) = \underset{\bC(\pi) \in \calC(\pi)}{\arg \min} G(\bC(\pi), \Theta)$. Let $\Pi_0^*$ be the set of all causal orders consistent with $\bC^*$. Since Theorem \ref{the:identifiability} holds, we have $\bC^*_{\Theta^*}(\pi_0) = \bC^*, \forall \pi_0 \in \Pi_0^*$. 
    \begin{condition}[Omega-min] \label{ass:omega-min}
        Under Conditions \ref{ass:equal-variance} and \ref{ass:bound}, for all $\pi \notin \Pi_0^*, \exists \eta_1 > 0$ that:
        \begin{equation}
            G(\bC^*, \Theta^*) - G(\bC^*_{\Theta^*}(\pi), \Theta^*) < - \eta_1.
        \end{equation}
    \end{condition}
  
    Condition \ref{ass:omega-min} assumes that if we restrict our model to a wrong causal order $\pi' \notin \Pi_0^*$, $G(\bC^*_{\Theta^*}(\pi'),\Theta^*)$ will increase by at least $\eta_1$. This is similar to the Omega-min condition in \citet{van2013ell_}, 
    and is used to justify the precision of our true model.
    

    \begin{lemma} \label{lem:extend-bound}
        Under Condition \ref{ass:bound} and \ref{ass:num-latent-bound}, $\exists \tr_1$, the following inequalities hold for $\Theta \in \mathbb{B}_2(\Theta^*,\tr_1)$:
        
        (1) $\underset{\Theta \in \mathbb{B}_2(\Theta^*,\tr_1)}{\max} \bbE_{\bY} \bbE_{\bx|\bY;\Theta} (\|\bx\|_2^8) < \infty$; 
        

        (2)  $\underset{\Theta \in \mathbb{B}_2(\Theta^*,\tr_1)}{\max} \underset{j,l}{\max} \ \bbE_{\bY} \bbE_{\bx|\bY;\Theta} (\| \bY_{jl} \hbu_{jl, \Theta, \bY}^T  \|_F^4) < \infty$;
        
        (3) $\underset{\Theta \in \mathbb{B}_2(\Theta^*,\tr_1)}{\min} \text{\rm minEig}({\rm Cov}(\hbu_{\Theta^*, \bY}) + \hat{\bSigma}_{\Theta^*}) > 0$;
            
        (4) $\underset{\Theta \in \mathbb{B}_2(\Theta^*,\tr_1)}{\min} \underset{j,l}{\min} \ \sigma_{\min} (\bbE_{\bY} \bbE_{\bx|\bY;\Theta}(\bY_{jl} \hbu_{jl, \Theta, \bY}^T ))> 0$;
        
where ${\rm minEig}(\cdot)$ is the minimum eigenvalue of the matrix. $\sigma_{\min}(\bA)$ is $k$-th maximum singular value of the matrix for $\bA \in \mathbb{R}^{T\times k}$, where $\sigma_{\min}(\cdot) > 0$ shows that the matrix is column full rank. $\mathbb{B}_2(\Theta^*,r):= \{\Theta|D(\Theta, \Theta^*) \leq r\}$.
    \end{lemma}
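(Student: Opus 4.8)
The plan is to exploit the fact, guaranteed by Proposition \ref{pro:represent}, that the posterior $p(\bx\mid\bY;\Theta)$ is Gaussian with mean $\hbu_{\Theta,\bY}$ that is \emph{linear} in $\bY$ and covariance $\hat{\bSigma}_\Theta$ that does not depend on $\bY$, and that both the linear map and $\hat{\bSigma}_\Theta$ are continuous (indeed smooth) functions of $\Theta$. Since the data are generated as $\bY\sim\mathcal{N}(\mathbf{0},\bSigma_{\bY}(\Theta^*))$, for each fixed $\Theta$ the pair $(\bY,\hbu_{\Theta,\bY})$ is jointly Gaussian. The first preliminary step is therefore to choose $\tr_1$ small enough that the whole ball $\mathbb{B}_2(\Theta^*,\tr_1)$ lies in the non-degenerate parameter region: Condition \ref{ass:bound} gives $r_{jl}^{2*}<\infty$ and $\mathrm{minEig}(\bSigma^*)\ge\eta_{\bSigma^*}>0$ at $\Theta^*$, and Condition \ref{ass:num-latent-bound} gives $K_j<T$; by continuity these persist, with slightly relaxed constants, on a sufficiently small ball, so that $\hat{\bSigma}_\Theta$ stays bounded and the linear maps stay bounded and continuous throughout $\mathbb{B}_2(\Theta^*,\tr_1)$. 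Because $\Theta$ ranges over a finite-dimensional space, this ball is compact, which is what lets me turn pointwise finiteness or positivity into uniform bounds via the extreme value theorem.

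For claims (1) and (2) I would argue as follows. Conditioned on $\bY$, we have $\bx\sim\mathcal{N}(\hbu_{\Theta,\bY},\hat{\bSigma}_\Theta)$, so $\bbE_{\bx|\bY}\|\bx\|_2^8$ is a polynomial of degree eight in the entries of $\hbu_{\Theta,\bY}$ and of $\hat{\bSigma}_\Theta$, by the Gaussian (Isserlis/Wick) moment formulas. Taking the outer expectation over $\bY$ and using that $\hbu_{\Theta,\bY}$ is linear in the Gaussian vector $\bY$, the result is a finite linear combination of moments of $\bY$ up to order eight, each finite because $\bY$ is Gaussian with finite covariance; the same holds for claim (2) after noting that $\bY_{jl}\hbu_{jl,\Theta,\bY}^T$ is quadratic in $\bY$, so its fourth Frobenius power has an expectation that is again an order-eight Gaussian moment. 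In both cases the resulting quantity is a continuous function of $\Theta$ on the compact ball, so Weierstrass delivers a finite maximum.

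Claim (3) is the most direct: the expression carries only the fixed subscript $\Theta^*$ and does not vary with the minimizing $\Theta$, so the outer minimum is over a constant. By the law of total variance, $\mathrm{Cov}(\hbu_{\Theta^*,\bY})+\hat{\bSigma}_{\Theta^*}=\mathrm{Cov}_{\Theta^*}(\bx)=\bSigma^*$, whose minimum eigenvalue is $\ge\eta_{\bSigma^*}>0$ by Condition \ref{ass:bound}; here I use the self-consistency of the posterior under $\Theta^*$ established in Theorem \ref{the:identifiability}.

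For claim (4) I would first analyze the matrix $\bbE_{\bY}\bbE_{\bx|\bY;\Theta}(\bY_{jl}\hbu_{jl,\Theta,\bY}^T)=\bbE_{\bY}(\bY_{jl}\hbu_{jl,\Theta,\bY}^T)\in\mathbb{R}^{T\times K_j}$ at $\Theta=\Theta^*$, writing $\bY_{jl}=\bB_j^*\bx_{jl}+\bvarepsilon_{jl}$ and $\hbu_{jl,\Theta^*,\bY}$ as the linear posterior mean, and showing it has full column rank $K_j$ with $\sigma_{\min}>0$. This is where Condition \ref{ass:num-latent-bound} is essential, since $K_j<T$ is exactly what permits a $T\times K_j$ matrix to be column full rank, while Condition \ref{ass:bound} supplies the non-degeneracy of the covariance structure that prevents the cross-moment from collapsing. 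Having positivity at $\Theta^*$, I would invoke continuity of the smallest singular value together with the continuity of the cross-moment in $\Theta$ to conclude that $\sigma_{\min}$ remains bounded below by a positive constant on a possibly smaller ball, shrinking $\tr_1$ if needed, and then take the minimum over the finite index set $\{(j,l)\}$ and over the compact ball. I expect claim (4) to be the main obstacle: unlike (1)--(3), it requires an explicit structural computation of the cross-moment at $\Theta^*$ to certify full column rank, and only after that does the soft continuity-plus-compactness argument apply uniformly.
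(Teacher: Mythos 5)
Your proposal follows essentially the same route as the paper's proof: both establish the key identities at $\Theta^*$ --- the law-of-total-variance identity $\mathrm{Cov}(\hbu_{\Theta^*,\bY})+\hat{\bSigma}_{\Theta^*}=\bSigma^*$ for (3), and the cross-moment identity $\bbE_{\bY}\bbE_{\bx|\bY;\Theta^*}(\bY_{jl}\hbu_{jl,\Theta^*,\bY}^T)=\bB_j^*\bSigma^*_{jl}$ (full column rank via Conditions \ref{ass:bound} and \ref{ass:num-latent-bound}) for (4) --- and then extend to a small ball using continuity of the posterior maps in $\Theta$, you via compactness and Weierstrass, the paper via explicit eigenvalue/singular-value perturbation bounds (its Lemmas \ref{lem:pertubation-eig} and \ref{lem:pertubation-sing}); your treatment of (1)--(2) by Isserlis/Wick moments just fills in what the paper asserts as immediate from continuity. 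The one divergence is (3): you read the $\Theta^*$ subscripts literally, so the minimum is over a constant, whereas the paper's proof (and its subsequent definition of $\mathtt{s}_{\inf}$ used in Lemma \ref{lem:error-bound}) treats that subscript as a typo and bounds $\mathrm{minEig}(\mathrm{Cov}(\hbu_{\Theta,\bY})+\hat{\bSigma}_{\Theta})$ uniformly over the ball; your own continuity-plus-perturbation argument from (4), applied to the eigenvalue instead of the singular value, supplies exactly that uniform step, so nothing essential is missing.
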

    \begin{proof}
        The proof is in Appx. \ref{pro:extend-bound}.
    \end{proof}
        Lemma \ref{lem:extend-bound} shows that the posterior distribution $p(\bx|\bY;\Theta)$ is not degraded and has bounded variance when $\Theta \in \mathbb{B}_2(\Theta^*,\tr_1)$.
        We denote
        \begin{align*}
            \underset{\Theta \in \mathbb{B}_2(\Theta^*,\tr_1)}{\sup} & \bbE_{\bY} \bbE_{\bx|\bY;\Theta} (\|\bx\|_2^8) = \mathtt{x}_{\sup}^8, \\
            \underset{\Theta \in \mathbb{B}_2(\Theta^*,\tr_1)}{\sup} & \underset{j,l}{\sup} \ \bbE_{\bY} \bbE_{\bx|\bY;\Theta} (\| \bY_{jl} \hbu_{jl, \Theta, \bY}^T  \|_F^4) = \mathtt{y}_{\sup}^4, \\
            \underset{\Theta \in \mathbb{B}_2(\Theta^*,\tr_1)}{\inf} & \text{\rm minEig}({\rm Cov}(\hbu_{\Theta, \bY}) + \hat{\bSigma}_{\Theta}) = \mathtt{s}_{\inf}, \\
            \underset{\Theta \in \mathbb{B}_2(\Theta,\tr_1)}{\inf} & \underset{j,l}{\min} \ \sigma_{\min} (\bbE_{\bY} \bbE_{\bx;\Theta}(\bY_{jl} \hbu_{jl, \Theta, \bY}^T )) = \mathtt{b}_{\inf},
        \end{align*}
 where $\mathtt{x}^8_{\sup}, \mathtt{y}^4_{\sup}, \mathtt{s}_{\sup}, \mathtt{b}_{\inf} > 0$ are universal constants depended on $\tr_1$. 

    \begin{lemma} \label{lem:extend-omega-min}
    Under Condition \ref{ass:omega-min}, $\exists \tr_2, \forall \Theta \in \mathbb{B}_2(\Theta^*, \tr_2)$, denote $\Pi_\Theta^* = \{  \pi|\pi = \arg\min_{\pi'} G(\bC^*_\Theta(\pi'), \Theta)$\} and $\bC^*_{\Theta} = \underset{\bC^*_\Theta(\pi)}{\arg \min} G(\bC^*_\Theta(\pi);\Theta)$. We have:
    
    (1) $\Pi_{\Theta}^* = \Pi_0^*$, 
    
    (2) For all $\pi \notin \Pi_{\Theta}^{*}, \exists 0<\eta_2<\eta_1$ that:
    \begin{equation}
         G(\bC^*_{\Theta}, \Theta) - G(\bC^*_{\Theta}(\pi), \Theta) < - \eta_2.
    \end{equation}
\end{lemma}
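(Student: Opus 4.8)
The plan is to reduce the statement to a continuity-plus-finiteness argument, in the spirit of the neighborhood extension of the Omega-min condition in \citet{van2013ell_}. For each fixed causal order $\pi \in \mathbb{S}_P$ define the scalar function $g_\pi(\Theta) := G(\bC^*_\Theta(\pi), \Theta)$, so that $\bC^*_\Theta$ attains $\min_{\pi'} g_{\pi'}(\Theta)$ and $\Pi_\Theta^* = \arg\min_{\pi'} g_{\pi'}(\Theta)$. Since $\mathbb{S}_P$ is finite (there are $P!$ orders), it suffices to prove each $g_\pi$ is continuous at $\Theta^*$ and then select a single radius $\tr_2$ that works simultaneously for all $\pi$.

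First I would establish joint continuity of $(\bC,\Theta)\mapsto G(\bC,\Theta)$ on $\mathbb{B}_2(\Theta^*,\tr_1)$. By Proposition \ref{pro:represent} the posterior $p(\bx\mid\bY;\Theta)$ is Gaussian with mean $\hbu_{\Theta,\bY}$ (linear in $\bY$) and covariance $\hat{\bSigma}_\Theta$, both continuous in $\Theta$ through the forward--backward recursions, with all required inverses well defined by Condition \ref{ass:bound}. Writing $G(\bC,\Theta)=\bbE_\bY\|\hbu_{\Theta,\bY}-\hbu_{\Theta,\bY}\bC\|_2^2+{\rm tr}((\bI-\bC)^T\hat{\bSigma}_\Theta(\bI-\bC))$, the integrand is continuous in $\Theta$, and the uniform moment bounds of Lemma \ref{lem:extend-bound}(1)--(2) supply a dominating function, so dominated convergence passes the limit through the integral. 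Next I would show the constrained minimizer $\bC^*_\Theta(\pi)$ is continuous in $\Theta$: for fixed $\pi$, $\calC(\pi)$ is a fixed linear subspace and $G(\cdot,\Theta)$ is a quadratic on it whose Hessian is governed by ${\rm Cov}(\hbu_{\Theta,\bY})+\hat{\bSigma}_\Theta$, which Lemma \ref{lem:extend-bound}(3) makes strictly positive definite. Hence the minimizer is unique and solves a linear system with coefficients continuous in $\Theta$, and a Berge-type maximum-theorem argument yields continuity of $\bC^*_\Theta(\pi)$, and therefore of $g_\pi$, at $\Theta^*$.

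With continuity in hand, fix any $\eta_2\in(0,\eta_1)$ and use finiteness of $\mathbb{S}_P$ to choose $\tr_2\le\tr_1$ so small that $|g_\pi(\Theta)-g_\pi(\Theta^*)|<(\eta_1-\eta_2)/2$ for every $\pi$ and every $\Theta\in\mathbb{B}_2(\Theta^*,\tr_2)$. By Theorem \ref{the:identifiability} we have $\bC^*_{\Theta^*}(\pi_0)=\bC^*$, hence $g_{\pi_0}(\Theta^*)=G(\bC^*,\Theta^*)$ for all $\pi_0\in\Pi_0^*$, while Condition \ref{ass:omega-min} gives $g_\pi(\Theta^*)>G(\bC^*,\Theta^*)+\eta_1$ for $\pi\notin\Pi_0^*$. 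A three-term split then gives, for $\pi\notin\Pi_0^*$ and $\pi_0\in\Pi_0^*$,
\[
 g_\pi(\Theta)-g_{\pi_0}(\Theta) > -\tfrac{\eta_1-\eta_2}{2}+\eta_1-\tfrac{\eta_1-\eta_2}{2}=\eta_2>0 ,
\]
which forces $\Pi_\Theta^*\subseteq\Pi_0^*$ and, since $G(\bC^*_\Theta,\Theta)=\min_{\pi'}g_{\pi'}(\Theta)\le g_{\pi_0}(\Theta)$, yields the gap inequality (2) for every $\pi\notin\Pi_0^*$.

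The main obstacle I anticipate is the reverse inclusion $\Pi_0^*\subseteq\Pi_\Theta^*$ required for the equality in claim (1) (and with it the full strength of claim (2), which is stated for $\pi\notin\Pi_\Theta^*$ rather than $\pi\notin\Pi_0^*$). The gap argument only confines the minimizing orders to $\Pi_0^*$; to keep \emph{every} consistent order a minimizer one must show the restricted minima $g_{\pi_0}(\Theta)$ remain tied across all $\pi_0\in\Pi_0^*$. This tie is immediate at $\Theta^*$, since $\bC^*$ lies in every $\calC(\pi_0)$ and is globally optimal, but under perturbation the unique minimizers over different subspaces $\calC(\pi_0)$ may drift apart so the tie need not persist automatically. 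The clean case is when the true causal order is unique ($\Pi_0^*$ a singleton), where the strictly-below-all-others order is the unique minimizer and equality follows directly from the displayed gap; in the general case I expect one must exploit the equal-variance structure of Condition \ref{ass:equal-variance} together with the non-degeneracy of Conditions \ref{ass:bound}--\ref{ass:num-latent-bound}, using that the per-node conditional-variance decomposition of $G$ is invariant under re-orderings consistent with the same DAG and controlling this invariance under the perturbation. Carefully securing this tie-preservation is, in my view, the crux of the lemma.
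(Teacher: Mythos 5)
Your core argument is exactly the paper's proof: for $\pi \notin \Pi_0^*$, use continuity of $G$ in $\Theta$ to make $|G(\bC^*,\Theta)-G(\bC^*,\Theta^*)|$ and $|G(\bC^*_\Theta(\pi),\Theta)-G(\bC^*_\Theta(\pi),\Theta^*)|$ each smaller than $(\eta_1-\eta_2)/2$, insert the Condition \ref{ass:omega-min} gap at $\Theta^*$ via the same three-term split, and use finiteness of $\mathbb{S}_P$ to select a single radius $\tr_2$. One remark in your favor: your continuity treatment is actually more careful than the paper's. The paper quantifies ``for all $\bC$ there exists $\tr_2$'' and then applies this with $\bC=\bC^*_\Theta(\pi)$, which itself depends on $\Theta$ --- a circular quantification. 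Your route through the value function $g_\pi(\Theta)=G(\bC^*_\Theta(\pi),\Theta)$, with continuity of the constrained minimizer deduced from the positive-definite Hessian (Lemma \ref{lem:extend-bound}(3)) and a Berge-type maximum-theorem argument, removes that circularity.

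The ``crux'' you flag --- the reverse inclusion $\Pi_0^*\subseteq\Pi_\Theta^*$, i.e.\ that the restricted minima over the different subspaces $\calC(\pi_0)$, $\pi_0\in\Pi_0^*$, remain tied after perturbing $\Theta^*$ --- is not resolved in the paper either. The paper derives precisely what you derive (the $\eta_2$ gap for every $\pi\notin\Pi_0^*$, hence $\Pi_\Theta^*\subseteq\Pi_0^*$) and then simply asserts that claims (1) and (2) are ``simultaneously'' obtained. So you have not missed an ingredient that the paper possesses; you have located a genuine looseness in the paper's own argument, and your discussion of why the tie need not persist under perturbation (distinct constrained minimizers drifting apart) is a correct diagnosis of it. It is also worth noting that the weaker statement you do establish --- $\Pi_\Theta^*\subseteq\Pi_0^*$ together with the $\eta_2$ gap for all $\pi\notin\Pi_0^*$ --- is all that is invoked downstream: the contradiction argument in the proof of Lemma \ref{lem:error-bound}(1) begins from ``suppose $\hat{\pi}\notin\Pi_0^*$,'' so nothing later requires the set equality or a gap for orders in $\Pi_0^*\setminus\Pi_\Theta^*$.
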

    \begin{proof}
        The proof is in Appx. \ref{pro:extend-omega-min}.
    \end{proof}
    
    Lemma \ref{lem:extend-omega-min} extends Condition \ref{ass:omega-min} from $\Theta^*$ to all $\Theta \in \mathbb{B}_2(\Theta^*, \tr_2)$. It states that when $\Theta$ is close to $\Theta^*$, we can still identify the true causal order by minimizing $G(\bC,\Theta)$. Taking $\tr=\min(\tr_1,\tr_2)$, Lemma \ref{lem:error-bound} and \ref{lem:polar-decomposition} provide the lower bound for the error when estimating $\hat{\bC}$ and $\hat{\bB}$, in a single iteration of the regularized EM iteration. 

    \begin{lemma} \label{lem:error-bound}
    Under Conditions \ref{ass:equal-variance}, \ref{ass:bound}, \ref{ass:omega-min} and suppose that we solve the optimization of Eq. (\ref{equ:M-step-c1}) with specified regularization parameters $\lambda$ and $\Theta \in \mathbb{B}_2(\Theta^*, \tr)$. Since $\mathbb{B}_2(\Theta^*,\tr)$ is a contact set, we denote $ \mathtt{c}_{\sup} \doteq \underset{\Theta \in \mathbb{B}_2(\Theta^*,\tr)}{\sup} \underset{\pi}{\sup} \| \bC^*_{\Theta}(\pi) \|_{l_1/F}$ and $\mathtt{d}_{\sup}^4 = \underset{\Theta \in \mathbb{B}_2(\Theta^*,\tr)}{\sup} \underset{\pi}{\sup} \| \bI - \bC^*_{\Theta}(\pi) \|_F^4$.
    If the following conditions are satisfied for $\varrho_1, \varrho_2, \varrho_3 \in (0,1), \delta_1 \in (0, 1/2)$:
    \begin{align*}
        & \eta_2 > 2 \sqrt{\frac{\mathtt{d}_{\sup}^4 \mathtt{x}_{\sup}^4}{\varrho_1 N}} + \lambda(2\delta_1 + 1)  \mathtt{c}_{\sup}, \\
        & \frac{\mathtt{d}_{\sup}^2 \mathtt{x}_{\sup}^4}{\lambda^2 N \delta_1^2} < 1, \\
        & 1 - 2\varrho_1 - P!M\varrho_2 - \varrho_3 > 0, \\
        & \mathtt{s}_{\inf} > \sqrt{\frac{\mathtt{x}_{\sup}^4}{N} + \sqrt{\frac{\mathtt{x}_{\sup}^8}{\varrho_3 N}}}.
    \end{align*}
  Denote $\hat{\bC}$ and $\hat{\pi}$ as the matrix and corresponding causal order by solving Eqs. (\ref{equ:M-step-c1}) with $\Theta'=\Theta$. Then the following statements hold true:

    (1) With probability at least $1 - 2\varrho_1 - P!M\varrho_2$, $\hat{\pi} \in \Pi_0^*$;

    (2) With probability at least $1 - 2\varrho_1 - P!M\varrho_2 - \varrho_3$,
    \begin{equation*}
        \|\hat{\bC} - \bC^*_{\Theta}\|_F^2 \leq \frac{2\sqrt{ \frac{\mathtt{d}_{\sup}^4 \mathtt{x}_{\sup}^4}{\varrho_1 N}} + \lambda(2\delta_1 + 1)  \mathtt{c}_{\sup}}{\mathtt{s}_{\inf} - \sqrt{\frac{\mathtt{x}_{\sup}^4}{N} + \sqrt{\frac{\mathtt{x}_{\sup}^8}{\varrho_3 N}}}}.
    \end{equation*}
\end{lemma}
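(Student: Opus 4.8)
The plan is to recast both the empirical objective $G_n(\bC,\Theta)$ and its population analog $G(\bC,\Theta)$ as quadratic forms in $\bI-\bC$. Writing the empirical second moment $\hat{\bS}_n = \frac{1}{N}\sum_{n}\hbu_{\Theta,\bY^{(n)}}^T\hbu_{\Theta,\bY^{(n)}} + \hat{\bSigma}_\Theta$ and its population counterpart $\bS = {\rm Cov}(\hbu_{\Theta,\bY}) + \hat{\bSigma}_\Theta$ (using that $\hbu$ is centered since $\bY$ is mean-zero Gaussian), one gets $G_n(\bC,\Theta) = {\rm tr}((\bI-\bC)^T\hat{\bS}_n(\bI-\bC))$ and $G(\bC,\Theta) = {\rm tr}((\bI-\bC)^T\bS(\bI-\bC))$, so the deterministic $\hat{\bSigma}_\Theta$ cancels in the difference and $G_n-G = {\rm tr}((\bI-\bC)^T\bDelta_n(\bI-\bC))$ with $\bDelta_n := \hat{\bS}_n - \bS$ a mean-zero average of i.i.d. matrices. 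First I would control $\bDelta_n$ with two Markov/Chebyshev bounds fed by the eighth-moment bound $\mathtt{x}_{\sup}^8$ of Lemma \ref{lem:extend-bound}: (i) $\|\bDelta_n\|_F \le \sqrt{\mathtt{x}_{\sup}^4/(\varrho_1 N)}$ with probability at least $1-\varrho_1$, via $\mathbb{E}\|\bDelta_n\|_F^2 \le \mathtt{x}_{\sup}^4/N$; and (ii) $\|\bDelta_n\|_F^2 \le \mathtt{x}_{\sup}^4/N + \sqrt{\mathtt{x}_{\sup}^8/(\varrho_3 N)}$ with probability at least $1-\varrho_3$, splitting $\|\bDelta_n\|_F^2$ into its mean plus a fluctuation bounded through its variance. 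The bound $|G_n(\bC)-G(\bC)| \le \|\bI-\bC\|_F^2\,\|\bDelta_n\|_F$ then evaluates, on the population minimizers, to at most $\sqrt{\mathtt{d}_{\sup}^4\mathtt{x}_{\sup}^4/(\varrho_1 N)}$.

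For the order-recovery claim (1), the idea is to show that for every wrong order $\pi\notin\Pi_0^*$ the penalized empirical value at its constrained minimizer strictly exceeds that at a correct order. On one side, the basic inequality gives $G_n(\hat{\bC}^{\pi_0}) + \lambda\|\hat{\bC}^{\pi_0}\|_{l_1/F} \le G_n(\bC^*_\Theta) + \lambda\|\bC^*_\Theta\|_{l_1/F}$, which I would bound above using concentration (i) and $\|\bC^*_\Theta\|_{l_1/F}\le\mathtt{c}_{\sup}$. On the other side, $G_n(\hat{\bC}^\pi) \ge G(\bC^*_\Theta(\pi)) - \sqrt{\mathtt{d}_{\sup}^4\mathtt{x}_{\sup}^4/(\varrho_1 N)}$ combined with the population gap $G(\bC^*_\Theta(\pi)) \ge G(\bC^*_\Theta) + \eta_2$ from Lemma \ref{lem:extend-omega-min}. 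The first displayed hypothesis $\eta_2 > 2\sqrt{\mathtt{d}_{\sup}^4\mathtt{x}_{\sup}^4/(\varrho_1 N)} + \lambda(2\delta_1+1)\mathtt{c}_{\sup}$ is precisely what forces the wrong-order value to be larger. A union bound over the $P!$ orders, the factor $M$ absorbing a per-coordinate deviation event of probability $\varrho_2$, then delivers $\hat\pi\in\Pi_0^*$ with probability at least $1 - 2\varrho_1 - P!M\varrho_2$.

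For the error bound (2), I would work on the event $\hat\pi\in\Pi_0^*$, so that $\bC^*_\Theta$ is feasible for the problem solved by $\hat{\bC}$ and the basic inequality $G_n(\hat{\bC}) - G_n(\bC^*_\Theta) \le \lambda\|\bC^*_\Theta\|_{l_1/F} - \lambda\|\hat{\bC}\|_{l_1/F}$ holds. Expanding the quadratic form yields the restricted-strong-convexity identity $G_n(\hat{\bC}) - G_n(\bC^*_\Theta) = 2{\rm tr}((\hat{\bC}-\bC^*_\Theta)^T\hat{\bS}_n(\bI-\bC^*_\Theta)) + {\rm tr}((\hat{\bC}-\bC^*_\Theta)^T\hat{\bS}_n(\hat{\bC}-\bC^*_\Theta))$, whose quadratic term is at least ${\rm minEig}(\hat{\bS}_n)\,\|\hat{\bC}-\bC^*_\Theta\|_F^2$. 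Using $\|\bDelta_n\|_{op}\le\|\bDelta_n\|_F$ and bound (ii), ${\rm minEig}(\hat{\bS}_n) \ge \mathtt{s}_{\inf} - \sqrt{\mathtt{x}_{\sup}^4/N + \sqrt{\mathtt{x}_{\sup}^8/(\varrho_3 N)}}$, the denominator of the claim, kept positive by the last hypothesis. The cross term is rewritten through the population first-order condition $\bS(\bI-\bC^*_\Theta)=0$ along feasible directions into $2{\rm tr}((\hat{\bC}-\bC^*_\Theta)^T\bDelta_n(\bI-\bC^*_\Theta))$, and together with the group-lasso KKT/cone argument — where $\mathtt{d}_{\sup}^2\mathtt{x}_{\sup}^4/(\lambda^2 N\delta_1^2)<1$ ensures the gradient noise is dominated by $\lambda\delta_1$ — this produces the numerator $2\sqrt{\mathtt{d}_{\sup}^4\mathtt{x}_{\sup}^4/(\varrho_1 N)} + \lambda(2\delta_1+1)\mathtt{c}_{\sup}$. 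Dividing by the min-eigenvalue lower bound gives the stated inequality on the intersection of all events, of probability at least $1 - 2\varrho_1 - P!M\varrho_2 - \varrho_3$.

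The hard part will be the group-lasso cone/compatibility step in claim (2): simultaneously controlling the cross term $2{\rm tr}((\hat{\bC}-\bC^*_\Theta)^T\bDelta_n(\bI-\bC^*_\Theta))$ and the penalty difference $\lambda(\|\bC^*_\Theta\|_{l_1/F}-\|\hat{\bC}\|_{l_1/F})$, because the $l_1/F$ norm acts on Kronecker-structured blocks $\bC_{j'j}=\bC^L_{j'j}\otimes\bC^K_{j'j}$ rather than on a plain vector, so the usual decomposability of the penalty must be re-derived for these groups and the precise role of $\delta_1$ established. A secondary subtlety is justifying the exact constants $M$ and $2\varrho_1$ in the union bound, i.e., identifying which per-coordinate events are needed so that the wrong-order comparison holds uniformly over all $P!$ orders rather than only at a single fixed order.
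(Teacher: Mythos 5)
Your proposal follows the same overall architecture as the paper's proof: Chebyshev concentration of $G_n$ around $G$ using the moments of Lemma \ref{lem:extend-bound}, order recovery by playing the basic inequality against the $\eta_2$-gap of Lemma \ref{lem:extend-omega-min}, and the final bound obtained by dividing the basic-inequality numerator by a high-probability lower bound $\mathtt{s}_{\inf} - \sqrt{\mathtt{x}_{\sup}^4/N + \sqrt{\mathtt{x}_{\sup}^8/(\varrho_3 N)}}$ on the minimum eigenvalue of the empirical second-moment matrix $\hat{\bS}_n$ (the paper's $\bPhi_\Theta$); your part (2) skeleton is essentially the paper's. The genuine gap is in your part (1): you lower-bound the wrong-order value by $G_n(\hat{\bC}^{\pi}) \ge G(\bC^*_{\Theta}(\pi)) - \sqrt{\mathtt{d}_{\sup}^4\mathtt{x}_{\sup}^4/(\varrho_1 N)}$, i.e., you apply the deviation bound $|G_n(\bC)-G(\bC)| \le \|\bI-\bC\|_F^2\,\|\bDelta_n\|_F$ at the \emph{random empirical} minimizer $\hat{\bC}^{\pi}$. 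But $\mathtt{d}_{\sup}$ is defined as a supremum over the \emph{population} minimizers $\bC^*_{\Theta}(\pi)$ only; nothing in your argument controls $\|\bI - \hat{\bC}^{\pi}\|_F$, so this step is unjustified. The paper is organized precisely to avoid evaluating concentration at the estimator: in its basic inequality, $G_n$ appears only at the deterministic (given $\Theta$, $\hat{\pi}$) matrices $\bC^*_{\Theta}$ and $\bC^*_{\Theta}(\hat{\pi})$ (term (I)), while the sole term involving $\hat{\bC}$ is the cross term (II), which is bounded \emph{uniformly in} $\hat{\bC}$ by defining the event $\calE_j$ with a supremum over all directions $\bbeta$ and verifying a group-lasso null-consistency certificate. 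A telltale symptom: your comparison would deliver the threshold $\eta_2 > 2\sqrt{\cdots} + \lambda\mathtt{c}_{\sup}$, whereas the lemma's hypothesis carries $\lambda(2\delta_1+1)\mathtt{c}_{\sup}$ --- the extra $2\delta_1$ is exactly the price of the cross term your accounting omits.

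Relatedly, what you defer as ``the hard part'' is not a peripheral compatibility check but the core of the paper's proof of both claims, and it is where the factors $M$ and $P!$ in the probability arise: the certificate $\bw = \frac{1}{\lambda N}\bbE_{\bX|\calY;\Theta}\bigl(\bX^T \be_j(\hat{\pi})/\delta_1\bigr)$ has mean zero by population stationarity of $\bC^*_{\Theta}(\pi)$ (Lemma \ref{lem:expectation-0}), Chebyshev together with the hypothesis $\mathtt{d}_{\sup}^2\mathtt{x}_{\sup}^4/(\lambda^2 N\delta_1^2) < 1$ gives $P(\|\bw\|_{l_\infty/l_2}\ge 1)\le\varrho_2$ for each fixed column and order (Lemma \ref{lem:null-consistency}), and a union bound over the $M$ columns and $P!$ orders yields the cross-term control of Eq. (\ref{equ:boundII}); until this step is executed, neither claim (1) nor (2) is complete. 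On the positive side, your device of controlling $\|\bDelta_n\|_F$ once, rather than applying Chebyshev to $G_n-G$ matrix by matrix as the paper does, is cleaner: a single event makes the deviation bound hold simultaneously for all $\bC$ with $\|\bI-\bC\|_F^2\le\mathtt{d}_{\sup}^2$, which silently repairs a small uniformity issue in the paper's own treatment of term (I), where Chebyshev is invoked at $\bC^*_{\Theta}(\hat{\pi})$ even though $\hat{\pi}$ is random, without a union bound over orders at that point.
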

    \begin{proof}
        The proof is in Appx. \ref{pro:error-bound}.
    \end{proof}
\begin{lemma} \label{lem:polar-decomposition}
     Under Condition \ref{ass:bound} and \ref{ass:num-latent-bound}, denote $\bB_{\Theta}^*$ as the matrix that minimizes $F(\cdot,\Theta)$ with  $\bB_{\Theta j}^{T*}\bB_{\Theta j}^{*} = \bI, \forall j$ and $\hat{\bB}$ as the optimal solution to $F_n(\cdot, \Theta)$ with $\hat{\bB}_j^T \hat{\bB}_j = \bI, \forall j$. Then if for $\varrho_4, \varrho_5 \in (0,1)$:
     \begin{align*}
         & 1 - P \varrho_4 - P\varrho_5 > 0 , \quad {\rm and} \quad \mathtt{b}_{\inf} - \sqrt{\frac{\mathtt{y}^2_{\sup}}{N} + \sqrt{\frac{ \mathtt{y}^4_{\sup}}{\varrho_5 N}}} > 0,
     \end{align*}
 with probability at least $1 - P\varrho_4 - P\varrho_5$, we have:
    \begin{equation*}
        \|\hat{\bB} - \bB^*_{\Theta}\|_F^2 \leq \frac{P\left(\frac{\mathtt{y}^2_{\sup}}{N} + \sqrt{\frac{ \mathtt{y}^4_{\sup}}{\varrho_4 N}} \right) }{\left(\mathtt{b}_{\inf} - \sqrt{\frac{\mathtt{y}^2_{\sup}}{N} + \sqrt{\frac{ \mathtt{y}^4_{\sup}}{\varrho_5 N}}}\right)^2}.
    \end{equation*}
\end{lemma}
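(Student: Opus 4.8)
The plan is to exploit the closed-form polar-decomposition structure of the $\bB$-update and then turn the statistical closeness of the underlying cross-moment matrices into closeness of their orthonormal polar factors through a matrix-perturbation argument.

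First I would reduce the update, node by node, to an orthogonal Procrustes problem. Since the constraint $\bB_j^T\bB_j=\bI$ forces the quadratic term to be constant, ${\rm tr}(\bB_j\hat{\bSigma}_j\bB_j^T)={\rm tr}(\hat{\bSigma}_j)$, both $F_n(\cdot,\Theta)$ and its population analog $F(\cdot,\Theta)$ (Definition \ref{def-QFG}) decouple across $j$ and, after discarding $\bB_j$-free terms, reduce to $\max_{\bB_j^T\bB_j=\bI}{\rm tr}(\bB_j^T\hat{\bA}_j)$ with the empirical cross-moment $\hat{\bA}_j=\frac1N\sum_{n}\sum_{l}\bY_{jl}^{(n)}\hbu_{jl;\Theta,\bY^{(n)}}^T$ and its population counterpart $\bA_j^*=\bbE_{\bY}\bbE_{\bx|\bY;\Theta}(\sum_l\bY_{jl}\hbu_{jl,\Theta,\bY}^T)$. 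The maximizer of this problem is exactly the orthonormal polar factor, so $\hat{\bB}_j$ and $\bB_{\Theta j}^*$ are the polar factors of $\hat{\bA}_j$ and $\bA_j^*$, respectively; this is the very polar decomposition used to implement the M-step.

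Next I would invoke the Frobenius-norm perturbation bound for the orthonormal polar factor: whenever $\hat{\bA}_j$ and $\bA_j^*$ have full column rank, $\|\hat{\bB}_j-\bB_{\Theta j}^*\|_F\le \frac{2}{\sigma_{\min}(\hat{\bA}_j)+\sigma_{\min}(\bA_j^*)}\|\hat{\bA}_j-\bA_j^*\|_F$. The proof then rests on two quantitative ingredients. First, concentration of the cross-moment matrix: exactly as in the proof of Lemma \ref{lem:error-bound}, the fourth-moment control of Lemma \ref{lem:extend-bound}(2) gives $\bbE\|\hat{\bA}_j-\bA_j^*\|_F^2\le \mathtt{y}_{\sup}^2/N$ by Jensen, and a Chebyshev tail governed by the same fourth moment yields, with probability at least $1-\varrho_4$, $\|\hat{\bA}_j-\bA_j^*\|_F^2\le \frac{\mathtt{y}_{\sup}^2}{N}+\sqrt{\frac{\mathtt{y}_{\sup}^4}{\varrho_4 N}}$. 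Second, a lower bound on $\sigma_{\min}(\hat{\bA}_j)$: running the same concentration with budget $\varrho_5$ and using Weyl's inequality, $\sigma_{\min}(\hat{\bA}_j)\ge \sigma_{\min}(\bA_j^*)-\|\hat{\bA}_j-\bA_j^*\|_F\ge \mathtt{b}_{\inf}-\sqrt{\frac{\mathtt{y}_{\sup}^2}{N}+\sqrt{\frac{\mathtt{y}_{\sup}^4}{\varrho_5 N}}}=:\beta$, where $\sigma_{\min}(\bA_j^*)\ge \mathtt{b}_{\inf}$ comes from Lemma \ref{lem:extend-bound}(4) and the hypothesis $\beta>0$ keeps $\hat{\bA}_j$ full rank so the polar factor is well defined.

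Finally I would combine these pieces. Because $\sigma_{\min}(\bA_j^*)\ge \mathtt{b}_{\inf}\ge\beta$ and $\sigma_{\min}(\hat{\bA}_j)\ge\beta$, the denominator in the perturbation bound is at least $2\beta$, which collapses the prefactor to $1/\beta$ and is exactly why the final estimate carries no extra numerical constant. Squaring, substituting the concentration bound, and summing over $j=1,\dots,P$ via $\|\hat{\bB}-\bB_\Theta^*\|_F^2=\sum_j\|\hat{\bB}_j-\bB_{\Theta j}^*\|_F^2$ produces the factor $P$ in the numerator, while a union bound over the $P$ nodes and the two deviation events gives the stated failure probability $P\varrho_4+P\varrho_5$. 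I expect the main obstacle to be this second step: the polar-factor perturbation bound is only valid while $\hat{\bA}_j$ stays full column rank, so the argument must carefully route the least-singular-value control of Lemma \ref{lem:extend-bound}(4) and the sample-size/probability condition $\mathtt{b}_{\inf}>\sqrt{\mathtt{y}_{\sup}^2/N+\sqrt{\mathtt{y}_{\sup}^4/(\varrho_5 N)}}$ through Weyl's inequality; verifying the trace-term cancellation that reduces the update to a clean Procrustes problem is the other place needing care.
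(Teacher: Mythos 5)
Your proposal is correct and follows essentially the same route as the paper's own proof: the paper likewise reduces the $\bB$-update to a per-node Procrustes/polar-factor problem for the cross-moment matrix $\bZ=\frac{1}{NL_j}\sum_{n,l}\bY^{(n)}_{jl}\hbu^{(n)T}_{jl;\Theta,\bY^{(n)}}$, applies a polar-factor perturbation bound (Li's $\|\hat{\bB}_j-\bB^*_{\Theta j}\|_F\leq \|\bZ-\bar{\bZ}\|_F/\min\{\sigma_{\min}(\bZ),\sigma_{\min}(\bar{\bZ})\}$, equivalent in effect to your $2/(\sigma_{\min}(\hat{\bA}_j)+\sigma_{\min}(\bA_j^*))$ version), controls the numerator by Chebyshev with budget $\varrho_4$, lower-bounds the smallest singular value via Mirsky/Weyl and Lemma \ref{lem:extend-bound} with budget $\varrho_5$, and finishes with a union bound over the $P$ nodes. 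The only differences are cosmetic: your choice of perturbation constant and normalization yield the identical final bound.
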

    \begin{proof}
        The proof is in Appx. \ref{pro:polar-decomposition}.
    \end{proof}
Next, we aim to derive an upper bound for the total error bound of our regularized EM algorithm, i.e., $D(\Theta^{(S)}, \Theta^*)$ for total $S$ iterations in the regularized EM algorithm. Under certain conditions (seeing Conditions \ref{ass:concavity-smoothness} and \ref{ass:Lipschitz-Gradient} in Appendix. \ref{app:converge}), Theorem \ref{the:convergence} and Corollary \ref{rem:asymptotic} establish the convergence properties and error analysis of our regularized EM algorithm. These results hold when the initial solution $\Theta^{(0)}$ is in proximity to the true solution, encompassing scenarios of both finite $N$ and as $N$ approaches infinity. Furthermore, this property still holds when replacing $\Theta^*$ with any equivalent solution $\Theta^{e} \in \mathfrak{D}$. As Theorem \ref{the:-equivalence-class} states, any $\Theta^e \in \mathfrak{D}$ has the same causal structure as $\Theta^*$. Therefore, we show that our regularized EM algorithm can effectively learn the correct causal structure locally.

\begin{theorem} \label{the:convergence}
   Assume Conditions \ref{ass:equal-variance} to \ref{ass:Lipschitz-Gradient} and the conditions in Lemmas \ref{lem:error-bound} and \ref{lem:polar-decomposition} are satisfied, and the EM estimator $M(\Theta) \doteq \underset{\Theta'}{\arg \max}\ Q(\Theta';\Theta) - \lambda \calR(\bC) $ is contractive with parameters $\kappa \in (0,1)$ in the ball $\mathbb{B}_2(\Theta^*, \tr)$. Denote $S$ as the total iterations of the regularized EM algorithm, we have 
        \begin{equation*}
         D(\Theta^{(S)}, \Theta^*) \leq \kappa^S D( \Theta^{(0)}, \Theta^*) + \frac{1}{1-\kappa} \epsilon(\delta / S, N / S, \tr),
     \end{equation*}
where 
    \begin{equation*}
        \delta / S = 2\varrho_1 + MP!\varrho_2 + \varrho_3 + P\varrho_4 + P\varrho_5 + M \varrho_6,
    \end{equation*}
     \begin{equation*}
         \epsilon(\delta / S, N / S, \tr) = \left( \frac{2\sqrt{ \frac{\mathtt{d}_{\sup}^4 \mathtt{x}_{\sup}^4 S}{\varrho_1 N}} + \lambda(2\delta_1 + 1)  \mathtt{c}_{\sup}}{\mathtt{s}_{\inf} - \sqrt{\frac{\mathtt{x}_{\sup}^4 S}{N} + \sqrt{\frac{\mathtt{x}_{\sup}^8 S}{\varrho_3 N}}}} + \frac{P\left(\frac{\mathtt{y}^2_{\sup}S}{N} + \sqrt{\frac{ \mathtt{y}^4_{\sup} S}{\varrho_4 N}} \right) }{\left(\mathtt{b}_{\inf} - \sqrt{\frac{\mathtt{y}^2_{\sup} S}{N} + \sqrt{\frac{ \mathtt{y}^4_{\sup} S}{\varrho_5 N}}}\right)^2}\right)^{1/2} + O((N/S)^{-1/2}).
     \end{equation*}
\end{theorem}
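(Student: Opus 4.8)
The plan is to follow the classical template for analyzing iterative regularized EM \citep{yi2015regularized}: decompose the per-iteration error into a \emph{population contraction} piece and a \emph{statistical} piece, then unroll the resulting linear recursion into a geometric series. Writing one step of the algorithm as $\Theta^{(s)}=\calM_{n}(\Theta^{(s-1)})$ (the sample M-step operator) and inserting the population operator $M$ via the triangle inequality gives
\[
D(\Theta^{(s)},\Theta^*) \;\le\; \underbrace{D\big(\calM_{n}(\Theta^{(s-1)}),\,M(\Theta^{(s-1)})\big)}_{\text{statistical error}} \;+\; \underbrace{D\big(M(\Theta^{(s-1)}),\,M(\Theta^*)\big)}_{\text{population contraction}},
\]
where I use that $\Theta^*$ is a fixed point of $M$, which is exactly the self-consistency statement of Theorem \ref{the:identifiability}. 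The second term is at most $\kappa\,D(\Theta^{(s-1)},\Theta^*)$ by the assumed contractivity of $M$ on $\mathbb{B}_2(\Theta^*,\tr)$.

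To make the single-step bounds applicable I would employ \emph{sample splitting}: partition the $N$ samples into $S$ disjoint batches of size $N/S$ and use a fresh batch at each iteration. This guarantees that the batch used at step $s$ is independent of $\Theta^{(s-1)}$ (which is measurable with respect to the earlier batches), so Lemmas \ref{lem:error-bound} and \ref{lem:polar-decomposition}—which treat the current parameter as fixed—apply verbatim with effective sample size $N/S$. The statistical error is a root-sum-of-squares over the blocks $(\bC,\bB,\br,\omega_0^2)$ of $D$; I would bound the $\bC$-block by Lemma \ref{lem:error-bound} and the $\bB$-block by Lemma \ref{lem:polar-decomposition}, both at sample size $N/S$, while the closed-form updates for $\br$ and $\omega_0^2$ (Eqs. \eqref{equ:M-step-u2} and \eqref{equ:m-step-c3}) concentrate at the parametric rate and contribute only the lower-order $O((N/S)^{-1/2})$ summand. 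Collecting the failure probabilities of the two lemmas and of the $\br,\omega_0^2$ concentration yields the per-iteration budget $\delta/S = 2\varrho_1 + MP!\varrho_2 + \varrho_3 + P\varrho_4 + P\varrho_5 + M\varrho_6$, and the combined bound is precisely $\epsilon(\delta/S, N/S, \tr)$.

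Combining the two pieces gives the one-step recursion $D(\Theta^{(s)},\Theta^*)\le \kappa\,D(\Theta^{(s-1)},\Theta^*)+\epsilon(\delta/S,N/S,\tr)$ on the event that the lemma hypotheses hold for batch $s$. Before unrolling I would verify by induction that every iterate stays inside $\mathbb{B}_2(\Theta^*,\tr)$: if $\Theta^{(s-1)}\in\mathbb{B}_2(\Theta^*,\tr)$ and the signal-to-noise condition $\epsilon\le(1-\kappa)\tr$ holds (implied by the smallness conditions of Lemmas \ref{lem:error-bound}--\ref{lem:polar-decomposition} together with $\Theta^{(0)}$ near $\Theta^*$), then $D(\Theta^{(s)},\Theta^*)\le \kappa\tr+\epsilon\le\tr$, keeping both the contraction and the lemmas applicable. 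Iterating from $s=1$ to $S$ and summing the geometric series $\sum_{t=0}^{S-1}\kappa^t\le(1-\kappa)^{-1}$ yields
\[
D(\Theta^{(S)},\Theta^*)\ \le\ \kappa^S D(\Theta^{(0)},\Theta^*)+\frac{1}{1-\kappa}\,\epsilon(\delta/S,N/S,\tr).
\]
A final union bound over the $S$ iterations—each failing with probability at most $\delta/S$—shows the entire chain holds with probability at least $1-\delta$.

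The main obstacle is the uniform statistical-error step: Lemmas \ref{lem:error-bound} and \ref{lem:polar-decomposition} bound the distance between the sample update and the \emph{population} minimizers $\bC^*_\Theta$, $\bB^*_\Theta$, so I must certify that these population targets coincide with the corresponding blocks of $M(\Theta)$ and that the argument is uniform over $\Theta\in\mathbb{B}_2(\Theta^*,\tr)$; this is where Lemma \ref{lem:extend-omega-min} (stability of the causal order over the ball) and the moment-boundedness guarantees of Lemma \ref{lem:extend-bound} are essential to prevent the posterior moments from blowing up as $\Theta$ ranges over the ball. The secondary difficulty is bookkeeping the $S$-dependence: sample splitting trades $N\mapsto N/S$ while the union bound trades $\delta\mapsto\delta/S$, and one must check that for the prescribed $S$ these still satisfy the smallness conditions of the lemmas so that the recursion and the in-ball induction do not break down.
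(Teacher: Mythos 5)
Your proposal is correct and follows essentially the same route as the paper: the paper's proof simply invokes Lemmas \ref{lem:error-bound}, \ref{lem:polar-decomposition}, \ref{lem:bound-r} and \ref{lem:bound-w} (the latter two giving the $O((N/S)^{-1/2})$ bounds for $\br$ and $\omega_0^2$ that you attribute to the closed-form updates) and then defers to ``the procedures in Theorem 5 of \citet{balakrishnan2017statistical}'', which is precisely the sample-splitting, contraction-plus-statistical-error decomposition, in-ball induction, and geometric-series unrolling that you spell out. Your write-up is more explicit than the paper's one-line proof, but it is the same argument.
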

    \begin{proof}
According to Lemma \ref{lem:error-bound} and \ref{lem:polar-decomposition} above, together with Lemma \ref{lem:bound-r} and \ref{lem:bound-w}  which gives the error bound of $\br$ and $\omega^2_0$ in each EM iteration as $O((N/S)^{-1/2})$ with probability $1-M\varrho_6$, we can prove Theorem \ref{the:convergence} following the procedures in Theorem 5 in \citet{balakrishnan2017statistical}. 
    \end{proof}
    \begin{corollary}[Asymptotic property] \label{rem:asymptotic}
        Based on Theorem \ref{the:convergence}, we have the following two corollaries:
        
        (1) As $N \rightarrow \infty$, by setting $\lambda \sim N^{-1/2 + \nu}$ with $\nu \in (0, 1/2)$, the conditions in Lemma \ref{lem:error-bound} hold. Then $\epsilon(\delta/S, N/S, \tr) = \mathcal{O}((N / S)^{(-1 + 2 \nu) / 4})$, and the total estimation error after $S$ EM iterations can be described as $D(\Theta^{(S)}, \Theta^*) \leq \kappa^S D(\Theta^{(0)}, \Theta^*) + \mathcal{O}((N/S))^{(-1 + 2 \nu) / 4})$.

        (2) Under $S \rightarrow \infty$, $N \rightarrow \infty$ and $N/S \rightarrow \infty$, we have $\Theta^{(S)} \rightarrow \Theta^*$ with probability 1.
    \end{corollary}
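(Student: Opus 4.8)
The plan is to read off both claims directly from the master inequality of Theorem~\ref{the:convergence},
\begin{equation*}
    D(\Theta^{(S)},\Theta^*)\le \kappa^S D(\Theta^{(0)},\Theta^*)+\frac{1}{1-\kappa}\,\epsilon(\delta/S,N/S,\tr),
\end{equation*}
by tracking how its two summands behave under the prescribed scalings. No new recursion is needed: the genuine work sits in (i) confirming that the four preconditions of Lemma~\ref{lem:error-bound} remain satisfiable as $N\to\infty$ under $\lambda\sim N^{-1/2+\nu}$, and (ii) isolating the dominant decay scale inside the closed form of $\epsilon(\delta/S,N/S,\tr)$, both of which feed part~(1); part~(2) then follows by taking limits and upgrading to almost sure convergence.

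For part~(1), first I would substitute $\lambda\sim N^{-1/2+\nu}$ into the preconditions of Lemma~\ref{lem:error-bound}. The quantities $\sqrt{\mathtt{d}_{\sup}^4\mathtt{x}_{\sup}^4/(\varrho_1 N)}$, $\lambda(2\delta_1+1)\mathtt{c}_{\sup}\sim N^{-1/2+\nu}$, and $\sqrt{\mathtt{x}_{\sup}^4/N+\sqrt{\mathtt{x}_{\sup}^8/(\varrho_3 N)}}$ all vanish as $N\to\infty$, so the omega-min condition $\eta_2>\cdots$ and the eigenvalue condition $\mathtt{s}_{\inf}>\cdots$ hold for $N$ large, while $1-2\varrho_1-P!M\varrho_2-\varrho_3>0$ is secured by fixing the confidence parameters small. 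The one genuinely binding check is $\mathtt{d}_{\sup}^2\mathtt{x}_{\sup}^4/(\lambda^2 N\delta_1^2)<1$: since $\lambda^2N\sim N^{2\nu}\to\infty$ precisely because $\nu>0$, this ratio tends to $0$, which is exactly what forces $\nu$ to be strictly positive. Next I would extract the rate. As $N/S\to\infty$ the denominators $\mathtt{s}_{\inf}-\sqrt{\cdots}$ and $(\mathtt{b}_{\inf}-\sqrt{\cdots})^2$ converge to the positive constants $\mathtt{s}_{\inf}$ and $\mathtt{b}_{\inf}^2$; the $\sqrt{S/N}$-type numerators are $O((N/S)^{-1/2})$ and the penalty numerator is $O(N^{-1/2+\nu})$, so under the outer square root they contribute $O((N/S)^{-1/4})$ and $O(N^{(-1+2\nu)/4})$ respectively. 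Since $N/S\ge1$ and $\nu\in(0,1/2)$, both exponents are dominated by $(-1+2\nu)/4$, so each term is $O((N/S)^{(-1+2\nu)/4})$, which likewise absorbs the additive $O((N/S)^{-1/2})$ tail from the $\br,\omega_0^2$ blocks. Hence $\epsilon=O((N/S)^{(-1+2\nu)/4})$, and substitution into the master inequality yields the stated bound.

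For part~(2), I would take limits in the same inequality. Because $\kappa\in(0,1)$, the transient term $\kappa^S D(\Theta^{(0)},\Theta^*)\to0$ as $S\to\infty$; because $\nu<1/2$ makes $(-1+2\nu)/4<0$, the statistical term $\epsilon=O((N/S)^{(-1+2\nu)/4})\to0$ as $N/S\to\infty$. Hence $D(\Theta^{(S)},\Theta^*)\to0$. To upgrade this to convergence \emph{with probability one}, I would lean on the almost sure convergence $F_n\to F$, $G_n\to G$ (and the analogous empirical objectives for $\br,\omega_0^2$) supplied by the Strong Law of Large Numbers recorded in Definition~\ref{def-QFG}: along the joint regime the per-batch sample operator converges a.s.\ to the contractive population operator $M$ on the compact ball $\mathbb{B}_2(\Theta^*,\tr)$, so the iteration driven by the sample operator converges a.s.\ to the fixed point $\Theta^*$. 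Equivalently, one may choose the confidence levels $\varrho_1,\dots,\varrho_6$ so that the total per-stage failure probability $\delta$ is summable in $N$ and invoke Borel--Cantelli, guaranteeing that only finitely many stages fall outside the high-probability event.

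The main obstacle will be the rate extraction in part~(1): the expression for $\epsilon(\delta/S,N/S,\tr)$ mixes the penalty scale $N^{-1/2+\nu}$ with the concentration scales $(N/S)^{-1/2}$, and one must verify not only that each is bounded by the claimed $(N/S)^{(-1+2\nu)/4}$ but that this choice of $\lambda$ is simultaneously compatible with keeping the effective $\lambda^2(N/S)\to\infty$ inside Lemma~\ref{lem:error-bound}. This is the delicate balance that pins down the window $\nu\in(0,1/2)$: $\nu>0$ to satisfy the binding precondition, and $\nu<1/2$ to drive $\lambda\to0$ and hence $\epsilon\to0$. The remaining care is bookkeeping the admissible joint regime $N\to\infty,\ S\to\infty,\ N/S\to\infty$ so that the hypotheses of Theorem~\ref{the:convergence} remain in force at every $(N,S)$ and the almost sure upgrade is taken consistently.
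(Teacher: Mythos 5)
Your proposal is correct and takes essentially the same route the paper intends: the corollary carries no separate proof in the paper and is read off Theorem \ref{the:convergence} exactly as you do, by substituting $\lambda \sim N^{-1/2+\nu}$ into the preconditions of Lemma \ref{lem:error-bound} (with $\lambda^2 N \sim N^{2\nu}\to\infty$ pinning down $\nu>0$ and $\lambda\to 0$ pinning down $\nu<1/2$), extracting the dominant $(N/S)^{(-1+2\nu)/4}$ term from $\epsilon(\delta/S, N/S, \tr)$, and taking limits for part (2). One caution: your Borel--Cantelli aside for part (2) does not work as stated, because with the paper's Chebyshev-type tails the per-stage failure probability $\delta$ cannot be made summable in $N$ without inflating terms like $\sqrt{\mathtt{d}_{\sup}^4\mathtt{x}_{\sup}^4 S/(\varrho_1 N)}$ (shrinking $\varrho_1$ faster than $1/N$ destroys the rate), so the almost-sure claim should rest on your primary SLLN/contraction argument, which is the sounder route and already more than the paper itself supplies.
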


\section{Numerical study}
\label{sec:numerical}
To evaluate the performance of our methodology and selection of $\lambda$, we apply our MultiFun-DAG to solve a synthetic graphical model.  We show the performance of our algorithm on tasks of different combinations $(N,P,L_0,K_0)$, where $\forall j=1,\ldots,P$, we have $L_j=L_0$ and $K_j=K_0$.

In each experiment, the graphs are generated by Erdös-Rényi random graph model, where the functional data of the different nodes have the same Fourier basis $\nu_1(t)$, $\nu_2(t)$, ..., $\nu_K(t)$:
\begin{equation*}
    \nu_k(t) = 
    \begin{cases}
        1, & k = 1, \\
        \cos(2 \pi u t), & k = 2u, \\
        \sin(2 \pi u t), & k = 2u + 1,
    \end{cases} \forall u \in \mathbb{Z}, u \geq 1.
\end{equation*}

By combining Eq. (\ref{equ:Y-likelihood}), Eq. (\ref{equ:FPCA}) and Eq. (\ref{equ:linear-SEM}), we can write the representation of each functional data. The generated transition matrix is $\bC_{j'j} = c_{j'j} \mathbf{1}_{L_{j'}\times L_j} \otimes \mathbf{I}_K$, where $c_{j'j}$ is independently and identically generated from a uniform distribution $\mathcal{U}(-2, 0.5)\cup(0.5,2)$. The variance of noise is set by $\omega_0^2=1$ and $r_{jl}^2=0.01, \forall j=1,\ldots,P$.

For model comparison, we select two methods from the literature and another two variants of our MultiFun-DAG.
The baselines compared in this paper are introduced below. Since they cannot be directly used for DAG with nodes as multivariate functions, we modify these methods by concatenating multivariate functions as long univariate functions for analysis. 
    \begin{itemize}
        \item \textbf{FDGM\_S}: The functional directed graph model proposed by \citet{sun2017functional}. To deal with multivariate functional data for each node, we concatenate $L_j$ functional data of each node as long functional data with $L_j*T$ observation points. 
        \item \textbf{FDGM\_G}: The functional directed graph model proposed by \citet{gomez2020functional}. To deal with multivariate functional data for each node, we concatenate $L_j$ functional data as long univariate functional data with $L_j*T$ observation points.
        \item \textbf{MFGM}: This baseline provides a two-stage method to model the multivariate functional DAG. It first implements FPCA for each node separately to obtain their PC scores. Then it treats these scores as $\bX$ and uses the same structural learning method as MultiFun-DAG to estimate the causal structure, i.e.,
        \begin{align*}
            \min_{\bC^K, \bC^L} &\frac{1}{N} \|\bX-\mathbf{XC}\|_F^2 + \lambda \|\mathbf{\bC}\|_{l_1/F}, \\
            {\rm s.t.} \quad & {\rm tr}(\exp(\bW \circ \bW)) - P = 0,
        \end{align*}
        where $\bW$ and $\lambda$ have the same meaning as our method.
        \item \textbf{NoTears}: It first implements FPCA for the functional data of each node, where all the nodes share a common set of $K$ bases. After FPCA, the causal relationships between each PC score of each original node are learned by NoTears \citep{zheng2018dags}. Then the causal relationships between all the PC scores from two nodes are merged as the final causal relationship between these two nodes. 
    \end{itemize}
    
In this experiment, we aim to test the effectiveness of different methods to recover the true DAG structures. For brevity, we give the F1 score of the arcs to represent model performance, i.e.,
\begin{equation*}
    \text{F1 score} = 2 * \frac{\text{Precision} * \text{Recall}}{\text{Precision} + \text{Recall}}.
\end{equation*}

In Fig. \ref{fig:MultiFun-f1}, we see that our MultiFun-DAG has the best performance among all the baselines. The performance increases as the number of samples $N$ increases. MFGM has a similar performance to MultiFun-DAG but performs worse when the number of function data $L_0$ increases. This justifies the importance of our joint estimation of $\bX$ and $\bC$. 
\begin{figure}[t]
    \centerline{\includegraphics[width=0.75\columnwidth]{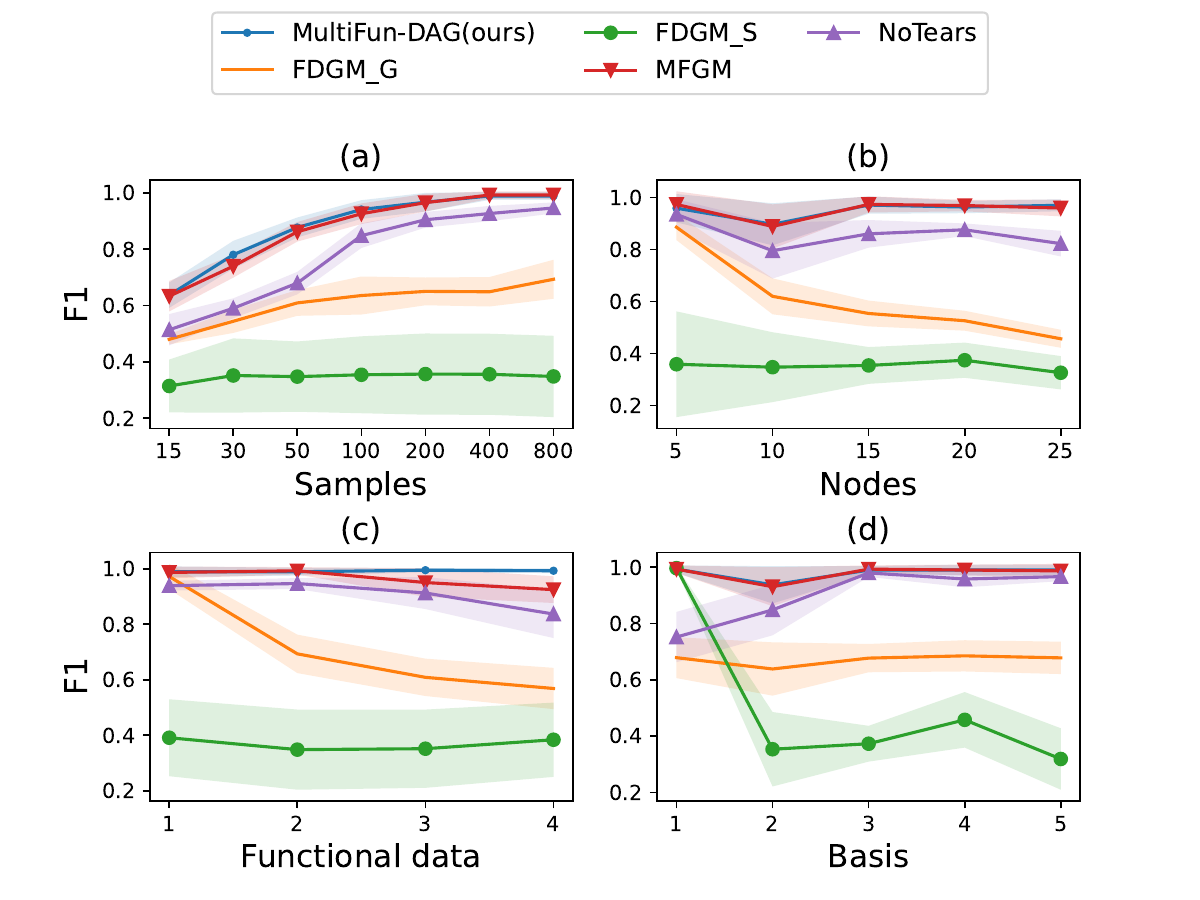}}
    \caption{F1 score of the edges with 95$\%$ confidence intervals: F1 score with different numbers of (a) samples $N$; (b) nodes $P$; (c) functions $L_0$; (d) bases $K_0$.} 
    \label{fig:MultiFun-f1}
    \end{figure}
 
Meanwhile, by comparing MFGM with NoTears, we verify the benefit of learning the DAG with vector-value nodes over the DAGs with scalar-value nodes. The difference in performance between MFGM and NoTears increases as the number of nodes increases or the number of functions increases. 
This is due to model complexity, i.e., the search space of causal order in NoTears is much larger than that in MFGM.
\begin{figure}[h]
    \centerline{\includegraphics[width=0.8\columnwidth]{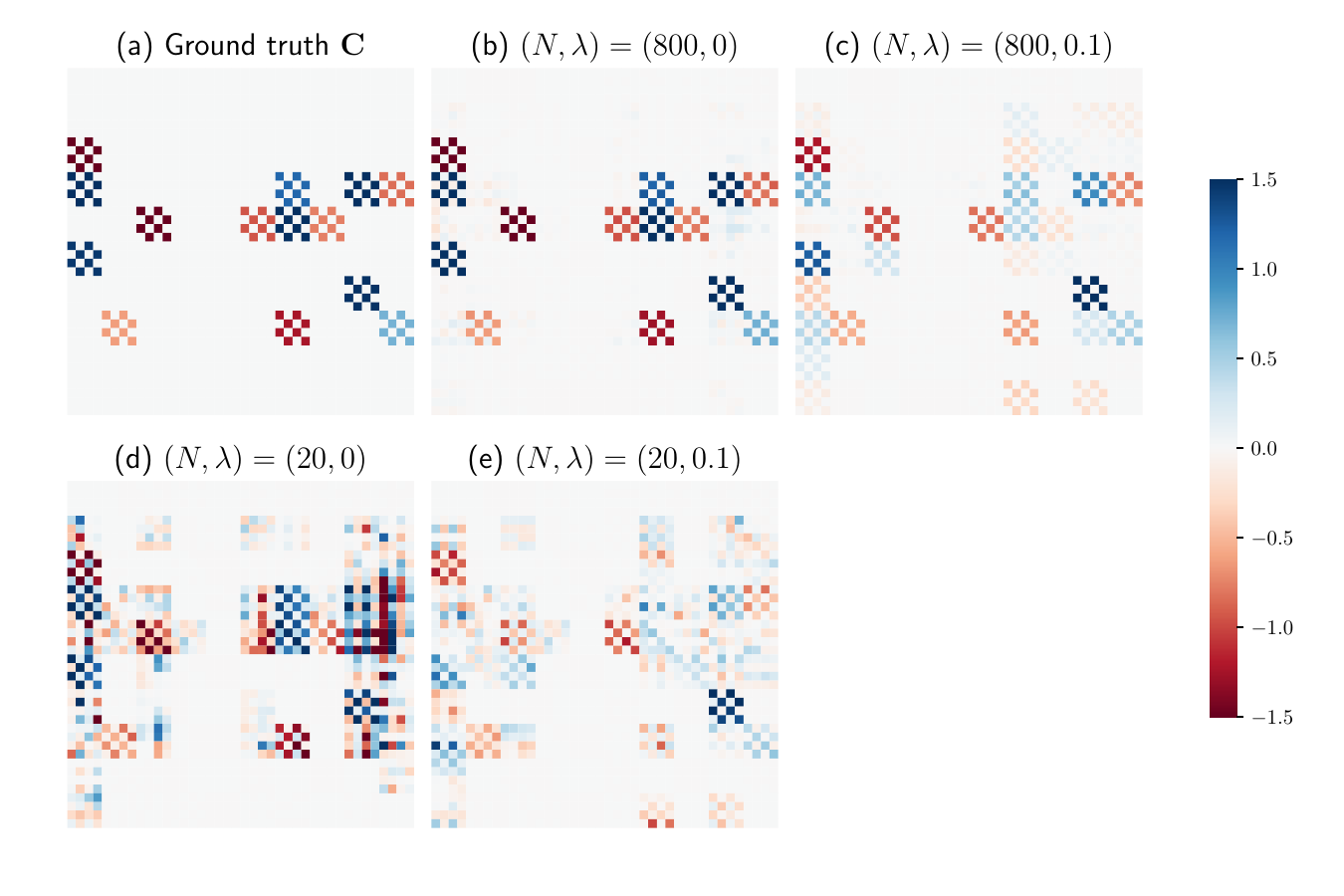}}
    \caption{Heatmap of $\bC^*$ and the estimated $\tilde{\bC}$ by MultiFun-DAG. Titles of the subplots represent the results under different experiment settings of $(N,\lambda)$. }
    \label{fig:MultiFun-heatmap}
\end{figure}

Table \ref{tab:mse} compares the result of our estimated parameters and the true parameters. 
In this case, we rotate the matrix $\hat{\bB}$ to the true matrix $\bB^*$. The rotation equation is given by Theorem \ref{the:-equivalence-class}: $\bB^*_j = \hat{\bB}_j \bQ_j$ and $\tilde{\bC}_{jk} = \bC_{jk}^L \otimes (\bQ_j^T \bC_{jk}^K \bQ_k) $, where $\bQ_j$ is an orthogonal matrix. The rotation process maintains the structure of DAG. Then we compare $\tilde{\bC}$ and $\bC^*$, by $\|\tilde{\bC}-\bC^*\|_F^2$. Furthermore, ${\rm MSE}_{\text{est}}$ and ${\rm MSE}_{\text{true}}$ measures the $l_2$ loss of $\calY$ for the estimated model and the true model, which can be computed by:
    \begin{align*}
        {\rm MSE}_{\text{est}} &= \frac{1}{NLT}\sum_{n=1}^{N} \sum_{j=1}^P \sum_{l=1}^{L_j} \bbE_{\bx^{(n)}|\bY,\hat{\Theta}}\|\bY^{(n)}_{jl} - \hat{\bB}\hat{\bx}^{(n)}_{jl}\|_2^2, \\
        {\rm MSE}_{\text{true}} &= \frac{1}{NLT}\sum_{n=1}^{N} \sum_{j=1}^P \sum_{l=1}^{L_j} \|\bY^{(n)}_{jl} - \bB^*\bx^{(n)}_{jl}\|_2^2.
    \end{align*}
    When ${\rm MSE}_{\text{est}}<{\rm MSE}_{\text{true}}$, overfitting occurs. When ${\rm MSE}_{\text{est}}>{\rm MSE}_{\text{true}}$, underfitting occurs. A smaller $|{\rm MSE}_{\text{est}}-{\rm MSE}_{\text{true}}|$, which is denoted by $|\Delta|$, indicates a smaller difference between the estimated and true parameters. Besides, a smaller $N$ needs a larger $\lambda$ to prevent overfitting, and on the contrary, a larger $N$ needs a smaller $\lambda$ to prevent underfitting. This might be because large $\lambda$ increases the bias and robustness of our algorithm. Fig. \ref{fig:MultiFun-heatmap} visualizes the estimated $\tilde{\bC}$ (the structure) under different experiment scenarios and its ground truth. With $(N,\lambda)=(800, 0)$, we could faithfully recover the structure.

    \begin{table}[t]
        \centering
        \caption{Estimated parameters v.s. True parameters. 
        }
        \scalebox{0.8}{
        \begin{tabular}{cccccc}
        \toprule
             $N$ & $\lambda$ & $\|\tilde{\bC}-\bC^*\|_F^2$ & ${\rm MSE}_{\text{est}}$ & ${\rm MSE}_{\text{true}}$ & $|\Delta|$\\
        \midrule
             800 & 0 & 1.21 & 1.99 & 2.013 & 0.02\\
             800 & 0.1 & 47.20 & 2.29 & 2.013 & 0.28 \\
             20 & 0 & 238.40 & 0.99 & 2.006 & 1.02\\
             20 & 0.1 & 107.74 & 1.46 & 2.006 & 0.55\\
        \bottomrule
        \end{tabular}
        }
        \label{tab:mse}
    \end{table} 
\section{Case study}
\label{sec:case}
In this section, we illustrate how our method can be applied to real-world urban traffic data for root cause analysis of traffic congestion. We focus on three types of traffic variables (nodes). (1) The real-time \textbf{traffic setting variables}, such as the real-time Origin-Destination (OD) demand, turning probability, the cycle time of the traffic light, etc., denoted as $\bS(t) = [\bS_1(t), \bS_2(t), ..., \bS_{P_s}(t)]$. 
(2) The real-time \textbf{traffic condition variables}, such as the occupancy of each lane, the average speed of each lane, the average waiting time of each lane, the number of vehicles in each lane, the number of halting vehicles in each lane, etc., 
denoted as $\mathbf{Y}(t) = [\mathbf{Y}_1(t), \mathbf{Y}_2(t), ..., \mathbf{Y}_{P_y}(t)]$. 
(3) The real-time traffic \textbf{congestion root cause variables}, such as long/short cycle time of traffic lights, phase imbalance, irrational guide lane, irrational phase sequence, imbalance of entrance, etc., 
denoted as $\bR(t)=[\bR_{1}(t),\ldots,\bR_{P_r}(t)]$.
Table \ref{tab:Abbreviation-func} summarizes the abbreviations and descriptions of each node.

We use the Simulation of Urban MObility (SUMO) \citep{krajzewicz2002sumo} to synthesize the real-time traffic data. We collect data from $\bS(t)$ and $\mathbf{Y}(t)$ every five minutes and simulate for 60 minutes. Therefore, each functional data has $T=12$ observation points. For each node of $\bY_{j}(t), t= 1,\ldots,T$, it has four functions, defined as $\bY_{j}\in \mathbb{R}^{4\times T}, j = 1, \ldots, P_y$. For each node of $\bS_{j}(t)$ and $\bR_{j}(t)$, it is a univariate function, defined as $\bS_{j}\in \mathbb{R}^{T}, j = 1, \ldots, P_s$ and $\bR_{j}\in \mathbb{R}^{T},  j = 1, \ldots, P_r$.  We set $\bR_j(t)\in\{0,1\}$. Here $\bR_j(t)=1$ indicates that the $j$-th type of congestion appears at time $t$, which is decided by rule-based algorithms in transportation. Its data is also collected every five minutes, with the same sampling grids as the other two types of traffic variables.
\begin{table}[h]
\caption{Abbreviation and the description of traffic data}
\label{tab:Abbreviation-func} 
\vskip 0.15in
\begin{center}
\scalebox{0.8}{
\begin{tabular}{ccc}
\toprule
Node & Name & Description \\
\midrule
$\bS_1\in \mathbb{R}^{T}$ & OD-A & OD demand of all direction \\
$\bS_2\in \mathbb{R}^{T}$ & OD-S & OD demand of certain direction \\
$\bS_3\in \mathbb{R}^{T}$ & T-A & Turning probability of all direction \\
$\bS_4\in \mathbb{R}^{T}$ & T-S & Turning probability of certain direction \\
$\bS_5\in \mathbb{R}^{T}$ & CT & Cycle time of traffic light \\ \midrule
$\mathbf{Y}_1 \in \mathbb{R}^{4 \times T}$ & OC & Occupancy of each of 4 lanes\\
$\mathbf{Y}_2 \in \mathbb{R}^{4 \times T}$ & MS & Mean speed of each of 4 lanes \\
$\mathbf{Y}_3 \in \mathbb{R}^{4 \times T}$ & MW & Mean waiting time of each of 4 lanes \\
$\mathbf{Y}_4 \in \mathbb{R}^{4 \times T}$ & NV & \# of vehicles in each of 4 lanes \\
$\mathbf{Y}_5 \in \mathbb{R}^{4 \times T}$ & NH & \# of halting vehicles in each of 4 lanes \\
\midrule
$\bR_1\in \mathbb{R}^{T}$ & Cycle-L & Long cycle time of traffic light \\
$\bR_2\in \mathbb{R}^{T}$ & Cycle-S & Short cycle time of traffic light \\
$\bR_3\in \mathbb{R}^{T}$ & Phase-imb & Phase imbalance\\
$\bR_4\in \mathbb{R}^{T}$ & lanes-irr & Irrational guide lane \\
$\bR_5\in \mathbb{R}^{T}$ & Entrance-imb & Imbalance of entrance \\
$\bR_6\in \mathbb{R}^{T}$ & Cycle-irr & Irrational phase sequence  \\
\bottomrule
\end{tabular}}
\end{center}
\vskip -0.1in
\end{table}

In the experiment, we set 11 levels on $\bS_1$, 3 levels on $\bS_2$, 3 levels on $\bS_3$, 4 levels on $\bS_4$ and 4 levels on $\bS_5$. Therefore, we have $11 \times 3 \times 3 \times 4 \times 4=1584$ treatment combinations. We run a single experiment on each treatment. In each experiment of $\bS$, we collect the traffic situation variables $\bY$ and the congestion indicator variables $\bR$, and treat them as one sample $[\bS^{(n)},  \mathbf{Y}^{(n)},\bR^{(n)}]$ for $n=1,2,...,1584$. 

Then we use MultiFun-DAG to learn the causal relationships between traffic setting variables and traffic congestion root cause variables. Based on domain knowledge, traffic setting variables have effects on the root cause variables, and different types of root cause variables will affect traffic condition variables. Therefore we assume the one-way connection from $\bS$ to $\bR$ and from $\bR$ to $\mathbf{Y}$. Moreover, we assume that there are no interior edges between nodes in $\bS$ and nodes in $\mathbf{Y}$. However, we assume that some types of congestion will lead to other types of congestion, i.e., there can be interior edges between nodes in $\bR$. 

The causal relationships between the variables in MultiFun-DAG are illustrated in Fig. \ref{fig:case-study}, and the probability interpretations are provided. The explainable insights about traffic congestion can be derived. For example, the edges Lanes-irr $\rightarrow$ Phase-imb and Cycle-S indicate that the irrationality of the guide lane could lead to the imbalanced traffic flow in different traffic signal phases, with some directions having long traffic queues and relatively short phase cycle. 
Thus, the guide lane should be better planned and the cycle time should be extended. 
In reality, the conditional probability $P(\bR_i|\bS,\bY)$ could also be used to predict the root cause probability in reality. 


\begin{figure}[t]
    \vskip 0.2in
    \begin{center}
    \centerline{\includegraphics[width=0.8\columnwidth]{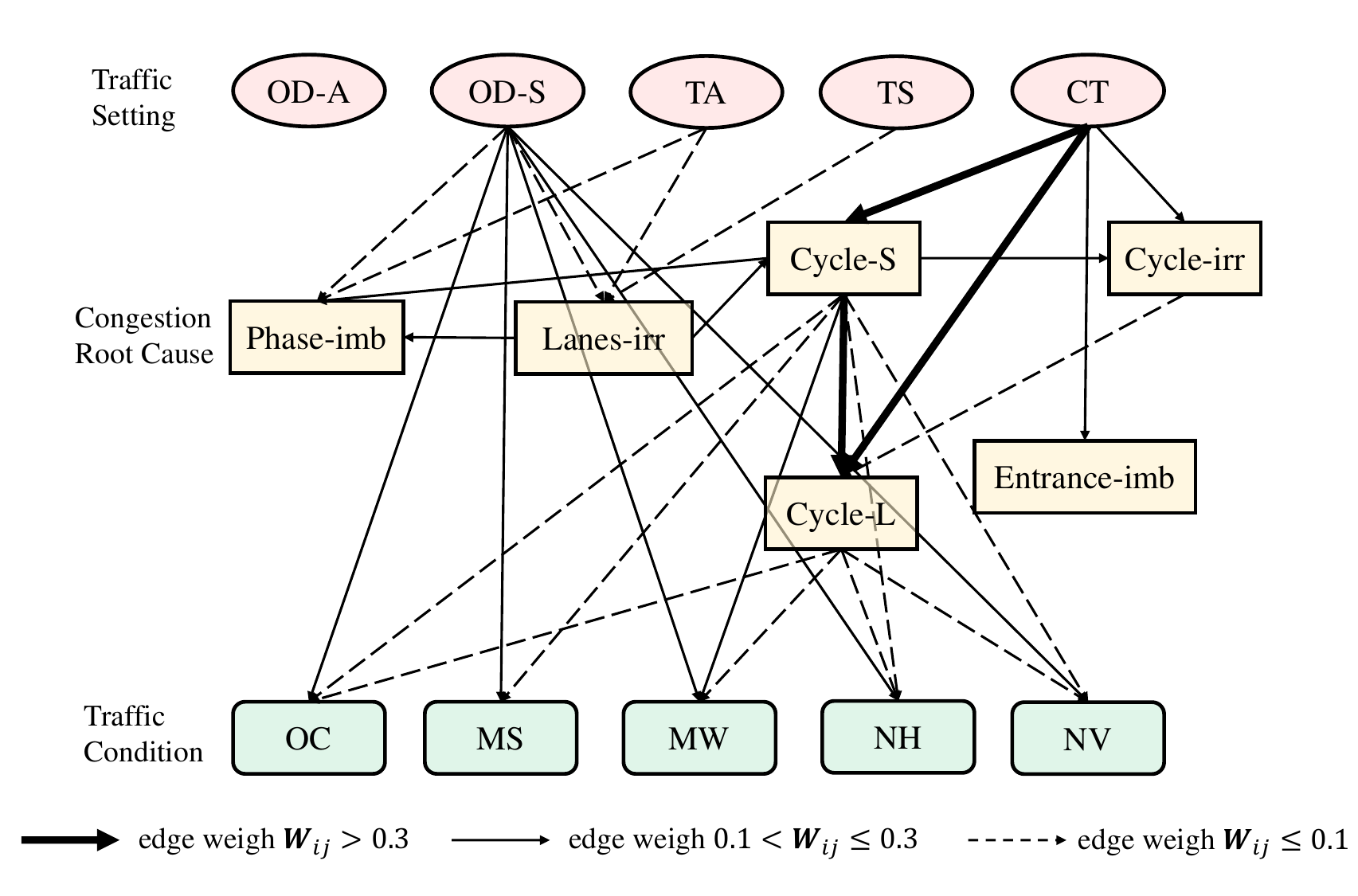}}
    \caption{The causal structure of traffic data.} 
    \label{fig:case-study}
    \end{center}
    \vskip -0.3in
\end{figure}

\section{Conclusion}
\label{sec:conclusion}
This paper presents a new framework for DAG with nodes as heterogeneous multivariate functional data. It simultaneously conducts functional decomposition for each node and uses the decomposition coefficients to represent the linear causal relationships between different nodes. By conducting a tailored regularized EM algorithm, the DAG structure together with other model parameters can be estimated based on a score-based structural learning algorithm with continuous acyclic constraint. The effectiveness of our algorithm is demonstrated by both theoretical proofs and numerical studies.
Some future works include extending the current MultiFun-DAG model to graphs with multi-mode data with both functional nodes and vector nodes. It is also interesting to conduct root causal analysis based on MultiFun-DAG for anomaly detection in multivariate functional data. 

\bibliographystyle{apalike}

\bibliography{main}
\newpage
\appendix
\section*{Appendix}
\section{Proof of theoretical property} 
\label{app:Identifiability}
\subsection{Proof of Theorem \ref{the:-equivalence-class}} \label{pro:-equivalence-class}
    \begin{proof}
        Denote $\Theta_1:=\{\bC^{(1)}, \bB^{(1)}, \br^{(1)}, \omega_0^{2(1)}\}$ and $\Theta_2:=\{\bC^{(2)}, \bB^{(2)}, \br^{(2)}, \omega_0^{2(2)}\}$ are two solution in the  equivalence class $\mathfrak{D}$. Denote $\bSigma ^{(1)} = (\bI-\bC^{(1)})^{-T}\omega_0^{2(1)}(\bI-\bC^{(1)})^{-1}$ and $\bSigma ^{(2)} = (\bI-\bC^{(2)})^{-T}\omega_0^{2(2)}(\bI-\bC^{(2)})^{-1}$ are the covariance matrices of $\bx$ determined by $\Theta_1$ and $\Theta_2$. Then the following equations hold true:
        \begin{align}
            \bB_j^{(1)} \bSigma_{jl,jl}^{(1)}          \bB_j^{(1) T} + r^{2 (1)}_{jl} \bI_T &= \bB_j^{(2)} \bSigma_{jl,jl}^{(2)} \bB_j^{(2) T} + r^{2 (2)}_{jl} \bI_T &\forall j,l \label{equ:-equivalence-class,samejl},  \\
            \bB_j^{(1)} \bSigma_{jl,jl}^{(1)}          \bB_{j'}^{(1) T}  &= \bB_j^{(2)} \bSigma_{jl,j'l'}^{(2)} \bB_{j'}^{(2) T} &\forall (j,l) \ne (j',l')\label{equ:-equivalence-class,difjl}.
        \end{align}
        
        For the Eq. (\ref{equ:-equivalence-class,samejl}), we have:
        \begin{align}
            \bB_j^{(1)} \bSigma_{jl,jl}^{(1)}          \bB_j^{(1) T} - \bB_j^{(2)} \bSigma_{jl,jl}^{(2)} \bB_j^{(2) T} &=  (r^{2 (2)}_{jl} - r^{2 (1)}_{jl}) \bI_T, \label{equ:-equivalence-class,samejl,p1} 
        \end{align}
        If $r^{2 (2)}_{jl} - r^{2 (1)}_{jl} \ne 0$ in Eq. (\ref{equ:-equivalence-class,samejl,p1}), the rank of the right-hand side is $T$, while the rank of the left-hand side is less than or equal to $K_j < T$, so the equation does not hold. Therefore, we have $r^{2 (2)}_{jl} - r^{2 (1)}_{jl} =0$, and $\bB_j^{(1)} \bSigma_{jl,jl}^{(1)} \bB_{j'}^{(1) T} = \bB_j^{(2)} \bSigma_{jl,j'l'}^{(2)} \bB_{j'}^{(2) T}, \forall j,j',l,l'$. This implies that $\bB_j^{(1)} = \bB_j^{(2)} \bQ_j$ with orthogonal matrix $\bQ_j$. From Eq. (\ref{equ:-equivalence-class,difjl}), we obtain $\bSigma_{jl,j'l'}^{(1)} = \bQ_j \bSigma_{jl,j'l'}^{(2)} \bQ^T_{j'}$.

        
        
        
        The optimality and uniqueness of the solution are proved in Lemma 5.1 in \cite{aragam2015learning} under the assumption of equal variances (Condition \ref{ass:equal-variance}). It is shown that for any given $\bSigma^{(1)}$, there exists a unique solution of $\bC^{(1)}$. We can show that for any $\bSigma^{(2)}$ satisfying $\bSigma_{jl,j'l'}^{(1)} = \bQ_j \bSigma_{jl,j'l'}^{(2)} \bQ^T_{j'}$, $\bC^{(2)}$ satisfying $\bQ_j \bC^{(2)}_{j'jl'l} \bQ_{j'}^T = \bC^{(1)}_{j'jl'l}$ is also the unique solution for $\bSigma^{(2)}$. 
    
    \end{proof}
\subsection{Proof of Theorem \ref{the:identifiability}}
\label{pro:identifiability}
\begin{proof}
    It is equivalent to prove that the optimal points to $F(\cdot, \Theta^*)$ and $G(\cdot, \Theta^*)$ are unique since $\hat{\br}$ and $\hat{\omega}_0^2$ are determined on $\hat{\bB}$ and $\hat{\bC}$. The uniqueness of $F(\cdot, \Theta^*)$ is guaranteed by the uniqueness of polar decomposition. As for $G(\cdot, \Theta^*)$, the uniqueness is proved by Lemma 5.1 in \citet{aragam2015learning}.
\end{proof} 

\subsection{Proof of Lemma \ref{lem:extend-bound}} \label{pro:extend-bound}
\begin{proof}
        Proposition. \ref{pro:represent} shows that the mean of posterior distribution $\hbu_{\Theta,\bY}$ can be represented by $\hbu_{\Theta,\bY} = \bA_{\Theta} {\rm vec}(\bY)$ and the covariance is represented by $\hat{\bSigma}_{\Theta}$. It is easy to show that $\bA_{\Theta}$ and $\hat{\bSigma}_{\Theta}$ are continuous functions of $\Theta$ by following the forward \& backward update in Appx. \ref{app:Expectation}. Therefore, (1) and (2) are hold. 

        For (3), from Lemma \ref{lem:post-true}, we have:
        \begin{equation*}
            {\rm minEig}({\rm Cov}(\hbu_{\Theta^*, \bY}) + \hat{\bSigma}_{\Theta^*}) = {\rm minEig}(\bSigma^*) > \eta_{\bSigma^*}.  
        \end{equation*}
        
        Because  $\bA_{\Theta}$ and $\hat{\bSigma}_{\Theta}$ are continuous for $\Theta$, for some $0 < \mathtt{s}_{\inf} < \eta_{\bSigma^*}$ and $\epsilon_1=\frac{1}{c_1}(\eta_{\bSigma^*}-\mathtt{s}_{\inf}), \exists \tr_a$ that $\forall \Theta \in \mathbb{B}_2(\Theta^*, \tr_a)$, we have:
        \begin{equation*}
            \|({\rm Cov}(\hbu_{\Theta^*, \bY}) + \hat{\bSigma}_{\Theta^*}) - ({\rm Cov}(\hbu_{\Theta, \bY}) + \hat{\bSigma}_{\Theta}) \|_F \leq c_1 \epsilon_1.
        \end{equation*}
        
        From Lemma \ref{lem:pertubation-eig}, we have
        \begin{equation*}
            |\text{\rm minEig}({\rm Cov}(\hbu_{\Theta, \bY}) + \hat{\bSigma}_{\Theta}) - \text{\rm minEig}({\rm Cov}(\hbu_{\Theta^*, \bY}) + \hat{\bSigma}_{\Theta^*}) | < c_1 \epsilon_1,
        \end{equation*}
        and we have:
        \begin{equation*}
            \text{\rm minEig}({\rm Cov}(\hbu_{\Theta, \bY}) + \hat{\bSigma}_{\Theta}) > \eta_{\bSigma^*}-c_1\epsilon_1> \mathtt{s}_{\inf}.
        \end{equation*}
        Then (3) is hold.
        
        For (4), $\forall j\in 1,\ldots,P $ and $l \in 1,\ldots,L_j$, we have
        \begin{equation*}
            \bbE_{\bY} \bbE_{\bx|\bY;\Theta^*}(\bY_{jl}\hbu_{jl, \bY, \Theta^*}^T) = \bbE_{\bx|\Theta^*}(\bB^*_j \bx_{jl} \bx_{jl}^T) = \bB^*_j\bSigma^*_{jl}
        \end{equation*}
        where $\bB^*_j\bSigma^*_{jl}$ is column full rank since $\bB^*_j$ is column full rank and $\bSigma^*_{jl}$ is full rank. Therefore, we have $\sigma_{\min}(\bB^*_j\bSigma^*_{jl})>0$. Because  $\bA_{\Theta}$ and $\hat{\bSigma}_{\Theta}$ are continuous to $\Theta$, for some $0 < \mathtt{b}_{\inf} < \min_{j,l}\sigma_{\min}(\bB^*_j\bSigma^*_{jl})$ and $\epsilon_2=\frac{1}{c_2}(\sigma_{\min}(\bB^*_j\bSigma^*_{jl})-\mathtt{b}_{\inf}), \exists \tr_{b,jl}$ that $\forall \Theta \in \mathbb{B}_2(\Theta^*, \tr_{b,jl})$, we have:
        \begin{equation*}
            \| \bbE_{\bY} \bbE_{\bx|\bY;\Theta^*}(\bY_{jl}\hbu_{jl, \bY, \Theta^*}^T) - \bbE_{\bY} \bbE_{\bx|\bY;\Theta}(\bY_{jl}\hbu_{jl, \bY, \Theta}^T) \|_F < c_2 \epsilon_2.
        \end{equation*}
        
        From Lemma \ref{lem:pertubation-sing}, we have
        \begin{equation*}
            \sigma_{\min}( \bbE_{\bY} \bbE_{\bx|\bY;\Theta}(\bY_{jl}\hbu_{jl, \bY, \Theta}^T))
             > \sigma_{\min}(\bB^*_j\bSigma^*_{jl}) - c_2 \epsilon_2 > \mathtt{b}_{\inf} > 0.
        \end{equation*}
        
        Let $\tr_b = \min_{j,l} \tr_{b,jl}$, then (4) is hold.
        
        Finally, we set $\tr_1 = \min(\tr_a, \tr_b)$ to obtain (1) to (4).
    \end{proof}

\subsection{Proof of Lemma \ref{lem:extend-omega-min}} \label{pro:extend-omega-min}
\begin{proof}
    Since $G(\bC, \Theta)$ is a continuous function of $\Theta$, $\forall \eta_1, \eta_2, \bC$, $\exists \tr_2$ that $\forall \Theta \in \mathbb{B}_2(\Theta^*, \tr_2)$, we have $|G(\bC,\Theta) - G(\bC, \Theta^*)| < \frac{1}{2} (\eta_1 - \eta_2)$, for some $0 < \eta_2 < \eta_1$.
    
    And from Condition \ref{ass:omega-min}, $\forall \pi \notin \Pi_0^*$, we have 
    \begin{equation}
        G(\bC^*, \Theta^*) - G(\bC_{\Theta^*}^*(\pi), \Theta^*) < - \eta_1.
    \end{equation}
    
    Then $\forall \pi \notin \Pi_0^*$, we have
    \begin{align*}
        G(\bC^*,\Theta) - G(\bC^*_{\Theta}(\pi), \Theta) & \leq |G(\bC^*,\Theta) - G(\bC^*,\Theta^*)| \\
        & \quad + G(\bC^*, \Theta^*) - G(\bC_{\Theta^*}^*(\pi), \Theta^*) \\
        & \quad + |G(\bC^*_{\Theta}(\pi), \Theta) - G(\bC^*_{\Theta}(\pi), \Theta^*)| \\
        & < \frac{1}{2}(\eta_1 - \eta_2) - \eta_1 - \frac{1}{2}(\eta_1 - \eta_2) \\
        & = - \eta_2.
    \end{align*}
    
    Therefore, $\forall \pi \notin \Pi_0^*$, we have:
    \begin{equation*}
        G(\bC^*_{\Theta},\Theta) - G(\bC^*_{\Theta}(\pi), \Theta) \leq G(\bC^*,\Theta) - G(\bC^*_{\Theta}(\pi), \Theta) < - \eta_2.
    \end{equation*}
    
    This shows that $\bC^*_{\Theta}(\pi)$ is not the minimum solution of $G(\bC, \Theta)$, and we simultaneously obtain (1) and (2). 
\end{proof}

\subsection{Proof of Lemma \ref{lem:error-bound}} \label{pro:error-bound}

\textbf{For Lemma \ref{lem:error-bound} (1)}:
\begin{proof}
For a fixed $\Theta \in \mathbb{B}_2(\Theta^*, \tr)$, let $\hat{\bC}$ be the estimator that minimizes $G_n(\bC, \Theta) + \lambda\|\bC\|_{l_1/F}$ and is consistent with causal order $\hat{\pi}$. We have
\begin{equation} \label{equ:main-theorem2.1}
\begin{split}
    & \quad \frac{1}{N} \bbE_{\bX|\calY;\Theta} \|\bX \bC^*_{\Theta}(\hat{\pi}) - \bX \hat{\bC}\|_F^2 + \lambda \|\hat{\bC}\|_{l_1/F} \\
    & \leq \frac{1}{N} \bbE_{\bX|\calY;\Theta} (\| \bX -\bX \bC^*_{\Theta} \|_F^2 - \| \bX -\bX \bC^*_{\Theta}(\hat{\pi}) \|_F^2) \\
    & \quad \quad + \frac{2}{N} \bbE_{\bX|\calY;\Theta} \langle \bX - \bX\bC^*_{\Theta}(\hat{\pi}), \bX(\hat{\bC} - \bC^*_{\Theta}(\hat{\pi})) \rangle + \lambda \|\bC^*_{\Theta}(\hat{\pi}) \|_{l_1/F}\\
    & \leq (I) + (II) + \lambda\|\bC^*_{\Theta}(\hat{\pi})\|_{l_1/F},
\end{split}
\end{equation}

where $\langle \cdot, \cdot \rangle$ denotes the inner product, and $\|\cdot\|_F$ denotes the Frobenius norm. Next, we will gives the upper bound for terms (I) and (II).

Bound (I):
\begin{align*}
    (I) &= G_n(\bC^*_{\Theta}, \Theta) - G_n(\bC^*_{\Theta}(\hat{\pi}), \Theta) \\
    &\leq | G_n(\bC^*_{\Theta}, \Theta) -  G(\bC^*_{\Theta}, \Theta)| + G(\bC^*_{\Theta}, \Theta) -  G(\bC^*_{\Theta}(\hat{\pi}), \Theta) + | G_n(\bC^*_{\Theta}(\hat{\pi}), \Theta) -  G(\bC^*_{\Theta}(\hat{\pi}), \Theta)|.
\end{align*}


We have the following statements, which show that the term $G_n(\bC, \Theta) - G(\bC, \Theta)$ has expectation 0 and bounded variance:

(1) $\bbE_\bY(G_n(\bC, \Theta) - G(\bC, \Theta)) = 0$;

(2) ${\rm Var} (G_n(\bC, \Theta) - G(\bC, \Theta)) = \frac{1}{N}{\rm Var}(\bbE_{\bx|\bY;\Theta}\| \bx - \bx \bC\|_F^2) \leq \frac{\|\bI-\bC\|_F^4 \mathtt{x}_{\sup}^4}{N}$.

By Chebyshev's inequality, we have:
\begin{align} \label{equ:bound1-chebyshev}
    P\left(|G_n(\bC, \Theta) - G(\bC, \Theta)|>\sqrt{\frac{\|\bI-\bC\|_F^4 \mathtt{x}_{\sup}^4}{\varrho_1 N}}\right) < \varrho_1
\end{align}

Using Eq. (\ref{equ:bound1-chebyshev}) in $(I)$, we obtain the following inequality with probability at least $1 - 2\varrho_1$:
\begin{equation}
    (I) \leq G(\bC^*_{\Theta}, \Theta) -  G(\bC^*_{\Theta}(\hat{\pi}), \Theta) + \sqrt{\frac{\|\bI-\bC_{\Theta}^*\|_F^4 \mathtt{x}_{\sup}^4}{\varrho_1 N}} + \sqrt{\frac{\|\bI-\bC_{\Theta}^*(\hat{\pi})\|_F^4 \mathtt{x}_{\sup}^4}{\varrho_1 N}}
\end{equation}

   Bound (II):

To bound the second term, we aim to show that the following equation holds true with high probability for $\delta_1 \in (0,1/2)$:
\begin{equation} \label{equ:boundII}
    \begin{split}
        & \quad \frac{1}{N} \bbE_{\bX|\calY;\Theta} \langle \bX - \bX\bC^*_{\Theta}(\hat{\pi}), \bX(\hat{\bC} - \bC^*_{\Theta}(\hat{\pi})) \rangle \\
        & \leq \frac{\delta_1}{2N} \bbE_{\bX|\calY;\Theta} \| \bX (\hat{\bC} - \bC^*_{\Theta}(\hat{\pi})) \|_F^2 + \delta_1\lambda \|\hat{\bC}-\bC^*_{\Theta}(\hat{\pi}) \|_{l_1/F}
    \end{split}
\end{equation}



Let $\be_j(\pi) \in \mathbb{R}^N$ as the $j$-th column of matrix $\bX - \bX\bC^*_{\Theta}(\hat{\pi})$ and $\bbeta \in \mathbb{R}^M$ as the $j$-th column of matrix $\hat{\bC} - \bC^*_{\Theta}(\hat{\pi})$. Denote $\calE_j$ is the event:
\begin{equation} \label{equ:event}
    \calE_j := \left\{ \underset{\bbeta \in \mathbb{R}^M}{\sup} \frac{1}{N} \bbE_{\bX|\calY;\Theta} \langle \be_j(\hat{\pi}), \bX \bbeta \rangle - \frac{\delta_1}{2N} \bbE_{\bX|\calY;\Theta} \|\bX \bbeta \|_2^2 - \delta_1 \lambda \|\bbeta \|_{l_1/l_2} \leq 0 \right\},
\end{equation}

where $\bbeta = [\bbeta_1, \bbeta_2, \ldots, \bbeta_P]$ for $\bbeta_{i} \in \mathbb{R}^{L_iK_i}$ and $\|\bbeta \|_{l_1/l_2} = \sum_{i} \|\bbeta_i\|_2$. 

Therefore, to prove Eq. (\ref{equ:boundII}), it suffices to show that for any given column $j$ and causal order $\hat{\pi}$, the event $\calE$ hold with a high probability.

We can then express $\calE_j$ as:
\begin{equation*}
    \begin{split}
        \calE_j &\subseteq \left\{ \underset{\bbeta \in \mathbb{R}^M}{\sup} \frac{1}{2N} \bbE_{\bX|\calY;\Theta} \| \frac{\be_j(\hat{\pi})}{\delta_1} \|_2^2 - \frac{1}{2N} \bbE_{\bX|\calY;\Theta} \| \frac{\be_j(\hat{\pi})}{\delta_1} - \bX \bbeta  \|_2^2 +  \lambda \|\bbeta \|_{l_1/l_2} \leq 0  \right\} \\
        &= \left\{ \textbf{0} \in \underset{\bbeta \in \mathbb{R}^M}{\arg \min} \frac{1}{2N}  \bbE_{\bX|\calY;\Theta} \| \frac{\be_j(\hat{\pi})}{\delta_1} - \bX \bbeta \|_2^2 + \lambda \| \bbeta \|_{l_1/l_2}  \right\}
    \end{split}
\end{equation*}

Event $\calE_j$ is correspond to the Null-consistency of group lasso problem, we use Lemma \ref{lem:null-consistency} to find the solution $\bbeta$ and $\bw$,
\begin{align*}
    \bbeta &= \textbf{0}, \\
    \bw &= \frac{1}{\lambda N} \bbE_{\bX|\calY;\Theta}(\bX^T \frac{\be_j(\hat{\pi})}{\delta_1}).
\end{align*}

Next we proof that $\| \bw \|_{l_\infty/l_2} \leq 1$ holds with a high probability, where $\| \bw \|_{l_\infty/l_2} = \max_{i=1,\ldots,P} \|\bw_{i}\|_2$ and $\bw_{i}$ is the gradient corresponds to $\bbeta_{i}$. To proof this, we bound the variance of $\|\bw\|_2$. 

We first prove that the expectation of $\bw$ is \textbf{0} from Lemma \ref{lem:expectation-0}, and we have 
\begin{align*}
    \|\bw\|_2^2 &\leq \frac{1}{\lambda^2 N^2 \delta_1^2} \bbE_{\bX|\calY,\Theta}\| \sum_{n=1}^N \bx^{(n) T} \be^{(n)}_j(\hat{\pi})\|_2^2 \\
    &= \frac{1}{\lambda^2 N^2 \delta_1^2} \sum_{n=1}^N \bbE_{\bx^{(n)}|\bY^{(n)},\Theta}\| \bx^{(n) T} \be^{(n)}_j(\hat{\pi})\|_2^2 \\
    &\leq \frac{1}{\lambda^2 N^2 \delta_1^2} \sum_{n=1}^N \bbE_{\bx^{(n)}|\bY^{(n)},\Theta}\| \bx^{(n) T} \bx^{(n)} (\bI - \bC^*_{\Theta}(\pi))\|_F^2 \\
    &\leq \frac{\| \bI - \bC^*_{\Theta}(\pi) \|_F^2}{\lambda^2 N^2 \delta_1^2} \sum_{n=1}^N \bbE_{\bx^{(n)}|\bY^{(n)},\Theta}\| \bx^{(n)}\|_2^4,
\end{align*}

where,
\begin{align*}
    \bbE_\bY \left( \sum_{n=1}^N \bbE_{\bx^{(n)}|\bY^{(n)},\Theta}\| \bx^{(n)}\|_2^4 \right) &\leq N \mathtt{x}_{\sup}^4 , \\
    {\rm Var} \left( \sum_{n=1}^N \bbE_{\bx^{(n)}|\bY^{(n)},\Theta}\| \bx^{(n)}\|_2^4 \right) &\leq N \mathtt{x}_{\sup}^8 .
\end{align*}

Suppose we have $\frac{\| \bI - \bC^*_{\Theta}(\pi) \|_F^2 \mathtt{x}_{\sup}^4}{\lambda^2 N \delta_1^2} < 1$. By Chebyshev's inequality, we have
\begin{equation*}
    P(\|\bw\|_2^2 \geq 1) \leq \frac{\frac{\| \bI - \bC^*_{\Theta}(\pi) \|_F^4}{\lambda^4 N^3 \delta_1^4} \bx_{\sup}^8}{\left(1 - \frac{\| \bI - \bC^*_{\Theta}(\pi) \|_F^2 \mathtt{x}_{\sup}^4}{\lambda^2 N \delta_1^2}\right)^2} \leq \frac{\frac{\mathtt{d}_{\sup}^4}{\lambda^4 N^3 \delta_1^4} \bx_{\sup}^8}{\left(1 - \frac{\mathtt{d}_{\sup}^2 \mathtt{x}_{\sup}^4}{\lambda^2 N \delta_1^2}\right)^2} := \varrho_2
\end{equation*}


Since $\|\bw\|_{l_\infty/l_2} \leq \| \bw \|_2$, we have:
\begin{equation*}
    P(\| \bw \|_{l_\infty/l_2} \geq 1) \leq \varrho_2.
\end{equation*}

Thus, with probability $1 - \varrho_2$, event $\calE_j$ holds true. Taking uniform control over all possible $j=1,2,\ldots,M$ and $\hat{\pi}$, we conclude that with probability $1 - MP! \varrho_2$, Eq. (\ref{equ:boundII}) holds true.

Finally, for Lemma \ref{lem:error-bound}(1), suppose $\hat{\pi} \notin \Pi_0^*$, then $ G(\bC^*_{\Theta}, \Theta) -  G(\bC^*_{\Theta}(\hat{\pi}), \Theta) < -\eta_2$, and we back to Eq. (\ref{equ:main-theorem2.1}). With probability $1-\varrho_1-MP!\varrho_2$, we have:
\begin{equation} 
\begin{split}
    & \quad \frac{1}{N} \bbE_{\bX|\calY;\Theta} \|\bX \bC^*_{\Theta}(\hat{\pi}) - \bX \hat{\bC}\|_F^2 + \lambda \|\hat{\bC}\|_{l_1/F} \\
    & \leq -\eta_2 + \sqrt{\frac{\|\bI-\bC_{\Theta}^*\|_F^4 \mathtt{x}_{\sup}^4}{\varrho_1 N}} + \sqrt{\frac{\|\bI-\bC_{\Theta}^*(\hat{\pi})\|_F^4 \mathtt{x}_{\sup}^4}{\varrho_1 N}} \\
    & \quad + \frac{\delta_1}{N} \bbE_{\bX|\calY;\Theta} \| \bX (\hat{\bC} - \bC^*_{\Theta}(\hat{\pi})) \|_F^2 + 2\delta_1\lambda \|\hat{\bC}-\bC^*_{\Theta}(\hat{\pi}) \|_{l_1/F} + \lambda \|\bC^*_{\Theta}(\hat{\pi})\|_{l_1/F}.
\end{split}
\end{equation}

For $\delta_1 \in (0,1)$, we have:
\begin{equation*}
\begin{split}
    & \quad \frac{1}{N} \bbE_{\bX|\calY;\Theta} \|\bX \bC^*_{\Theta}(\hat{\pi}) - \bX \hat{\bC}\|_F^2 \\
    & \leq -\eta_2 + \sqrt{\frac{\|\bI-\bC_{\Theta}^*\|_F^4 \mathtt{x}_{\sup}^4}{\varrho_1 N}} + \sqrt{\frac{\|\bI-\bC_{\Theta}^*(\hat{\pi})\|_F^4 \mathtt{x}_{\sup}^4}{\varrho_1 N}} + \lambda(2\delta_1 + 1)  \mathtt{c}_{\sup}.\\
\end{split}
\end{equation*}

It contradicts with the condition that:
\begin{equation*}
    \eta_2 > 2\sqrt{\frac{\mathtt{d}_{\sup}^4 \mathtt{x}_{\sup}^4}{\varrho_1 N}} + \lambda(2\delta_1 + 1)  \mathtt{c}_{\sup}.
\end{equation*}

\textbf{For Lemma \ref{lem:error-bound}(2)}: we denote that $\bDelta = \hat{\bC} - \bC^*_{\Theta}(\hat{\pi})$, we have:
\begin{align*}
    \frac{1}{N}  \bbE_{\bX|\calY;\Theta} \| \bX \bDelta\|_F^2 &= \frac{1}{N} \sum_{n=1}^N \|\hbu_{\Theta, \bY^{(n)}} \bDelta \|_F^2  + {\rm tr}(\bDelta^T \hat{\bSigma}_{\Theta} \bDelta) \\
    &= {\rm tr}(\bDelta^T(\frac{1}{N} \sum_{n=1}^N  \hbu^T_{\Theta, \bY^{(n)}}\hbu_{\Theta, \bY^{(n)}})\bDelta)  + {\rm tr}(\bDelta^T \hat{\bSigma}_{\Theta} \bDelta) \\ 
    &\geq \|\bDelta\|_F^2 \text{\rm minEig}\left(\frac{1}{N} \sum_{n=1}^N  \hbu^T_{\Theta, \bY^{(n)}}\hbu_{\Theta, \bY^{(n)}} + \hat{\bSigma}_{\Theta}\right).
\end{align*}

Denote $\bPhi_{\Theta}:=\frac{1}{N} \sum_{n=1}^N \hbu^T_{\Theta, \bY^{(n)}}\hbu_{\Theta, \bY^{(n)}} + \hat{\bSigma}_{\Theta}$ and denote $\bar{\bPhi}_{\Theta} := \bbE_\calY(\bPhi_{\Theta})$. From Lemma \ref{lem:extend-bound}, we have $\text{\rm minEig}(\bar{\bPhi}_{\Theta}) > \mathtt{s}_{\inf}$, and 
\begin{align*}
    \| \bPhi_{\Theta} - \bar{\bPhi}_{\Theta} \|_F^2 &= \|\frac{1}{N} \sum_{n=1}^N \hbu^T_{\Theta, \bY^{(n)}}\hbu_{\Theta, \bY^{(n)}} - \bar{\bPhi}_{\Theta} \|_F^2 \\
    &= \frac{1}{N^2} \sum_{n=1}^N \| \hbu^T_{\Theta, \bY^{(n)}}\hbu_{\Theta, \bY^{(n)}} - \bar{\bPhi}_{\Theta} \|_F^2 \\
    &= \frac{1}{N^2} \sum_{n=1}^N \| \hbu^T_{\Theta, \bY^{(n)}}\hbu_{\Theta, \bY^{(n)}} - \sum_{n'=1}^N  (\hbu^T_{\Theta, \bY^{(n')}}\hbu_{\Theta, \bY^{(n')}}) \|_F^2. \\
\end{align*}

where,
\begin{align*}
    \bbE\left(\sum_{n=1}^N \| \hbu^T_{\Theta, \bY^{(n)}}\hbu_{\Theta, \bY^{(n)}} - \sum_{n'=1}^N  (\hbu^T_{\Theta, \bY^{(n')}}\hbu_{\Theta, \bY^{(n')}}) \|_F^2\right) \leq N \bx_{\sup}^4, \\
    {\rm Var}\left(\sum_{n=1}^N \| \hbu^T_{\Theta, \bY^{(n)}}\hbu_{\Theta, \bY^{(n)}} - \sum_{n'=1}^N  (\hbu^T_{\Theta, \bY^{(n')}}\hbu_{\Theta, \bY^{(n')}}) \|_F^2\right) \leq N \bx_{\sup}^8.
\end{align*}

By Chebyshev’s inequality, we have:
\begin{align*}
    P\left(\| \bPhi_{\Theta} - \bar{\bPhi}_{\Theta} \|_F^2 \geq \frac{\mathtt{x}_{\sup}^4}{N} + \sqrt{\frac{\mathtt{x}_{\sup}^8}{\varrho_3 N}} \right) &< \varrho_3, \\
    P\left(|\text{\rm minEig}(\bPhi_{\Theta}) - \text{\rm minEig}(\bar{\bPhi}_{\Theta})| \geq \sqrt{\frac{\mathtt{x}_{\sup}^4}{N} + \sqrt{\frac{\mathtt{x}_{\sup}^8}{\varrho_3 N}}}\right) &< \varrho_3, \\
    P\left(\text{\rm minEig}(\bPhi_{\Theta}) \geq \mathtt{s}_{\inf} - \sqrt{\frac{\mathtt{x}_{\sup}^4}{N} + \sqrt{\frac{\mathtt{x}_{\sup}^8}{\varrho_3 N}}}\right) > 1 - \varrho_3. \\
\end{align*}

Then, with at least probability $1- 2\varrho_1 - P!M\varrho_2 - \varrho_3$, we have $\hat{\pi} \in \Pi_0^*$, therefore:
\begin{align*}
    \|\bDelta\|_F^2 &\leq \frac{\frac{1}{N}  \bbE_{\bX|\calY;\Theta} \| \bX \bDelta\|_F^2}{\mathtt{s}_{\inf} - \sqrt{\frac{\mathtt{x}_{\sup}^4}{N} + \sqrt{\frac{\mathtt{x}_{\sup}^8}{\varrho_3 N}}}} \\
    &\leq \frac{2 \sqrt{ \frac{\mathtt{d}_{\sup}^4 \mathtt{x}_{\sup}^4}{\varrho_1 N}} + \lambda(2\delta_1 + 1)  \mathtt{c}_{\sup}}{\mathtt{s}_{\inf} - \sqrt{\frac{\mathtt{x}_{\sup}^4}{N} + \sqrt{\frac{\mathtt{x}_{\sup}^8}{\varrho_3 N}}}}.
\end{align*}
\end{proof}

\subsection{Proof of Lemma \ref{lem:polar-decomposition}} \label{pro:polar-decomposition}
\begin{proof}
    Because $\hat{\bB}_j^T \hat{\bB}_j = \bI$, ${\rm tr}(\hat{\bB}_j^T\hat{\bSigma}_{jl}\hat{\bB}_j) = {\rm tr}(\hat{\bSigma}_{jl})$ is a constant unrelated to $\hat{\bB}_j$. For a fixed $j$, the estimator of $\hat{\bB}_j$ is given by:
    \begin{equation*}
    \begin{split}
        & \hat{\bB}_j = \underset{\bB_j}{\arg \min} \frac{1}{NL_j} \sum_{n=1}^N \sum_{l=1}^{L_j} \| \bY^{(n)}_{jl} \hbu^{(n) T}_{jl; \Theta, \bY^{(n)}} - \bB_j \|_F^2 \\
        & {\rm s.t.} \quad \bB_j^T \bB_j = \bI.
    \end{split}
    \end{equation*}
     We denote $\bZ = \frac{1}{NL_j} \sum_{n=1}^N \sum_{l=1}^{L_j} \bY^{(n)}_{jl} \hbu^{(n) T}_{jl;
     \Theta, \bY^{(n)}}$ and $\bar{\bZ} = \bbE_{\bY}(\bZ)$. We consider $\bZ$ is a small perturbation of $\bZ = \bar{\bZ} + \bE$ and use the perturbation theory of Polar decomposition. From \cite{li1993perturbation}, we obtain that:
     \begin{equation} \label{equ:bound-Bj}
         \| \hat{\bB}_j - \bB_{\Theta j}^*\|_F \leq \frac{\|\bZ-\bar{\bZ}\|_F}{\min\{\|\bZ^+\|_2^{-1}, \|\bar{\bZ}^+\|_2^{-1}\}},
     \end{equation}

     where $\|\bZ^+\|_2^{-1}$ and $\|\bar{\bZ}^+\|_2^{-1}$ is smallest singular value of $\bZ$ and $\bar{\bZ}$ greater than 0. Next, we bound the numerator and denominator of RHS of Eq. (\ref{equ:bound-Bj}).

     For the numerator, we have 
     \begin{align*}
         \|\bE\|_F^2 &= \| \frac{1}{NL_j} \sum_{n=1}^N \sum_{l=1}^{L_j} \bY^{(n)}_{jl} \hbu^{(n) T}_{jl;
     \Theta, \bY^{(n)}} - \bar{\bZ} \|_F^2 \\
        &= \frac{1}{N^2} \sum_{n=1}^N \| \frac{1}{L_j} \sum_{l=1}^{L_j}( \bY^{(n)}_{jl} \hbu^{(n) T}_{jl; \Theta, \bY^{(n)}}) - \bZ \|_F^2,
     \end{align*} 
     where
     \begin{align*}
         & \bbE_\calY\left(\sum_{n=1}^N \| \frac{1}{L_j} \sum_{l=1}^{L_j}( \bY^{(n)}_{jl} \hbu^{(n) T}_{jl; \Theta, \bY^{(n)}}) - \bZ \|_F^2\right) \\
         = & N \bbE_\bY\left(\| \frac{1}{L_j} \sum_{l=1}^{L_j}( \bY_{jl} \hbu^{T}_{jl; \Theta, \bY}) - \bZ \|_F^2\right) \\
         = & \frac{N}{L_j^2} \bbE_\bY\left(\| \sum_{l=1}^{L_j}( \bY_{jl} \hbu^{T}_{jl; \Theta, \bY} - \bZ) \|_F^2\right)\\
         \leq & \frac{N}{L_j} \bbE_\bY\left(\sum_{l=1}^{L_j}\| \bY_{jl} \hbu^{T}_{jl; \Theta, \bY} \|_F^2\right)\\
         \leq & N \mathtt{y}_{\sup}^2\\
    \end{align*}
    and 
    \begin{align*}
         & {\rm Var}\left(\sum_{n=1}^N \| \frac{1}{L_j} \sum_{l=1}^{L_j}( \bY^{(n)}_{jl} \hbu^{(n) T}_{jl; \Theta, \bY^{(n)}}) - \bZ \|_F^2\right) \leq N \mathtt{y}_{\sup}^4.
     \end{align*}
     
     then by Chebyshev's inequality, we have:
     \begin{equation} \label{equ:bound-Bj-numerator}
         P\left(\|\bZ-\bar{\bZ}\|_F^2 \geq \frac{\mathtt{y}^2_{\sup}}{N} + \sqrt{\frac{ \mathtt{y}^4_{\sup}}{\varrho_4 N}} \right) < \varrho_4.
     \end{equation}
     
     For the denominator, from Lemma \ref{lem:extend-bound}(3). By Chebyshev's inequality, we have:
     \begin{equation} \label{equ:bound-Bj-denominator}
         P\left(\|\bZ^+\|_2^{-1} \leq \mathtt{b}_{\inf} - \sqrt{\frac{\mathtt{y}^2_{\sup}}{N} + \sqrt{\frac{ \mathtt{y}^4_{\sup}}{\varrho_5 N}}}\right) < \varrho_5.
     \end{equation}

     Combine Eq. (\ref{equ:bound-Bj-numerator}) and Eq. (\ref{equ:bound-Bj-denominator}), at least probability $1-\varrho_4-\varrho_5$, we have:
     \begin{equation} 
         \| \hat{\bB}_j - \bB_{\Theta j}^*\|_F^2 \leq \frac{\frac{\mathtt{y}^2_{\sup}}{N} + \sqrt{\frac{ \mathtt{y}^4_{\sup}}{\varrho_4 N}} }{\left(\mathtt{b}_{\inf} - \sqrt{\frac{\mathtt{y}^2_{\sup}}{N} + \sqrt{\frac{ \mathtt{y}^4_{\sup}}{\varrho_5 N}}}\right)^2}.
     \end{equation}

     Finally, we take the uniform control for all nodes $j=1,2,\ldots,P$, then with probability $1 - P\varrho_4 - P\varrho_5$, we have:
    \begin{equation*}
        \|\hat{\bB} - \bB^*_{\Theta}\|_F^2 \leq \frac{P\left(\frac{\mathtt{y}^2_{\sup}}{N} + \sqrt{\frac{ \mathtt{y}^4_{\sup}}{\varrho_4 N}} \right) }{\left(\mathtt{b}_{\inf} - \sqrt{\frac{\mathtt{y}^2_{\sup}}{N} + \sqrt{\frac{ \mathtt{y}^4_{\sup}}{\varrho_5 N}}}\right)^2}.
    \end{equation*}
\end{proof}

\section{Minor Lemma and Derivation}
\subsection{Minor Lemma}
\begin{lemma} \label{lem:post-true}
    $\mathbb{E}_{\bY}(\hbu_{\Theta^*, \bY}) = \mathbf{0}$ and ${\rm Cov}(\hbu_{\Theta^*, \bY}) + \hat{\bSigma}_{\Theta^*} = (\bI - \bC^*)^{-T}\omega_0^{2 *}(\bI - \bC^*)^{-1}$. 
\end{lemma}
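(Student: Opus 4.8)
The plan is to derive both identities from the tower property of conditional expectation and the law of total covariance, combining the marginal law of the latent coefficients $\bx$ implied by the linear SEM with the structure of the posterior distribution from Proposition~\ref{pro:represent}.

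First I would record the marginal law of $\bx$ under the true parameters. From the SEM in Eq.~\eqref{equ:linear-SEM}, $\bx = \bC^{*T}\bx + \bxi$ with $\bxi \sim \mathcal{N}(\mathbf{0}, \omega_0^{2*}\bI)$ under Condition~\ref{ass:equal-variance}, so $\bx = (\bI - \bC^{*T})^{-1}\bxi$. This gives at once $\mathbb{E}[\bx] = \mathbf{0}$ and, using $(\bI - \bC^{*T})^{-1} = (\bI - \bC^*)^{-T}$, the marginal covariance ${\rm Cov}(\bx) = (\bI - \bC^*)^{-T}\omega_0^{2*}(\bI - \bC^*)^{-1} = \bSigma^*$.

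For the mean identity, I would apply the tower property. By Proposition~\ref{pro:represent} the posterior mean is $\hbu_{\Theta^*,\bY} = \mathbb{E}[\bx \mid \bY;\Theta^*]$, hence $\mathbb{E}_{\bY}(\hbu_{\Theta^*,\bY}) = \mathbb{E}_{\bY}\big(\mathbb{E}[\bx \mid \bY;\Theta^*]\big) = \mathbb{E}[\bx] = \mathbf{0}$. For the covariance identity, I would invoke the law of total covariance, ${\rm Cov}(\bx) = \mathbb{E}_{\bY}\big({\rm Cov}(\bx \mid \bY;\Theta^*)\big) + {\rm Cov}_{\bY}\big(\mathbb{E}[\bx \mid \bY;\Theta^*]\big)$. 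The crucial point, furnished by Proposition~\ref{pro:represent}, is that the posterior covariance $\hat{\bSigma}_{\Theta^*}$ depends only on $\Theta^*$ and not on $\bY$; thus the first term reduces to $\hat{\bSigma}_{\Theta^*}$, while the second term is exactly ${\rm Cov}(\hbu_{\Theta^*,\bY})$. Substituting ${\rm Cov}(\bx) = \bSigma^*$ then yields ${\rm Cov}(\hbu_{\Theta^*,\bY}) + \hat{\bSigma}_{\Theta^*} = (\bI - \bC^*)^{-T}\omega_0^{2*}(\bI - \bC^*)^{-1}$, as claimed.

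There is no substantial obstacle: the argument is a direct application of the two conditioning identities. The only steps that warrant care are verifying the $\bY$-independence of the posterior covariance—precisely the statement in Proposition~\ref{pro:represent} that $\hat{\bSigma}_{\Theta}$ is a function of $\Theta$ alone—and the transpose bookkeeping $(\bI - \bC^{*T})^{-1} = (\bI - \bC^*)^{-T}$ that matches the marginal covariance of $\bx$ to the target form.
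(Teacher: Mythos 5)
Your proof is correct and takes essentially the same route as the paper: the paper's two-line proof is precisely the tower property plus the law of total covariance applied to the marginal law of $\bx$ under $\Theta^*$. You merely make explicit two steps the paper leaves implicit — deriving ${\rm Cov}(\bx) = (\bI - \bC^*)^{-T}\omega_0^{2*}(\bI - \bC^*)^{-1}$ from the linear SEM under Condition~\ref{ass:equal-variance}, and the $\bY$-independence of the posterior covariance from Proposition~\ref{pro:represent}.
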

\begin{proof}
    We have 
    \begin{align*}
        \mathbb{E}_{\bY}\mathbb{E}_{\bx|\bY;\Theta^*} (\bx) &= \mathbb{E}_{\bx|\Theta^*}(\bx) = \mathbf{0}, \\ 
        {\rm Cov}\mathbb{E}_{\bx|\bY;\Theta^*} (\bx) + \mathbb{E}_{\bY} {\rm Cov}_{\bx|\bY;\Theta^*}(\bx) &= {\rm Cov}_{\bx|\Theta^*}(\bx) = (\bI - \bC^*)^{-T}\omega_0^{2 *}(\bI - \bC^*)^{-1}.
    \end{align*}
\end{proof}
\begin{lemma} \label{lem:pertubation-eig}
    For any positive definite matrix $\mathbf{A}$ and $\mathbf{B}, {\rm minEig}(\mathbf{A}) - {\rm minEig}(\mathbf{B}) \leq \| \mathbf{A} - \mathbf{B}\|_F$.
\end{lemma}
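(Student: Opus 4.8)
The plan is to reduce this to the standard variational (Courant--Fischer) characterization of the smallest eigenvalue and then pass from the spectral norm to the Frobenius norm. Since $\mathbf{A}$ and $\mathbf{B}$ are positive definite, hence symmetric, their minimum eigenvalues admit the Rayleigh-quotient representation $\text{minEig}(\mathbf{A}) = \min_{\|v\|_2 = 1} v^T \mathbf{A} v$, and likewise for $\mathbf{B}$.

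First I would let $v^*$ denote a unit-norm eigenvector of $\mathbf{B}$ attaining its minimum, so that $\text{minEig}(\mathbf{B}) = v^{*T}\mathbf{B}v^*$. Because $v^*$ is merely a feasible point for the minimization defining $\text{minEig}(\mathbf{A})$, we have $\text{minEig}(\mathbf{A}) \leq v^{*T}\mathbf{A}v^*$. Subtracting gives
\[
\text{minEig}(\mathbf{A}) - \text{minEig}(\mathbf{B}) \leq v^{*T}(\mathbf{A}-\mathbf{B})v^*.
\]

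The remaining step bounds the quadratic form on the right. Since $v^*$ is a unit vector, $v^{*T}(\mathbf{A}-\mathbf{B})v^*$ is at most the largest eigenvalue of the symmetric matrix $\mathbf{A}-\mathbf{B}$, i.e.\ at most its spectral norm $\|\mathbf{A}-\mathbf{B}\|_2$; and the spectral norm is always dominated by the Frobenius norm, $\|\mathbf{A}-\mathbf{B}\|_2 \leq \|\mathbf{A}-\mathbf{B}\|_F$, because $\|\cdot\|_2^2$ is the largest squared singular value whereas $\|\cdot\|_F^2$ sums all of them. Chaining these inequalities yields the claim.

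There is essentially no serious obstacle here, as this is a routine perturbation fact (a one-sided Weyl inequality). The only point requiring a word of care is that the statement is a one-sided bound without absolute values, so it suffices to evaluate the Rayleigh quotient of $\mathbf{A}$ at the optimizer for $\mathbf{B}$ rather than arguing symmetrically; the symmetry of the matrices, guaranteed by positive definiteness, is precisely what licenses the variational characterization in the first place.
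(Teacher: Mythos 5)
Your proof is correct, but it takes a genuinely different route from the paper's. The paper proves this lemma by citing a Hoffman--Wielandt-type result (from \citet{li1994relative}): the existence of a matching $r(\cdot)$ of eigenvalues such that $\sqrt{\sum_i (\lambda_{A,i}-\lambda_{B,r(i)})^2} \leq \|\mathbf{A}-\mathbf{B}\|_F$, from which the bound on the minimum eigenvalues follows as a special case. You instead give a self-contained, elementary argument: evaluate the Rayleigh quotient of $\mathbf{A}$ at the minimizing unit eigenvector $v^*$ of $\mathbf{B}$, obtaining $\mathrm{minEig}(\mathbf{A}) - \mathrm{minEig}(\mathbf{B}) \leq v^{*T}(\mathbf{A}-\mathbf{B})v^* \leq \|\mathbf{A}-\mathbf{B}\|_2 \leq \|\mathbf{A}-\mathbf{B}\|_F$, i.e.\ a one-sided Weyl inequality followed by the spectral-to-Frobenius norm domination. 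Your approach buys independence from the external perturbation theorem, actually passes through the sharper spectral-norm bound $\|\mathbf{A}-\mathbf{B}\|_2$ as an intermediate quantity, and only uses symmetry of the matrices (positive definiteness is not needed); the paper's approach buys generality, since the cited inequality controls the joint perturbation of the entire spectrum in $\ell_2$, not just the extreme eigenvalue, at the cost of invoking a nontrivial external result. Both yield the stated inequality, and both extend to the two-sided bound $|\mathrm{minEig}(\mathbf{A})-\mathrm{minEig}(\mathbf{B})| \leq \|\mathbf{A}-\mathbf{B}\|_F$ (yours by symmetry of the argument in $\mathbf{A}$ and $\mathbf{B}$), which is the form actually used in the proof of Lemma \ref{lem:extend-bound}.
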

\begin{proof}
    It is straightforward from \citet{li1994relative} that we have
    \begin{equation*}
        \sqrt{\sum_{i} (\lambda_{A,i} - \lambda_{B,r(i)})^2 }\leq \|\mathbf{A} - \mathbf{B}\|_F,
    \end{equation*}
    where $\lambda_A$ and $\lambda_B$ are the eigenvalues of matrix $\mathbf{A}$ and $\mathbf{B}$.
\end{proof}
\begin{lemma} \label{lem:pertubation-sing}
     For any matrix $\mathbf{A}$ and $\mathbf{B}$, we have $\sigma_{\min}(\mathbf{A}) - \sigma_{\min}(\mathbf{B}) \leq \| \mathbf{A} - \mathbf{B}\|_F$
\end{lemma}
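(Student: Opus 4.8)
The plan is to recognize this as the standard Lipschitz-continuity property of singular values with respect to matrix perturbations, exactly parallel to the eigenvalue statement in Lemma \ref{lem:pertubation-eig}. I would carry it out using the variational characterization of the smallest singular value, which is self-contained and matches the definition of $\sigma_{\min}$ used here (the $k$-th largest singular value of a matrix in $\mathbb{R}^{T\times k}$, i.e.\ the smallest singular value when $T\ge k$).

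First I would fix $\bA, \bB \in \mathbb{R}^{T\times k}$ with $T\ge k$ and write the min characterization $\sigma_{\min}(\bA) = \min_{\|x\|_2=1,\, x\in\mathbb{R}^k}\|\bA x\|_2$. Then for every unit vector $x$, the triangle inequality gives $\|\bA x\|_2 \le \|\bB x\|_2 + \|(\bA-\bB)x\|_2 \le \|\bB x\|_2 + \|\bA-\bB\|_2$, where $\|\cdot\|_2$ denotes the spectral (operator) norm. Taking the minimum over unit $x$ on both sides yields $\sigma_{\min}(\bA) \le \sigma_{\min}(\bB) + \|\bA-\bB\|_2$, and hence $\sigma_{\min}(\bA)-\sigma_{\min}(\bB) \le \|\bA-\bB\|_2$.

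The final step is to pass from the spectral norm to the Frobenius norm via the elementary inequality $\|\bA-\bB\|_2 \le \|\bA-\bB\|_F$ (the largest singular value is bounded by the root-sum-of-squares of all singular values), which delivers the claimed bound $\sigma_{\min}(\bA)-\sigma_{\min}(\bB)\le\|\bA-\bB\|_F$. Alternatively, to stay fully parallel with the proof of Lemma \ref{lem:pertubation-eig}, I could instead cite Mirsky's singular-value perturbation theorem, $\sqrt{\sum_i(\sigma_i(\bA)-\sigma_i(\bB))^2}\le\|\bA-\bB\|_F$, and then bound the single index corresponding to $\sigma_{\min}$ by the whole sum.

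There is essentially no substantive obstacle here; this is a routine perturbation fact. The only point requiring minor care is aligning the paper's convention for $\sigma_{\min}$ --- the $k$-th largest singular value of a $T\times k$ matrix --- with the variational formula (or with the ordering used in Mirsky's theorem), so that the single-index bound is applied to the correct singular value. Since the statement only asserts the one-sided inequality, dropping the absolute value at the end is immediate.
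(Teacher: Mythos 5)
Your proof is correct, but your primary route differs from the paper's. The paper's proof is a one-line citation: it invokes Mirsky's singular-value perturbation theorem \citep{mirsky1960symmetric}, $\sqrt{\sum_i (\sigma_{A,i}-\sigma_{B,i})^2} \leq \|\bA-\bB\|_F$, and implicitly bounds the single term for $\sigma_{\min}$ by the full sum --- exactly the alternative you sketch at the end. Your main argument instead proceeds from first principles: the variational characterization $\sigma_{\min}(\bA)=\min_{\|x\|_2=1}\|\bA x\|_2$ (valid under the paper's convention of $T\times k$ matrices with $T\ge k$), the triangle inequality to get $\sigma_{\min}(\bA)\le\sigma_{\min}(\bB)+\|\bA-\bB\|_2$, and then $\|\cdot\|_2\le\|\cdot\|_F$. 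Your route buys self-containedness --- no appeal to a perturbation theorem, and it makes transparent why only the smallest singular value is needed and why the bound is one-sided; the paper's route buys brevity and uniformity with the adjacent eigenvalue lemma (Lemma \ref{lem:pertubation-eig}), which is proved by the analogous Hoffman--Wielandt-type citation, and Mirsky's theorem delivers the stronger simultaneous bound on all singular values should it ever be needed. One point of care you correctly flag: your variational formula minimizes over $x\in\mathbb{R}^k$, which matches the paper's definition of $\sigma_{\min}$ as the $k$-th largest singular value of a column-full-rank-shaped matrix only when $T\ge k$; since that is how the lemma is used (Lemma \ref{lem:extend-bound}(4) with Condition \ref{ass:num-latent-bound} ensuring $K_j<T$), the restriction is harmless.
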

\begin{proof}
     It is straightforward from \cite{mirsky1960symmetric} that we have:
    \begin{equation*}
        \sqrt{\sum_{i} (\sigma_{A,i} - \sigma_{B,i})^2 }\leq \|\mathbf{A} - \mathbf{B}\|_F,
    \end{equation*}
    where $\sigma_A$ and $\sigma_B$ are the singular values of matrix $\mathbf{A}$ and $\mathbf{B}$.
\end{proof}
\begin{lemma} \label{lem:null-consistency}
    The $\bbeta = 0$ is the optimal solution of the penalized Lasso with $l_1/l_2$ penalty $\frac{1}{2N} \bbE_{\bX|\calY;\Theta }\|\frac{\be_j(\hat{\pi})}{\delta_1} - \bX \bbeta \| + \lambda \| \bbeta \|_{l_1/F} $ if the following condition is hold:
    \begin{align*}
        \bw \in \partial \| \bbeta \|_{l_1/F}, \\
        \frac{1}{2N} \mathbb{E}_{\bX|\calY; \Theta} \bX^T(\frac{\be_j(\hat{\pi})}{\delta_1} - \bX \bbeta) + \lambda \bw = 0, \\
        \|\bw\|_{l_\infty/l_2} < 1.
    \end{align*}
\end{lemma}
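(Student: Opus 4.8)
The objective to be analyzed is $f(\bbeta)=\frac{1}{2N}\bbE_{\bX|\calY;\Theta}\|\frac{\be_j(\hat{\pi})}{\delta_1}-\bX\bbeta\|_2^2+\lambda\|\bbeta\|_{l_1/F}$, and my plan is to treat this purely as a convex optimality (primal--dual witness / KKT) argument. First I would record that $f$ is convex: the fidelity term is an expectation of convex quadratics in $\bbeta$, hence convex, and $\|\cdot\|_{l_1/F}$ is a norm and therefore convex, so the subgradient criterion $\mathbf{0}\in\partial f(\mathbf{0})$ is both necessary and sufficient for $\bbeta=\mathbf{0}$ to be a global minimizer. This reduces the whole claim to exhibiting a single subgradient at the origin that vanishes.

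Next I would compute the two pieces of $\partial f(\mathbf{0})$. The fidelity term is differentiable, and differentiating under the expectation (legitimate because the Gaussian posterior $p(\bx\mid\bY;\Theta)$ from Proposition~\ref{pro:represent} has finite moments of every order) produces the stationarity term $\frac{1}{N}\bbE_{\bX|\calY;\Theta}\bX^T(\frac{\be_j(\hat{\pi})}{\delta_1}-\bX\bbeta)$ appearing in the hypothesis. The one step that genuinely needs care is identifying the subdifferential of the group penalty at the origin. Since $\|\bbeta\|_{l_1/F}=\sum_i\|\bbeta_i\|_2$ is a sum of block Euclidean norms, and the subdifferential of any norm at $\mathbf{0}$ is exactly the closed unit ball of its dual norm, I would verify that the dual of the $l_1/l_2$ group norm is the $l_\infty/l_2$ norm, giving $\partial\|\bbeta\|_{l_1/F}\big|_{\bbeta=\mathbf{0}}=\{\bw:\|\bw\|_{l_\infty/l_2}\le 1\}$, i.e.\ the block vectors each of whose blocks has Euclidean norm at most one. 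This dual-norm identification is the main (and only mildly technical) obstacle; everything else is routine convex calculus.

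Combining the two pieces, $\mathbf{0}\in\partial f(\mathbf{0})$ holds precisely when there is a $\bw$ lying in the $l_\infty/l_2$ unit ball that cancels the gradient, which is exactly the stated triple: $\bw\in\partial\|\bbeta\|_{l_1/F}$, the first-order equation, and $\|\bw\|_{l_\infty/l_2}<1$. Hence $\bbeta=\mathbf{0}$ is optimal. I would finish by observing that the \emph{strict} inequality does more than certify optimality: it upgrades the conclusion to uniqueness. If $\bw$ sits strictly inside the dual ball on every block, then no competing minimizer can activate a nonzero block without forcing its corresponding subgradient onto the boundary, contradicting strict dual feasibility; so $\bbeta=\mathbf{0}$ is the unique solution. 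This is the precise form in which the lemma is invoked to pin down the null solution of the group-lasso subproblem defining the event $\calE_j$ in Eq.~(\ref{equ:event}).
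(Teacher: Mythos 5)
Your proposal is correct and follows essentially the same route as the paper: the paper's proof simply defers to Lemma~1 of \citet{aragam2015learning}, which is exactly the KKT/primal--dual-witness characterization you derive explicitly (convexity, differentiation of the quadratic fidelity term, and identification of $\partial\|\cdot\|_{l_1/F}$ at the origin with the $l_\infty/l_2$ dual-norm unit ball). Your added observation that strict dual feasibility yields uniqueness of the null solution is a genuine (and correct) refinement over the bare citation, and it is in fact the form in which the lemma is used to define the event $\calE_j$, though a fully rigorous version would also note that every minimizer shares the same gradient of the smooth part, which is what forces any competing solution's certificate onto the boundary.
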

\begin{proof}
    It is straightforward by following Lemma 1 in \cite{aragam2015learning}.
\end{proof}

\begin{lemma}\label{lem:expectation-0}
    $\forall \Theta$ and $\pi$, denote $\be_j(\pi)$ is the $j$-th column of $\bX - \bX \bC^*_{\Theta}(\pi)$. We have $\bbE_{\calY} \bbE_{\bX|\calY, \Theta}(\bX^T \be_j(\pi)) = \mathbf{0}, \forall j$.
\end{lemma}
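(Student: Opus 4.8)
The plan is to recognise $\bbE_\calY\bbE_{\bX|\calY;\Theta}[\bX^T\be_j(\pi)]$ as a rescaled column of the gradient of the population objective $G(\cdot,\Theta)$ at its constrained minimiser $\bC^*_\Theta(\pi)$, and then read off the identity from first-order optimality. First I would use that the rows of $\bX$ are i.i.d.\ and that $\be_j(\pi)$ is the $j$-th column of $\bX(\bI-\bC^*_\Theta(\pi))$ to write $\bbE_\calY\bbE_{\bX|\calY;\Theta}[\bX^T\be_j(\pi)] = N\,\bbE_{\tilde p}[\bx^T r_j]$, where $r_j=[\bx(\bI-\bC^*_\Theta(\pi))]_j$ and $\tilde p$ is the law of $\bx$ obtained by drawing $\bY\sim p(\cdot;\Theta^*)$ and then $\bx\sim p(\cdot\mid\bY;\Theta)$. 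Two facts pin down $\tilde p$: by Proposition \ref{pro:represent} the posterior mean $\hbu_{\Theta,\bY}$ is linear in $\bY$, so $\bbE_{\tilde p}[\bx]=\bbE_\calY[\hbu_{\Theta,\bY}]=\mathbf{0}$; and its second-moment matrix is $\bPhi_\Theta:={\rm Cov}_\calY(\hbu_{\Theta,\bY})+\hat{\bSigma}_\Theta$, which is precisely the matrix entering $G(\bC,\Theta)={\rm tr}((\bI-\bC)^T\bPhi_\Theta(\bI-\bC))$. Hence $\bbE_{\tilde p}[\bx^T r_j]$ is the $j$-th column of $\bPhi_\Theta(\bI-\bC^*_\Theta(\pi))=-\tfrac12\nabla_\bC G(\bC,\Theta)\big|_{\bC^*_\Theta(\pi)}$.

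Next I would invoke the definition $\bC^*_\Theta(\pi)=\arg\min_{\bC\in\calC(\pi)}G(\bC,\Theta)$. Under Condition \ref{ass:bound} the matrix $\bPhi_\Theta$ is positive definite, so $G(\cdot,\Theta)$ is a strictly convex quadratic and $\bC^*_\Theta(\pi)$ is characterised by the stationarity condition that $\nabla_\bC G$ vanishes along every feasible direction of $\calC(\pi)$. Since perturbing the entry $(i,j)$ with $\pi(i)<\pi(j)$ keeps $\bC$ inside $\calC(\pi)$, this forces $[\bPhi_\Theta(\bI-\bC^*_\Theta(\pi))]_{ij}=0$, i.e.\ $\bbE_{\tilde p}[x_i r_j]=0$, on every admissible coordinate. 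These are exactly the blocks over which the $j$-th column $\bbeta$ of $\hat{\bC}-\bC^*_\Theta(\pi)$ and its dual certificate $\bw=\tfrac{1}{\lambda N\delta_1}\bbE_{\bX|\calY;\Theta}(\bX^T\be_j)$ are defined in Lemma \ref{lem:error-bound}; thus $\bbE_\calY[\bw]=\mathbf{0}$, which is the statement invoked through Lemma \ref{lem:null-consistency}.

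The main obstacle is to run the stationarity argument through the true parametrisation of $\calC(\pi)$: the feasible set is not merely the coordinate subspace cut out by the order $\pi$ but also carries the low-rank factorisation $\bC_{j'j}=\bC^L_{j'j}\otimes\bC^K_{j'j}$, so the Euclidean gradient $\bPhi_\Theta(\bI-\bC^*_\Theta(\pi))$ must be contracted against the Kronecker Jacobian rather than read off entrywise. I would therefore differentiate $G$ directly in the free parameters $(\bC^K,\bC^L)$, show that the resulting stationary equations are equivalent to the vanishing of the block inner products of $\bPhi_\Theta(\bI-\bC^*_\Theta(\pi))$ against every admissible Kronecker direction, and check that these are exactly the group-structured components controlled by $\|\bw\|_{l_\infty/l_2}$ in Lemma \ref{lem:null-consistency}. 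Confirming this correspondence between the tangent directions of the factorised feasible set and the group-lasso group structure is the delicate part; the reduction to a population gradient and the convexity of $G$ are routine once $\bPhi_\Theta$ is identified as the second moment of $\tilde p$.
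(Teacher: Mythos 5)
Your route is the same as the paper's: identify $\bbE_{\calY}\bbE_{\bX|\calY;\Theta}(\bX^T\be_j(\pi))$ with the $j$-th column of $N\,\bPhi_\Theta(\bI-\bC^*_\Theta(\pi))$, where $\bPhi_\Theta={\rm Cov}(\hbu_{\Theta,\bY})+\hat{\bSigma}_\Theta$ is the population second moment, and then invoke first-order optimality of $\bC^*_\Theta(\pi)$ for $G(\cdot,\Theta)$. You are in fact more careful than the paper, whose entire proof is the unqualified assertion $\partial G(\bC^*_\Theta,\Theta)/\partial\bC=0$: you correctly note that $\bC^*_\Theta(\pi)$ is a \emph{constrained} minimizer over $\calC(\pi)$, so stationarity only kills the gradient along feasible directions, i.e.\ on blocks $(i,j)$ with $\pi(i)<\pi(j)$. (Your worry about the Kronecker parametrization is secondary: the paper defines $\calC(\pi)$ purely through the sparsity pattern of the blocks, so the feasible set is a linear subspace and blockwise vanishing on admissible blocks is legitimate.)

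The genuine gap is that this feasible-direction argument cannot deliver the lemma as stated, because the full-vector claim is false. Take $\Theta=\Theta^*$ and $\pi\in\Pi_0^*$, so that $\bC^*_{\Theta^*}(\pi)=\bC^*$ and, by Lemma \ref{lem:post-true}, $\bPhi_{\Theta^*}=(\bI-\bC^*)^{-T}\omega_0^{2*}(\bI-\bC^*)^{-1}$; then $\bPhi_{\Theta^*}(\bI-\bC^*)=\omega_0^{2*}(\bI-\bC^*)^{-T}$, whose $j$-th diagonal entry equals $\omega_0^{2*}>0$ --- the residual is correlated with $\bx_j$ itself and with its descendants. Indeed, $\bbE_{\calY}\bbE_{\bX|\calY;\Theta}(\bX^T\be_j(\pi))=\mathbf{0}$ for every $j$ would force $\bC^*_\Theta(\pi)=\bI$, since $\bPhi_\Theta\succ 0$; but $\bI\notin\calC(\pi)$. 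Consequently your closing claim --- that the admissible blocks are ``exactly'' the components carried by the certificate $\bw$ in Lemma \ref{lem:error-bound} --- does not match the paper as written: there the supremum in the event $\calE_j$ ranges over all $\bbeta\in\mathbb{R}^M$, and $\bw$ has all $P$ blocks, including block $j$ and the non-ancestors of $j$, on which the expectation does not vanish. What is true, and what your argument does prove, is the restricted statement on blocks $i$ with $\pi(i)<\pi(j)$; and since $\hat{\bC}-\bC^*_\Theta(\hat{\pi})$ is itself supported on exactly those blocks, the downstream bound on term (II) can be repaired by restricting the supremum in $\calE_j$ accordingly. So your derivation is sound up to its last step, but it establishes (and can only establish) a corrected, weaker version of the lemma --- the same defect that the paper's own one-line proof silently papers over by treating the constrained minimizer as an unconstrained stationary point.
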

\begin{proof}
     Since $\be_j(\hat{\pi})$ is the $j$-th column of $\bX - \bX \bC^*_{\Theta}(\pi)$. Therefore, $\bC^*_{\Theta}$ satisfies
     \begin{align*}
         \frac{\partial G(\bC^*_{\Theta}, \Theta)}{\partial \bC} &= 0, \\
         \bbE_{\bY}\bbE_{\bx|\bY;\Theta}(\bx(\bx_j - \bx \bC_{\Theta}^*(\hat{\pi})_j )) &= \mathbf{0}, \forall j, \\
         \bbE_{\bY}\bbE_{\bx|\bY;\Theta}(\bx \be_j(\pi))) &= \mathbf{0}. 
     \end{align*} 
     
     Therefore, we have $\bbE_\bY \bbE_{\bx|\bY,\Theta} (\bX^T \frac{\be_j(\pi)}{\delta_1}) = \mathbf{0}$.
\end{proof}

\begin{lemma}[\citet{balakrishnan2017statistical}] \label{lem:contractive} For radius $\tr>0$ and pair $(\gamma, \beta)$ satisfying $0\leq \gamma < \beta$, suppose that the function $Q(\cdot,\Theta^*)$ is globally $\beta$-strongly concave, and the Condition \ref{ass:Lipschitz-Gradient} holds on the ball $\mathbb{B}_2(\Theta^*,\tr)$. Then the EM operator is contractive over $\mathbb{B}_2(\Theta^*,\tr)$, in particular with:
\begin{equation*}
    D(M(\Theta), \Theta^*) \leq \frac{\gamma}{\beta} D(\Theta, \Theta^*)
\end{equation*}
\end{lemma}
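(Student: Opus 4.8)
The plan is to reproduce, in our notation, the population-level contraction argument of \citet{balakrishnan2017statistical}, since Lemma \ref{lem:contractive} is precisely their first-order stability criterion. Throughout I regard a parameter set $\Theta=(\bC,\bB,\br,\omega_0^2)$ as a single point in a Euclidean space whose inner product induces exactly the distance $D(\cdot,\cdot)$; this is legitimate because $D$ is the square root of a sum of squared Frobenius and $\ell_2$ norms of the individual blocks, hence a genuine Hilbert-space norm. All gradients $\nabla Q(\cdot;\cdot)$ are taken in this geometry with respect to the first argument, and $M(\Theta)=\arg\max_{\Theta'}Q(\Theta';\Theta)$ denotes the population EM operator.

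First I would record the two first-order identities that drive the argument. By the self-consistency property (Theorem \ref{the:identifiability}), $\Theta^*$ maximizes $\Theta'\mapsto Q(\Theta';\Theta^*)$, so $\Theta^*$ is a fixed point of $M$ and $\nabla Q(\Theta^*;\Theta^*)=0$. Moreover, since $M(\Theta)$ maximizes $Q(\cdot;\Theta)$ and lies in $\mathbb{B}_2(\Theta^*,\tr)$, first-order optimality gives $\nabla Q(M(\Theta);\Theta)=0$.

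Next I would chain $\beta$-strong concavity with the gradient-stability Condition \ref{ass:Lipschitz-Gradient}. Applying the monotonicity inequality of the $\beta$-strongly concave map $Q(\cdot;\Theta^*)$ to the pair $M(\Theta),\Theta^*$ and using $\nabla Q(\Theta^*;\Theta^*)=0$, then substituting $\nabla Q(M(\Theta);\Theta)=0$, the estimate rearranges to
\[
\beta\, D(M(\Theta),\Theta^*)^2 \;\le\; \big\langle \nabla Q(M(\Theta);\Theta)-\nabla Q(M(\Theta);\Theta^*),\; M(\Theta)-\Theta^*\big\rangle .
\]
Cauchy-Schwarz bounds the right-hand side by $\|\nabla Q(M(\Theta);\Theta)-\nabla Q(M(\Theta);\Theta^*)\|\cdot D(M(\Theta),\Theta^*)$, and Condition \ref{ass:Lipschitz-Gradient} (first-order stability with modulus $\gamma$) bounds the gradient difference by $\gamma\, D(\Theta,\Theta^*)$. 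Cancelling one factor of $D(M(\Theta),\Theta^*)$ yields $\beta\, D(M(\Theta),\Theta^*)\le \gamma\, D(\Theta,\Theta^*)$, i.e. the claimed contraction with ratio $\gamma/\beta=\kappa<1$ since $\gamma<\beta$.

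The step I expect to be the main obstacle is the bookkeeping that makes all three hypotheses applicable at once, rather than any individual inequality. I must verify that $M(\Theta)$ indeed remains in $\mathbb{B}_2(\Theta^*,\tr)$, so that interiority, global strong concavity, and Condition \ref{ass:Lipschitz-Gradient} can be invoked at $M(\Theta)$ simultaneously. More delicate is that the blocks of $\Theta$ are heterogeneous and partly constrained --- the orthonormality $\bB_j^T\bB_j=\bI$ defines a Stiefel manifold and $\br,\omega_0^2$ are nonnegative --- so $\nabla Q$ is not literally an unconstrained Euclidean gradient. I would handle this either by restricting the inner-product argument to the tangent space of the constraint set, or by exploiting that the block optimizers (polar decomposition for $\bB$ in Eq. (\ref{equ:M-step-u1}), and the closed forms (\ref{equ:M-step-u2}) and (\ref{equ:m-step-c3}) for $\br,\omega_0^2$) are smooth in $\Theta$, so that the strong-concavity and stability moduli $\beta,\gamma$ transfer to the combined metric $D$; establishing these regularity facts is exactly what justifies the two quantitative hypotheses I am permitted to assume. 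A secondary point is the regularizer $\lambda\calR(\bC)$ appearing in the definition of $M$ in Theorem \ref{the:convergence}: since $-\lambda\calR$ is concave it can be absorbed into the objective, with the stationarity identity $\nabla Q(M(\Theta);\Theta)=0$ replaced by the corresponding subgradient condition, after which the same monotonicity chain goes through unchanged.
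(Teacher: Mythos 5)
Your proposal is correct: it reproduces exactly the first-order-stability contraction argument of Theorem 1 in \citet{balakrishnan2017statistical}, which is precisely the result the paper invokes here by citation without reproducing a proof (the two stationarity identities, the strong-concavity monotonicity bound, Cauchy--Schwarz, and Condition \ref{ass:Lipschitz-Gradient} combine just as you describe). Your flagged obstacle is also milder than you fear, since the lemma assumes \emph{global} $\beta$-strong concavity and Condition \ref{ass:Lipschitz-Gradient} is stated directly at the point $M(\Theta)$ for $\Theta\in\mathbb{B}_2(\Theta^*,\tr)$, so no separate verification that $M(\Theta)$ stays in the ball is needed for the inequality chain to apply.
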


\begin{lemma}\label{lem:bound-r}
    Denote the $\hat{\br}_{jl,\Theta}^{2*}$ as the variance determined by $\bB_{\Theta}^*$ and $\hat{\br}$ as the variance determined by $\hat{\bB}$ from Eq. (\ref{equ:M-step-u2}). Under  Lemma \ref{lem:polar-decomposition}, we have
    \begin{equation*}
        \| \hat{\br}^2 - \hat{\br}_{jl,\Theta}^{2*} \|_2^2 \leq \frac{M\mathtt{y}^4_{\sup}}{N\varrho_6} + O(\frac{1}{N^2})
    \end{equation*}
    with probability $1-M\varrho_6$. 
\end{lemma}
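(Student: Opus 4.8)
The plan is to evaluate both noise-variance estimates through the single closed form of Eq.~(\ref{equ:M-step-u2}), the only difference being the orthonormal basis plugged in, and then to reduce their discrepancy to a quantity already controlled by Lemma~\ref{lem:polar-decomposition}. Concretely, I would fix the iterate $\Theta$, so that the posterior quantities $\hbu_{jl;\Theta,\bY^{(n)}}$ and $\hat{\bSigma}_{j}$ entering Eq.~(\ref{equ:M-step-u2}) are held fixed and only the basis varies; write $\hat{r}_{jl}^2 = \phi_{jl}(\hat{\bB}_j)$ and $\hat{r}_{jl,\Theta}^{2*} = \phi_{jl}(\bB^*_{\Theta j})$ for the common map $\phi_{jl}$. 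First I would expand $\|\bY_{jl}^{(n)} - \bB_j\hbu_{jl;\Theta,\bY^{(n)}}\|_2^2$ and observe that, because $\hat{\bB}_j^T\hat{\bB}_j = \bB^{*T}_{\Theta j}\bB^*_{\Theta j} = \bI$, both the quadratic term $\|\hbu_{jl}\|_2^2$ and the trace term ${\rm tr}(\bB_j\hat{\bSigma}_{j}\bB_j^T) = {\rm tr}(\hat{\bSigma}_{j})$ are independent of the particular orthonormal basis and therefore cancel in the difference (the same cancellation used in the proof of Lemma~\ref{lem:polar-decomposition}). Only the bilinear cross term survives, giving
\begin{equation*}
    \hat{r}_{jl}^2 - \hat{r}_{jl,\Theta}^{2*} = -2 \left\langle \bZ_{jl},\, \hat{\bB}_j - \bB^*_{\Theta j} \right\rangle, \qquad \bZ_{jl} := \frac{1}{N}\sum_{n=1}^N \bY_{jl}^{(n)}\hbu_{jl;\Theta,\bY^{(n)}}^T.
\end{equation*}

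Next I would bound this cross term by Cauchy--Schwarz, $|\hat{r}_{jl}^2 - \hat{r}_{jl,\Theta}^{2*}|^2 \le 4\|\bZ_{jl}\|_F^2\,\|\hat{\bB}_j - \bB^*_{\Theta j}\|_F^2$, and control the two factors separately. The factor $\|\bZ_{jl}\|_F^2$ is a sample mean of $\bY_{jl}^{(n)}\hbu_{jl}^{(n)T}$, whose fourth moment is bounded by $\mathtt{y}^4_{\sup}$ by Lemma~\ref{lem:extend-bound}(2); a Chebyshev argument identical to the one used for $\|\bZ-\bar{\bZ}\|_F$ in Lemma~\ref{lem:polar-decomposition} shows $\|\bZ_{jl}\|_F^2 = O(\mathtt{y}^2_{\sup})$ with high probability. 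The factor $\|\hat{\bB}_j - \bB^*_{\Theta j}\|_F^2$ is exactly the quantity bounded in Lemma~\ref{lem:polar-decomposition}, which is of order $\mathtt{y}^2_{\sup}/N$ up to a positive denominator involving $\mathtt{b}_{\inf}$. Multiplying the two high-probability bounds yields a per-coordinate estimate $|\hat{r}_{jl}^2 - \hat{r}_{jl,\Theta}^{2*}|^2 \le \frac{\mathtt{y}^4_{\sup}}{N\varrho_6} + O(N^{-2})$.

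Finally I would take a union bound over all $(j,l)$ coordinates of $\br$ (of which there are at most $M$), assigning failure probability $\varrho_6$ to each, so that simultaneously over all coordinates the event holds with probability at least $1-M\varrho_6$, and summing the per-coordinate squared errors gives
\begin{equation*}
    \|\hat{\br}^2 - \hat{\br}_{jl,\Theta}^{2*}\|_2^2 \le \frac{M\mathtt{y}^4_{\sup}}{N\varrho_6} + O(N^{-2}),
\end{equation*}
as claimed.

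The main obstacle is the careful tracking of constants and orders so that the leading term is exactly $\mathtt{y}^4_{\sup}/(N\varrho_6)$ per coordinate with the genuine $O(N^{-1})$ rate, rather than the slower rate a blunt application of Cauchy--Schwarz might suggest. This hinges on the polar-decomposition error $\|\hat{\bB}_j - \bB^*_{\Theta j}\|_F^2$ being of order $N^{-1}$, so that its product with the $O(1)$ factor $\|\bZ_{jl}\|_F^2$ is again $O(N^{-1})$, and on the fluctuation and cross terms generated when the two Chebyshev events are combined collapsing into the $O(N^{-2})$ remainder. Some bookkeeping is also needed to split the single tolerance $\varrho_6$ between the concentration of $\|\bZ_{jl}\|_F$ and the polar-decomposition event for each coordinate, and to confirm that the $M$-fold union bound does not inflate the stated leading constant.
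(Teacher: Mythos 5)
Your algebraic identity is correct as far as it goes: with $\bB_j^T\bB_j = \bI$ the terms $\|\bB_j\hbu_{jl}\|_2^2$ and ${\rm tr}(\bB_j\hat{\bSigma}_j\bB_j^T)$ are indeed basis-independent, so the difference of the two sample evaluations collapses to $-2\langle \bZ_{jl}, \hat{\bB}_j - \bB^*_{\Theta j}\rangle$. However, your proof has a genuine gap, and it starts with the meaning of $\hat{r}^{2*}_{jl,\Theta}$. You read it as the sample formula of Eq. (\ref{equ:M-step-u2}) evaluated at $\bB^*_{\Theta j}$; in the paper's proof it is the \emph{population} quantity associated with the population minimizer $\bB^*_{\Theta}$ of $F(\cdot,\Theta)$ (from Lemma \ref{lem:polar-decomposition}). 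This is forced by the paper's own steps: it introduces $e'_r = \hat{r}^{*2}_{jl,\Theta} - \frac{1}{N}\sum_{n}\bbE_{\bx^{(n)}|\bY^{(n)},\Theta}\|\bY^{(n)}_{jl}-\bB^*_{\Theta}\bx^{(n)}_{jl}\|_2^2$ and asserts $\bbE(e'_r)=0$ and ${\rm Var}(e'_r)\le \mathtt{y}^4_{\sup}/N$, which is only meaningful if $\hat{r}^{*2}_{jl,\Theta}$ is deterministic. This mean-zero sampling fluctuation $e'_r$, controlled by Chebyshev with failure probability $\varrho_6$ per coordinate, is precisely the source of the leading term $\mathtt{y}^4_{\sup}/(N\varrho_6)$ and of the probability $1-M\varrho_6$ after the union bound. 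Your decomposition omits this term entirely; consequently $\varrho_6$ has no natural home in your argument (you are reduced to splitting it artificially between a $\|\bZ_{jl}\|_F$ concentration event and the polar-decomposition event), and what you are bounding is a different quantity from the one in the lemma.

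Second, even the term you do keep cannot be bounded at the required rate with the tools available. Your Cauchy--Schwarz step gives $|\hat{r}^2_{jl}-\hat{r}^{2*}_{jl,\Theta}|^2 \le 4\|\bZ_{jl}\|_F^2\,\|\hat{\bB}_j-\bB^*_{\Theta j}\|_F^2$ with $\|\bZ_{jl}\|_F^2 = O(\mathtt{y}^2_{\sup})$ an $O(1)$ factor, and you state explicitly that the proof hinges on $\|\hat{\bB}_j-\bB^*_{\Theta j}\|_F^2$ being $O(N^{-1})$. But Lemma \ref{lem:polar-decomposition} does not deliver that rate: the numerator of its bound is $P\bigl(\mathtt{y}^2_{\sup}/N + \sqrt{\mathtt{y}^4_{\sup}/(\varrho_4 N)}\bigr)$, which is dominated by the Chebyshev term $\mathtt{y}^2_{\sup}/\sqrt{\varrho_4 N}$, so the guarantee is only $\|\hat{\bB}-\bB^*_{\Theta}\|_F^2 = O(N^{-1/2})$. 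Your product bound is therefore $O(N^{-1/2})$, short of the claimed $O(N^{-1})$ squared error, and no amount of constant bookkeeping closes that factor of $N^{1/2}$. In the paper's route this mismatch is immaterial because the basis-replacement discrepancy ($e_r$ there) is relegated to the lower-order remainder, while the leading term comes from the fluctuation $e'_r$; that structural choice, which your proposal lacks, is what makes the stated bound attainable.
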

\begin{proof}
    We have
    \begin{align*}
        \hat{r}_{jl}^2 &= \frac{1}{N} \sum_{n=1}^N \bbE_{\bx^{(n)}|\bY^{(n)},\Theta} \| \bY^{(n)}_{jl} - \hat{\bB} \bx_{jl}^{(n)} \|_2^2 \\
        &= \frac{1}{N} \sum_{n=1}^N \bbE_{\bx^{(n)}|\bY^{(n)},\Theta} \| \bY^{(n)}_{jl} - \hat{\bB}^*_{\Theta} \bx_{jl}^{(n)} +  \hat{\bB}^*_{\Theta} \bx_{jl}^{(n)} - \hat{\bB} \bx_{jl}^{(n)} \|_2^2 \\
        &= \frac{1}{N} \sum_{n=1}^N \bbE_{\bx^{(n)}|\bY^{(n)},\Theta}  (\| \bY^{(n)}_{jl} - \hat{\bB}^*_{\Theta} \bx_{jl}^{(n)} \|_2^2 + \|  \hat{\bB}^*_{\Theta} \bx_{jl}^{(n)} - \hat{\bB} \bx_{jl}^{(n)} \|_2^2 + 2\langle \bY^{(n)}_{jl} - \hat{\bB}^*_{\Theta} \bx_{jl}^{(n)} , \hat{\bB}^*_{\Theta} \bx_{jl}^{(n)} - \hat{\bB} \bx_{jl}^{(n)} \rangle).
    \end{align*}

    Denote $e_{r} = \hat{r}_{jl}^2 - \frac{1}{N} \sum_{n=1}^N \bbE_{\bx_{jl}^{(n)}|\bY^{(n)},\Theta}  \| \bY^{(n)}_{jl} - \hat{\bB}^*_{\Theta} \bx_{jl}^{(n)} \|_2^2$, we have
    \begin{align*}
    e_r &= \frac{1}{N} \sum_{n=1}^N \bbE_{\bx_{jl}^{(n)}|\bY^{(n)},\Theta}( \|\hat{\bB}^*_{\Theta} \bx_{jl}^{(n)} - \hat{\bB} \bx_{jl}^{(n)} \|_2^2 + 2\langle \bY^{(n)}_{jl} - \hat{\bB}^*_{\Theta} \bx_{jl}^{(n)} , \hat{\bB}^*_{\Theta} \bx_{jl}^{(n)} - \hat{\bB} \bx_{jl}^{(n)} \rangle) \\
    & \leq \frac{\delta_B}{N} (\mathtt{x}_{\sup}^2 + 2 \| \bB_\Theta^* \|_F \mathtt{x}_{\sup}^2 + 2 \mathtt{y}_{\sup}).
    \end{align*}

    Denote $e'_r =  \hat{r}^{*2}_{jl, \Theta} - \frac{1}{N} \sum_{n=1}^N \bbE_{\bx_{jl}^{(n)}|\bY^{(n)},\Theta}  \| \bY^{(n)}_{jl} - \hat{\bB}^*_{\Theta} \bx_{jl}^{(n)} \|_2^2$, we have
    \begin{align*}
        \bbE(e_r') &= 0, \\
        {\rm Var}(e_r') &\leq \frac{1}{N} \mathtt{y}_{\sup}^4.
    \end{align*}

    Therefore, using Chebyshev's inequality, with probability $\varrho_6$, we have
    \begin{equation*}
        P\left(e_r'>\sqrt{\frac{\mathtt{y}^4_{\sup}}{N\varrho_6}}\right) < \varrho_6.
    \end{equation*}
    and with $1-\varrho_6$, we have
    \begin{equation*}
        |e_r|+|e_r'| \leq \frac{\delta_B}{N} (\mathtt{x}_{\sup}^2 + 2 \| \bB_\Theta^* \|_F \mathtt{x}_{\sup}^2 + 2 \mathtt{y}_{\sup}) + \sqrt{\frac{\mathtt{y}^4_{\sup}}{N\varrho_6}}. 
    \end{equation*}

    Thus, 
    \begin{equation*}
        (\hat{r}_{jl}^2 - \hat{r}^{*2}_{jl, \Theta})^2 \leq\frac{\mathtt{y}^4_{\sup}}{N\varrho_6} + O(\frac{1}{N^2}) 
    \end{equation*}

    Taking uniform control of all $j,l$ that, with probability $1-M\varrho_6$ we have
    \begin{equation*}
        \| \hat{\br}^2 - \hat{\br}_{jl,\Theta}^{2*} \|_2^2 \leq \frac{M\mathtt{y}^4_{\sup}}{N\varrho_6} + O(\frac{1}{N^2})
    \end{equation*}
\end{proof}

\begin{lemma}\label{lem:bound-w}
    Denote the $\hat{\omega}_0^{*2}$ is the variance determined by $\bC_{\Theta}^*$ and $\hat{\omega}_0^2$ is the variance determined by $\hat{\bC}$ from Eq. (\ref{equ:m-step-c3}). Under Lemma \ref{lem:error-bound}, we have
    \begin{equation*}
        \hat{\omega}_0^{*2} - \hat{\omega}_0^2 \leq \sqrt{ \frac{\mathtt{d}_{\sup}^4 \mathtt{x}_{\sup}^4}{\varrho_1 N}} + \lambda(2\delta_1 + 1)  \mathtt{c}_{\sup}
    \end{equation*}
    with probability $1$.
\end{lemma}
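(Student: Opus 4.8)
The plan is to read off from the $M$-step formula (\ref{equ:m-step-c3}) that the updated noise scale is nothing but the fitting loss $G_n$ rescaled: since (\ref{equ:m-step-c3}) is exactly $\tfrac1M\,G_n(\cdot,\Theta')$ evaluated at the plugged-in transition matrix, we have $\hat\omega_0^2=\tfrac1M\,G_n(\hat\bC,\Theta')$ and $\hat\omega_0^{*2}=\tfrac1M\,G_n(\bC^*_\Theta,\Theta')$. Hence $\hat\omega_0^{*2}-\hat\omega_0^2=\tfrac1M\big(G_n(\bC^*_\Theta,\Theta')-G_n(\hat\bC,\Theta')\big)$, and because $M=\sum_j L_jK_j\ge 1$ and the target bound is nonnegative, it suffices to show $G_n(\bC^*_\Theta,\Theta')-G_n(\hat\bC,\Theta')\le \sqrt{\mathtt{d}_{\sup}^4\mathtt{x}_{\sup}^4/(\varrho_1 N)}+\lambda(2\delta_1+1)\mathtt{c}_{\sup}$; when the difference is negative the claim holds trivially.

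To prove this bound I would work on the high-probability event already produced in the proof of Lemma~\ref{lem:error-bound}, on which $\hat\pi\in\Pi_0^*$. This is what the ``probability $1$'' in the statement means: once we condition on those events, the remaining argument is deterministic and adds no new failure probability (consistent with the fact that $\delta/S$ in Theorem~\ref{the:convergence} carries no extra $\varrho$-term for $\omega_0^2$). On this event every order in $\Pi_0^*$ minimizes the population loss $G(\cdot,\Theta)$, so the population gap $G(\bC^*_\Theta,\Theta)-G(\bC^*_\Theta(\hat\pi),\Theta)$ vanishes and $\bC^*_\Theta$ and $\bC^*_\Theta(\hat\pi)$ share the same $G$-value and the same $l_1/F$ bound $\mathtt{c}_{\sup}$. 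I would then reuse the exact expansion behind (\ref{equ:main-theorem2.1}): writing $\bDelta=\hat\bC-\bC^*_\Theta(\hat\pi)$ and expanding the quadratic loss gives $G_n(\bC^*_\Theta(\hat\pi),\Theta')-G_n(\hat\bC,\Theta')=(II)-\tfrac1N\bbE_{\bX|\calY;\Theta}\|\bX\bDelta\|_F^2$, where the nonnegative quadratic term is discarded for an upper bound and the cross term $(II)$ is controlled by the null-consistency / group-lasso estimate (\ref{equ:boundII}) together with the Chebyshev deviation bound (\ref{equ:bound1-chebyshev}) on the term $(I)$. Collecting these is precisely the aggregation that already yielded the numerator $\sqrt{\mathtt{d}_{\sup}^4\mathtt{x}_{\sup}^4/(\varrho_1 N)}+\lambda(2\delta_1+1)\mathtt{c}_{\sup}$ bounding $\tfrac1N\bbE\|\bX\bDelta\|_F^2$ in Lemma~\ref{lem:error-bound}(2); dividing by $M\ge1$ finishes the proof.

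The main obstacle I anticipate is bookkeeping of signs and of the two reference matrices. The quantity $\hat\omega_0^{*2}$ is defined through the global restricted minimizer $\bC^*_\Theta$, whereas the inequality (\ref{equ:main-theorem2.1}) and the cross-term estimate (\ref{equ:boundII}) are written relative to $\bC^*_\Theta(\hat\pi)$, so I must carefully justify on $\{\hat\pi\in\Pi_0^*\}$ that these may be interchanged so that the quadratic term enters with the discardable (negative) sign and the penalty contributions $2\delta_1\lambda\|\bDelta\|_{l_1/F}+\lambda\|\bC^*_\Theta(\hat\pi)\|_{l_1/F}$ aggregate, via the triangle inequality and $\|\bC^*_\Theta(\pi)\|_{l_1/F}\le\mathtt{c}_{\sup}$, to exactly $\lambda(2\delta_1+1)\mathtt{c}_{\sup}$. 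A secondary subtlety is the coefficient of the stochastic term: unlike Lemma~\ref{lem:error-bound} (where both $\bC^*_\Theta$ and $\bC^*_\Theta(\hat\pi)$ incur a Chebyshev fluctuation, giving a factor $2$), here the identification $\bC^*_\Theta=\bC^*_\Theta(\hat\pi)$ on the event leaves only a single deviation term, which is why the bound reads $\sqrt{\mathtt{d}_{\sup}^4\mathtt{x}_{\sup}^4/(\varrho_1 N)}$ with coefficient one rather than two.
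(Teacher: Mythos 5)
Your proposal is correct in substance and is actually more coherent than the paper's own proof, which takes a different (and defective) shortcut. The paper's proof simply writes $\hat{\omega}_0^2=\frac{1}{NM}\sum_{n}\bbE_{\bx^{(n)}|\bY^{(n)}}\|\bx^{(n)}-\bx^{(n)}\hat{\bC}\|_2^2$ and asserts this is at most $2\sqrt{\mathtt{d}_{\sup}^4\mathtt{x}_{\sup}^4/(\varrho_1 N)}+\lambda(2\delta_1+1)\mathtt{c}_{\sup}$: it bounds $\hat{\omega}_0^2$ itself rather than the difference $\hat{\omega}_0^{*2}-\hat{\omega}_0^2$, with a factor $2$ that contradicts the lemma statement; taken literally this cannot be right, since the fitted residual variance converges to $\omega_0^{2*}>0$ rather than to $0$. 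Your route --- identify $\hat{\omega}_0^2=G_n(\hat{\bC},\Theta')/M$ and $\hat{\omega}_0^{*2}=G_n(\bC^*_{\Theta},\Theta')/M$ from Eq. (\ref{equ:m-step-c3}), reduce to bounding $G_n(\bC^*_{\Theta},\Theta')-G_n(\hat{\bC},\Theta')$, restrict to the event of Lemma \ref{lem:error-bound} on which $\hat{\pi}\in\Pi_0^*$ so that $\bC^*_{\Theta}(\hat{\pi})=\bC^*_{\Theta}$, expand the difference as $(II)-\frac1N\bbE_{\bX|\calY;\Theta}\|\bX\bDelta\|_F^2$, drop the negative quadratic, and control $(II)$ by Eq. (\ref{equ:boundII}) --- is the reconstruction that actually targets the stated inequality, and your reading of ``probability $1$'' as ``no failure probability beyond the events already produced in Lemma \ref{lem:error-bound}'' is the only one consistent with the accounting of $\delta/S$ in Theorem \ref{the:convergence}.

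Two soft spots remain, neither fatal by the paper's own standard of rigor. First, the constant: writing $D\doteq G_n(\bC^*_{\Theta},\Theta')-G_n(\hat{\bC},\Theta')$, your steps give $D\le 2\delta_1\lambda\|\bDelta\|_{l_1/F}$, but $\|\bDelta\|_{l_1/F}$ involves $\|\hat{\bC}\|_{l_1/F}$, which is \emph{not} covered by $\mathtt{c}_{\sup}$ (that constant only bounds restricted minimizers $\bC^*_{\Theta}(\pi)$). Eliminating $\|\hat{\bC}\|_{l_1/F}$ via the penalized optimality of $\hat{\bC}$ makes the inequality self-referential, $D\le 2\delta_1 D+4\delta_1\lambda\mathtt{c}_{\sup}$, hence $D\le\frac{4\delta_1}{1-2\delta_1}\lambda\mathtt{c}_{\sup}$, which exceeds $(2\delta_1+1)\lambda\mathtt{c}_{\sup}$ whenever $\delta_1>(\sqrt{2}-1)/2$. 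So the penalty terms do not aggregate to ``exactly'' $\lambda(2\delta_1+1)\mathtt{c}_{\sup}$; this mirrors the paper's own looseness (it likewise drops a $(1-\delta_1)$ factor in the proof of Lemma \ref{lem:error-bound}), but it should be flagged rather than claimed. Second, your explanation of the coefficient one on the $\sqrt{\cdot}$ term is off: once $\bC^*_{\Theta}(\hat{\pi})=\bC^*_{\Theta}$ as matrices, the term $(I)$ vanishes identically, so \emph{no} Chebyshev deviation term is needed at all --- an upper bound may of course carry a superfluous nonnegative term, but ``one deviation instead of two'' is not where that constant comes from.
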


\begin{proof}
    From Eq. (\ref{equ:m-step-c3}), we have
    \begin{align*}
        \omega_0^2 &= \frac{1}{NM} \sum_{n=1}^N \bbE_{\bx^{(n)}|\bY^{(n)}}\|\bx^{(n)} - \bx^{(n)} \hat{\bC}\|_2^2 \\
        & \leq 2 \sqrt{ \frac{\mathtt{d}_{\sup}^4 \mathtt{x}_{\sup}^4}{\varrho_1 N}} + \lambda(2\delta_1 + 1)  \mathtt{c}_{\sup}
    \end{align*}
\end{proof}

\subsection{Conditions to ensure the convergence of EM algorithm} \label{app:converge}
To utilize the theorem proposed by \citet{wang2015high} and \citet{balakrishnan2017statistical}, we denote $Q$ as the population analog of $Q_n$. Condition \ref{ass:concavity-smoothness} and \ref{ass:Lipschitz-Gradient} are common conditions to satisfy the convergence of EM algorithm. 
\begin{equation*}
       \begin{split}
       & Q(\Theta;\Theta') = \mathbb{E}_\bY \mathbb{E}_{\bx|\bY;\Theta'} \log f(\bx,\bY;\Theta) \\
       & = \quad \int p(\mathbf{Y};\Theta^*) \int p(\bx|\mathbf{Y};\Theta') \log f(\bx,\mathbf{Y};\Theta) {\rm d}\bx{\rm d}\mathbf{Y}.
       \end{split}
\end{equation*}

\begin{condition} 
    [\textbf{Concavity-Smoothness}]\label{ass:concavity-smoothness}
    For any $\Theta_1,\Theta_2 \in \mathbb{B}_2(\Theta^*, \tr)$, $Q(\cdot\ ;\Theta^*)$ is $\alpha$-smooth, i.e., denote the $\theta_1, \theta_2$ are the vector form of parameter set $\Theta_1, \Theta_2$, we have
    \begin{equation*}
        Q(\Theta_1,\Theta^*) \geq Q(\Theta_2,\Theta^*) + (\theta_1-\theta_2)^T \triangledown Q(\Theta_2;\Theta^*) - \frac{\alpha}{2}\|\theta_2-\theta_1\|_2,
    \end{equation*}
    
    and $\beta$-strongly concave, i.e.,
    \begin{equation*}
        Q(\Theta_1,\Theta^*) \leq Q(\Theta_2,\Theta^*) + (\theta_1-\theta_2)^T \triangledown Q(\Theta_2;\Theta^*) - \frac{\beta}{2}\|\theta_2-\theta_1\|_2.
    \end{equation*}
\end{condition}

\begin{condition}[\textbf{Lipschitz-Gradient}] \label{ass:Lipschitz-Gradient}
    For the true parameter $\Theta^*$ and any $\Theta \in \mathbb{B}_2(\Theta^*,r)$, denote $\theta, \theta^*$ are the vector form of parameter set $\Theta, \Theta^*$, we have:
    \begin{equation}
        \|\triangledown Q(M(\Theta);\Theta^*) - \triangledown Q(M(\Theta);\Theta)\|_2 \leq \gamma \| \theta - \theta^* \|_2
    \end{equation}
\end{condition}

\subsection{Computing Expectation} \label{app:Expectation}
\subsubsection{Forward filtering}
When using forward filtering in DAG, we need to know source of the noise, this process is implement by the matrix $\bG$ and $\bH$, which record the coefficients of the noise from Eq. (\ref{equ:matrix-bilinear-trans}) and Eq. (\ref{equ:Y-likelihood}).

We, denote: 
\begin{itemize}
    \item $\bX:$ $\bX=[\bX_1,\ldots,\bX_P]$ with size $N \times \sum{L_jK_j}$, which is the distribution of $\bX$ before forward filtering.
    \item $\Tilde{\bX}:$ $\tilde{\bX}=[\tilde{\bX}_1,\ldots,\tilde{\bX}_P]$ with size $N \times \sum{L_jK_j}$, which is the distribution of $\bX$ after forward filtering.
    \item $\hat{\bX}:$ $\hat{\bX}=[\hat{\bX}_1,\ldots,\hat{\bX}_P]$ with size $N \times \sum{L_jK_j}$, which is the distribution of $\bX$ after backward smoothing.
    \item $\bxi:$ $\bxi=[\bxi_1,\ldots,\bxi_P]$ with size $N \times \sum{L_jK_j}$, which is the noise from Eq. (\ref{equ:matrix-bilinear-trans}).
    \item $\bvarepsilon:$ $\bvarepsilon=[\varepsilon_{11}(t_1),\varepsilon_{11}(t_2),...,\varepsilon_{PL_p}(t_{T})]$ with size $N \times \sum{L_jT}$, which is the noise from Eq. (\ref{equ:Y-likelihood}).
    \item $\bG:$ Coefficient of noise (from Eq. (\ref{equ:matrix-bilinear-trans})) with size $\sum{L_jK_j} \times \sum{L_jK_j}$.
    \item $\bH:$ Coefficient of noise (from Eq. (\ref{equ:Y-likelihood})) with size $\sum{L_jK_j} \times \sum{L_jT}$.
    \item $\Tilde{\bG}:$ Posterior coefficient of noise (from Eq. (\ref{equ:matrix-bilinear-trans})) with size $\sum{L_jK_j} \times \sum{L_jK_j}$. 
    \item $\Tilde{\bH}:$ Posterior coefficient of noise (from Eq. (\ref{equ:Y-likelihood})) with size $\sum{L_jK_j} \times \sum{L_jT}$. 
    \item $\hat{\bG}:$ Coefficient of noise after backward smoothing (from Eq. (\ref{equ:matrix-bilinear-trans})), with size $\sum{L_jK_j} \times \sum{L_jK_j}$. \item $\hat{\bH}:$ Coefficient of noise after backward smoothing (from Eq. (\ref{equ:Y-likelihood})) with size $\sum{L_jK_j} \times \sum{L_jT}$.
\end{itemize}

Then $\bX$, $\tilde{\bX}$, $\hat{\bX}$ have following representation:
\begin{align*}
    \bx^{(n)} &= \bu^{(n)} + \bG \bxi^{(n)} + \bH \bvarepsilon^{(n)} \\
    \tilde{\bx}^{(n)} &= \tilde{\bu}^{(n)} + \tilde{\bG} \bxi^{(n)} + \tilde{\bH} \bvarepsilon^{(n)} \\
    \hat{\bx}^{(n)} &= \hbu^{(n)} + \hat{\bG} \bxi^{(n)} + \hat{\bH} \bvarepsilon^{(n)} \\
\end{align*}
where $\hbu$, $\tilde{\bu}$ and $\hbu$ represent the mean of $\bx$, $\tilde{\bx}$ and $\hat{\bx}$.


Update for prior:
\begin{align*}
    \bx^{(n)}_{j} &= \tilde{\bx}^{(n)}\bC_j + \bvarepsilon^{(n)}_j \\
    &= \sum_{k\in pa_j} \bC_{kj}^T \tilde{\bu}_k + \sum_{k\in pa_j} \bC_{kj}^T \Tilde{\bG}_k \bxi^{(n)} \\
    & \quad + \sum_{k\in pa_j} \bC_{kj}^T \Tilde{\bH}_k \bvarepsilon^{(n)} + \bvarepsilon^{(n)}_j
\end{align*}
Therefore, $\bx^{(n)}_{j} \sim \mathcal{N}(\bu_j^{(n)}, \bSigma_j)$, where:
\begin{align*}
    \bu_j^{(n)} &=  \sum_{k\in pa_j} \bC_{kj}^T \tilde{\bu}_k^{(n)} \\ 
    \bG_j &= \sum_{k\in pa_j} \bC_{kj}^T \Tilde{\bG}_k + \bI_\bG(j)\\
    \bH_j &= \sum_{k\in pa_j} \bC_{kj}^T \Tilde{\bH}_k \\
    \bSigma_j &= \omega_0^2 \bG_j \bG_j^T + \bH_j {\rm diag}(\br) \bH_j^T \\
\end{align*}
where $\bI_\bG(j)$ is a $\sum{L_jK_j} \times \sum{L_jK_j}$ matrix with the identity matrix in the submatrix corresponding to node $j$, $\bI_\bG(j)_{jj} = \bI_{L_j K_j \times L_j K_j}$.

Update for posterior:
We estimated the posterior distribution of $\bx$ in $n$-th sample,
\begin{align*}
     \bY^{(n)}_{jl} &= \bB_j\bx^{(n)}_{jl} + \bvarepsilon_{jl}^{(n)} \\
    &=  \bB_j(\bu^{(n)}_{jl} + \bG_{jl} \bxi^{(n)} + \bH_{jl} \bvarepsilon^{(n)}) + \bvarepsilon_{jl}^{(n)}
\end{align*}
Therefore, $\bY^{(n)}_{jl} \sim \mathcal{N}(\bB_j\bu^{(n)}_{jl}, \bB_j \bSigma_{jl} \bB_j^T + r_{jl}^2 \bI_T)$, where:
And we have:
\begin{equation*}
\begin{split}
    \left( 
    \begin{array}{c}
        \bx^{(n)}_{jl} \\
        \bY^{(n)}_{jl}
    \end{array}
    \right)
    \sim 
    \mathcal{N} 
    \left( \left(
    \begin{array}{c}
        \hbu_{jl}^{(n)} \\
        \bB_j\bu_{jl}^{(n)}
    \end{array}
    \right) 
    ,
    \left(
    \begin{array}{cc}
        \bSigma_{jl} & \bSigma_{jl} \bB_j^T \\
        \bB_j \bSigma_{jl}  & \bB_j \bSigma_{jl} \bB_j^T + r_{jl}^2 \bI_T
    \end{array}
    \right)
    \right)
\end{split}
\end{equation*}

The posterior $ \bx_{jl} | \bY_{jl} \sim \mathcal{N}(\tilde{\bu}_{jl}, \Tilde{\bSigma}_{jl})$, where
\begin{align*}
    \tilde{\bu}_{jl}^{(n)} &= \bu_{jl}^{(n)} + \bSigma_{jl} \bB_j^T (\bB_j \bSigma_{jl} \bB_j^T + r_{jl}^2 \bI_T) ^{-1} (\mathbf{\bY}_{jl} - \bB_j \bu_{jl}^{(n)}) \\ 
    \Tilde{\bG}_{jl} &=  \bG_{jl} - \bSigma_{jl} \bB_j^T (\bB_j \bSigma_{jl} \bB_j^T + r_{jl}^2 \bI_T)^{-1} \bB \bG_{jl}  \\
    \Tilde{\bH}_{jl} &=  \bH_{jl} - \bSigma_{jl} \bB_j^T (\bB_j \bSigma_{jl} \bB_j^T + r_{jl}^2 \bI_T) ^{-1} (\bB \bH_{jl} +  \bI_{\bH}(j,l)) 
\end{align*}
where $\bI_{\bH}(j,l)$ is a $\sum L_jT \times \sum L_j T$ matrix with the identity matrix $\bI_T$ in the submatrix corresponding to the $l$-th function in node $j$, $\bI_{\bH}(j,l)_{jl,jl} = \bI_T$.
\subsubsection{Backward smoothing}

For $k$ and the descendants $j$,  we derive the covariance of nodes $j, k$:
\begin{equation*}
    \tilde{\bSigma}_{j,k} = \omega_0^2 \Tilde{\bG}_j \Tilde{\bG}_k^T + \Tilde{\bH}_j D(\br) \Tilde{\bH}_k^T 
\end{equation*}
\begin{equation}
\small
\begin{split}
    \left( 
    \begin{array}{c}
        \Tilde{\bx}_{k}^{(n)} \\
        \Tilde{\bx}_{de(k)}^{(n)}
    \end{array}
    \right) 
    \sim 
    \mathcal{N} 
    \left( \left(
    \begin{array}{c}
        \tilde{\bu}_{k}^{(n)} \\
        \tilde{\bu}_{de(k)}^{(n)}
    \end{array}
    \right) 
    ,
    \left(
    \begin{array}{cc}
        \tilde{\bSigma}_{k} &  \tilde{\bSigma}_{k,de(k)} \\
        \tilde{\bSigma}_{k,de(k)}^T & \tilde{\bSigma}_{de(k)}
    \end{array}
    \right)
    \right)
\end{split}
\end{equation}

Derive $p(\tilde{\bx}_k|\tilde{\bx}_{de(k)}, \bY)$:  
\begin{align*}
    \hbu_{k}^{(n)} &= \tilde{\bu}_k^{(n)} + \tilde{\bSigma}_{k,de(k)} \bSigma_{de(k)}^{-1}(\hbu_{de(k)}^{(n)} - \tilde{\bu}_{de(k)}^{(n)}) \\
    \hat{\bG}_k &= \Tilde{\bG}_k - \tilde{\bSigma}_{k,de(k)} \bSigma_{de(k)}^{-1} (\tilde{\bG}_{de(k)} - \hat{\bG}_{de(k)}) \\
    \hat{\bH}_k &= \Tilde{\bH}_k - \tilde{\bSigma}_{k,de(k)} \bSigma_{de(k)}^{-1} (\tilde{\bH}_{de(k)} - \hat{\bH}_{de(k)})
\end{align*}

Finally, posterior mean of $\bx$ is $\hbu$ and the posterior variance is:
\begin{equation*}
    \hat{\bSigma} =  \omega_0^2 \hat{\bG} \hat{\bG}^T + \hat{\bH} {\rm diag}(\br) \hat{\bH}^T 
\end{equation*}

\end{document}